\newtheorem{definition}{Definition}
\newtheorem{theorem}{Theorem}[section]
\newtheorem{lemma}[theorem]{Lemma}
\newtheorem{proposition}[theorem]{Proposition}
\newtheorem{corollary}[theorem]{Corollary}
\newcommand{\field}[1]{\mathbb{#1}}
\newcommand{\mc}[1]{\mathcal{#1}}
\newcommand{\ul}[1]{\underline{#1}}
\newcommand{\cat}[1]{\mathbf{#1}}
\newcommand{\Set}[0]{\cat{Set}}
\newcommand{\colim}[0]{{\rm colim \; }}
\title{Bohrification of local nets of observables}
\author{Joost Nuiten} 
\begin{document}

\maketitle

\begin{abstract}
Recent results by Spitters et. al. suggest that quantum phase space can usefully be regarded as a ringed topos via a process called Bohrification. They show that quantum kinematics can then be interpreted as classical kinematics, internal to this ringed topos.

We extend these ideas from quantum mechanics to algebraic quantum field theory: from a net of observables we construct a presheaf of quantum phase spaces. We can then naturally express the causal locality of the net as a descent condition on the corresponding presheaf of ringed toposes: we show that the net of observables is local, precisely when the presheaf of ringed toposes satisfies descent by a local geometric morphism.
\end{abstract}

\tableofcontents

\section{Introduction}
At the beginning of last century, quantum physics has introduced a completely new way of describing physical systems that differs significantly from the approach people took in classical physics. Instead of viewing observables as functions on some manifold, observables were described as operators acting on a Hilbert space of states. On one hand, this shift in perspective has provided us with a description of an overwhelming amount of new physics, but on the other hand, the new theory turned out to be hard to interpret and seemed `strange' in many ways. To get a grasp on the underlying structure of quantum mechanics, some mathematicians tried to formulate a logical framework in which quantum mechanics could be neatly formulated (cf. \cite{bn36}). Others were looking for certain parallels between classical and quantum physics, trying to put both theories on a common footing.

A decade ago, it was suggested by Butterfield and Isham \cite{bi98} that topos theory could provide a better framework for formulating something like quantum logic. In particular, they noticed that while a quantum phase space does not exist as an ordinary topological space, it does exist as a topos with an internal `space', or locale. This perspective allowed them to give a geometric formulation of the Kochen-Specker theorem, which then precisely stated that this internal phase space had no global points. 

Inspired by this, Spitters et al. \cite{hls09} described a quantum phase space as a topos with an internal locale, using a procedure they called Bohrification. Equivalently, one may view this phase space as a ringed topos, or even a topos with an internal C*-algebra. This description of a phase space fits in the modern approach to geometry, which traces back to Grothendieck who noticed that a geometric space could be formalized by a (locally) ringed topos. Presently, the most advanced theory of general geometry (as discussed in \cite{lu09}) all revolves around regarding locally ringed (higher) toposes as generalized spaces. 

The identification of a quantum phase space with a ringed topos allows the authors of \cite{hls09} to give a clear topos-theoretic picture of quantum kinematics: they show that the quantum kinematics of a system is precisely classical kinematics, internal to this ringed topos. This statement is made precise in section \ref{sec:kin}, which summarizes their results. 

Here we will study the generalization of these constructions from quantum
mechanics to quantum field theory and from plain quantum kinematics to
quantum dynamics: using the formalism of AQFT to describe a quantum field theory on a spacetime $X$
in terms of a copresheaf of algebras of observables, we construct a presheaf of ringed toposes on the opens of $X$, called a Bohrified net on the spacetime $X$.  

Our main result is then a characterization of causal locality of quantum
field theory in these terms: we show that this Bohrified net of
toposes on spacetime satisfies -- over any spatial hyperslice --
descent by local geometric surjections precisely if it comes from a
causally local net. As a direct application we can characterize the
locality of field theories with boundaries by the local descent of the
presheaves of toposes of their chiral boundary field theories.

This result might add to the insights by Isham-D\"oring \cite{id07} and Spitters et. al. that many aspects of quantum theory have a natural formulation when one models quantum phase space as a ringed topos. In particular, it suggests that the ideas by these two groups could be successful when applied to quantum field theory.

\medskip
\noindent{\bf Acknowledgment}\\
\noindent 
I am extremely grateful to Urs Schreiber for all his support during the last months. His enthusiasm and his continuous flow of ideas have been the driving force behind this project. Only the excessive amount of time he took to talk to me and to review my drafts, made this thesis possible. 

I am grateful to Igor Khavkine for useful discussions about the foundations of quantum field theory. I thank Bas Spitters for explaining the subtleties of his work on Bohrification. Finally, I thank Benno van den Berg and Sander Wolters for their comments on an earlier version of this text.

The key idea of this thesis, that causal locality may be encoded
by a sheaf-like property of Bohrified nets, originated in the discussion Urs
had with Bas Spitters during our short visit in Nijmegen.

\section{Bohrification of a C*-algebra}\label{sec:bohr}
The process of Bohrification gives a way to assign a space to a noncommutative C*-algebra (we assume all C*-algebras to be unital). It is inspired by the observation that a lot of information about a C*-algebra is contained in its commutative subalgebras. Indeed, in two major theorems about the structure of quantum mechanics, commutative subalgebras play a main role. The first theorem, by Kochen and Specker, states that there is no way to assign numbers to observables in such a way that it preserves all the structure there is on each of the commutative subalgebras.
\begin{definition}
A KS-map $A\rightarrow\field{C}$ is a map from $A$ to the complex numbers with the property that it is a *-homomorphism when restricted to a commutative subalgebra of $A$.
\end{definition}
\begin{proposition}[Kochen-Specker, \cite{ks67}]\label{prop:ks}
No KS-maps exist on $B(H)$ if ${\rm dim}(H)>2$.
\end{proposition}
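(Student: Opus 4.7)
The plan is to reduce the statement to a combinatorial non-colorability result in the three-dimensional case, where the classical Kochen--Specker construction can be invoked.

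First I would observe two elementary consequences of the definition. Since a $*$-homomorphism sends projections to projections and the only projections in $\field{C}$ are $0$ and $1$, any KS-map $V$ restricts to a map $v\colon \mathrm{Proj}(B(H))\to\{0,1\}$. Moreover, any finite family of pairwise orthogonal projections $p_1,\ldots,p_n$ with $\sum_i p_i=1$ is contained in a single commutative $*$-subalgebra of $B(H)$, so $V$ acts as a unital $*$-homomorphism on their span and we obtain the key constraint
\[
 v(p_1)+\cdots+v(p_n)=1,
\]
i.e.\ exactly one of the $v(p_i)$ equals $1$.

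Next I would reduce to a three-dimensional subspace $H_0\subset H$. Writing $P$ for the projection onto $H_0$, the map $T\mapsto V(T\oplus 0)$ inherits the KS property on $B(H_0)$, provided $V(P)=1$. To arrange this, pick any orthonormal basis of $H$: the previous step forces a unique basis vector $e$ with $v(|e\rangle\langle e|)=1$, and then one chooses $H_0$ to contain $e$, so that $V(P)\geq v(|e\rangle\langle e|)=1$. Combined with the constraint for any orthonormal triple $\{e_1,e_2,e_3\}$ of $H_0$, extended to a basis of $H$ and reduced using $v(|e_1\rangle\langle e_1|)+v(|e_2\rangle\langle e_2|)+v(|e_3\rangle\langle e_3|)=V(P)=1$, this yields a coloring of the unit vectors of $H_0$ in which each orthonormal basis contains exactly one vector colored $1$.

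Finally, I would invoke the Kochen--Specker combinatorial argument: exhibit an explicit finite configuration of unit vectors in $\field{C}^3$ (say the original 117-vector set, or one of Peres' or Conway--Kochen's smaller variants), organized into overlapping orthonormal bases whose coloring constraints have no simultaneous solution. Running through the finitely many cases yields a contradiction, completing the proof. The main obstacle is precisely this last step: although the reductions are formal, the non-colorability of the three-dimensional projective geometry is not a soft statement and requires the concrete, finite combinatorial construction that is the core content of \cite{ks67}.
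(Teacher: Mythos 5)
The paper does not actually prove Proposition \ref{prop:ks}; it is quoted with a citation to \cite{ks67}, so there is no internal proof to compare against. Your argument is the standard reduction of the operator-algebraic statement to the combinatorial non-colorability of a finite vector configuration in dimension three, and for \emph{finite-dimensional} $H$ with $\dim H\geq 3$ it is essentially complete (granting, as the paper implicitly does, that $*$-homomorphisms are unital, so that $v(1)=1$ and the zero map is excluded): the constraint $\sum_i v(p_i)=v(1)=1$ for a finite orthogonal resolution of the identity, the passage to a three-dimensional $H_0$ with $V(P)=1$, and the restriction $T\mapsto V(T\oplus 0)$ are all sound, and the final appeal to the $117$-vector (or Peres/Conway--Kochen) configuration is exactly the irreducible combinatorial core.

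There is, however, a genuine gap in the step ``pick any orthonormal basis of $H$: the previous step forces a unique basis vector $e$ with $v(|e\rangle\langle e|)=1$.'' The proposition allows $H$ to be infinite-dimensional (and this is the relevant case for the AQFT applications in the paper). For an infinite orthonormal basis the identity $1=\sum_i p_i$ holds only in the strong topology, not in norm, so the $p_i$ together with $1$ do \emph{not} lie in a commutative C*-subalgebra in which $1$ is a norm-convergent sum of the $p_i$; your key constraint, which you correctly derived only for \emph{finite} orthogonal families, therefore does not yield the existence of a basis vector with value $1$, and a priori every finite-rank projection could be assigned $0$. The reduction to a three-dimensional subspace with $V(P)=1$ is thus unjustified when $\dim H=\infty$. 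A clean repair is to bypass the search for such a subspace altogether: write $H\cong K\otimes\field{C}^3$ (possible whenever $\dim H$ is infinite or divisible by $3$) and restrict $V$ to the unital subalgebra $1_K\otimes M_3(\field{C})$, which turns $V$ into a unital KS-map on $M_3(\field{C})$ and reduces to the finite-dimensional case you already handle; the remaining finite dimensions not divisible by $3$ are covered by your orthonormal-basis argument, which is valid there.
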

One thus finds that it is already a strong requirement to demand a map $A\rightarrow\field{C}$ to be a *-homomorphism, just on each of the commutative subalgebras of $A$. So apparently the structure of a map on $A$ is, to some extend, contained in its restrictions to the commutative subalgebras. This idea is supported even more by the following theorem by Gleason. 

\begin{definition}
A state on a C*-algebra $A$ is a linear functional $\rho: A\rightarrow\field{C}$ that is positive ($\rho(a^*a)\geq 0$) and normalized ($\rho(1)=1$).
\end{definition}

\begin{proposition}[Gleason, \cite{gl57}]
Let $B(H)$ be the C*-algebra of bounded operators on a Hilbert space $H$ with dim$(H)>2$, and let $\rho: B(H)\rightarrow\field{C}$ be a map with the properties that its restriction to each commutative subalgebra is a state and that $\rho(a+ib)=\rho(a)+i\rho(b)$ for all self-adjoint elements $a,b$. Then $\rho$ is in fact a state on the whole of $B(H)$.
\end{proposition}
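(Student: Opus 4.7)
The plan is to reduce the claim to the original Gleason theorem on frame functions over the projection lattice of $B(H)$, then to extend the resulting density-operator description from projections to self-adjoints via the spectral theorem, and finally to all of $B(H)$ using the given hypothesis on complex combinations.

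First I would restrict $\rho$ to the projection lattice. Each projection $P$ generates together with the identity a commutative C*-subalgebra, on which $\rho$ is by hypothesis a state; hence $\rho(P) \in [0,1]$ and $\rho(1) = 1$. If $P, Q$ are orthogonal projections, then $P, Q, P+Q$ all lie inside a common commutative subalgebra, so linearity of the restricted state forces $\rho(P+Q) = \rho(P) + \rho(Q)$. Thus $\rho|_{\mathrm{Proj}(H)}$ is a finitely additive probability measure, i.e.\ a frame function. Since $\dim(H) > 2$, the classical Gleason theorem \cite{gl57} now supplies a positive trace-class operator $\sigma$ with $\mathrm{tr}(\sigma) = 1$ such that $\rho(P) = \mathrm{tr}(\sigma P)$ for every projection.

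Set $\tilde\rho(a) := \mathrm{tr}(\sigma a)$; this is a state on all of $B(H)$, and it suffices to verify $\rho = \tilde\rho$. For self-adjoint $a$, the spectral theorem expresses $a$ as a norm-limit of real linear combinations of mutually orthogonal spectral projections of $a$, all of which sit in the commutative C*-algebra generated by $a$. There $\rho$ is by assumption a continuous linear state, agreeing with $\tilde\rho$ on each projection, so $\rho(a) = \tilde\rho(a)$. For a general $a \in B(H)$ decompose $a = a_1 + i a_2$ into self-adjoint parts: the hypothesis $\rho(a_1 + i a_2) = \rho(a_1) + i\rho(a_2)$ together with linearity of $\tilde\rho$ extends the equality from self-adjoints to all of $B(H)$, which simultaneously delivers the linearity and positivity of $\rho$.

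The main obstacle is the classical Gleason theorem itself, a deep result about frame functions on Hilbert space. Since the statement already cites \cite{gl57}, I would use it as a black box; the genuinely original content of the argument lies only in packaging the quasi-state hypothesis into a frame function and then, via spectral calculus together with the $\rho(a_1 + ia_2) = \rho(a_1) + i\rho(a_2)$ condition, repackaging Gleason's density operator back into a state on the full C*-algebra. The dimensional restriction $\dim(H) > 2$ enters the argument exactly at the appeal to Gleason.
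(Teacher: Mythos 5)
The paper does not actually prove this proposition --- it is quoted as a known result with a bare citation to \cite{gl57} --- so your proposal has to be judged on its own. The reduction to the projection lattice and the final extension via $\rho(a+ib)=\rho(a)+i\rho(b)$ are the standard route, but there is a genuine gap at the appeal to Gleason's theorem, at least when $H$ is infinite-dimensional. What the quasi-state hypothesis gives you on $\mathrm{Proj}(H)$ is only \emph{finite} additivity over orthogonal projections, whereas Gleason's theorem in \cite{gl57} concerns frame functions, i.e.\ maps satisfying $\sum_i f(P_{e_i})=1$ over \emph{every} orthonormal basis --- a complete-additivity condition. These are not equivalent in infinite dimensions: pull back any state of the Calkin algebra $B(H)/K(H)$ to a singular state $\omega$ of $B(H)$. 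This $\omega$ satisfies every hypothesis of the proposition, yet $\omega(P_{e_i})=0$ for each rank-one projection while $\omega(1)=1$, so no trace-class $\sigma$ with $\omega(P)=\mathrm{tr}(\sigma P)$ can exist. Your intermediate claim that Gleason ``supplies a positive trace-class operator'' therefore fails precisely on the singular quasi-states, which are among the objects the proposition quantifies over. To close the gap you must either invoke the finitely additive generalization of Gleason's theorem (split the measure into a completely additive part, handled by Gleason, plus a singular part vanishing on finite-rank projections whose linear extension requires a separate argument --- this is the Mackey--Gleason problem, settled for $B(H)$ by Aarnes and in general by Bunce and Wright), or restrict the statement to normal quasi-states, in which case your argument is complete.

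A second, more easily repaired imprecision: the spectral projections of a self-adjoint $a$ with connected spectrum do not lie in the commutative C*-algebra $C^*(a,1)\cong\mc{C}(\sigma(a))$ in which you take your norm limits, but only in the von Neumann algebra $W^*(a)$ generated by $a$. Since $W^*(a)$ is itself a commutative subalgebra of $B(H)$, the hypothesis applies to it and your approximation argument goes through verbatim once you work there; you just need to say so.
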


A state is therefore uniquely defined by its restrictions to each of the commutative subalgebras of $B(H)$. We can then equally well define a quantum state as follows:

\begin{definition}
A quasi-linear functional on a C*-algebra $A$ is a map $A\rightarrow\field{C}$ that is linear \emph{only} on each commutative subalgebra of $A$ and satisfies $\rho(a+ib)=\rho(a)+i\rho(b)$ for all self-adjoint $a,b$. We say that $\rho$ is a quantum state if it is positive ($\rho(a^*a)\geq 0$) and unital ($\rho(1)=1$).
\end{definition}

Thus, one finds that states are in fact defined by how they act locally on a commutative subalgebra. On the other hand, if one has locally on each commutative subalgebra a map that preserves too much structure, there is no way to collect all these maps into a single  map on the whole algebra. So apparently it is important what maps look like when they are localized at a commutative subalgebra. This is one of the facts that are suggestive for Bohrification: whenever one is interested in maps that are local in some sense, classes of such maps can be best considered as presheaves on the corresponding local domain. The above theorems suggest that one is interested in presheafs on the poset of commutative subalgebras. Indeed, Bohrification will be a functor that assigns to each C*-algebra a presheaf topos over the commutative subalgebras, endowed with an internal C*-algebra.

The fact that one is working over the commutative subalgebras of a C*-algebra finds an interesting analogue in the philosophical ideas on quantum mechanics by Bohr (cf. \cite{sc73}). Bohr argued that, although a quantum system itself may be highly nonclassical, it should always be probed in a classical context: an experiment on a quantum system should always be described and interpreted in the classical framework one naturally has of the world. One might identify these `classical contexts' of a quantum mechanical system with the commutative subalgebras of the corresponding C*-algebra. Bohr's `doctrine of classical contexts' then says that all physically relevant information of a quantum system should be encoded in the collection of commutative subalgebras of the C*-algebra that models it.

In the rest of this section we will describe the Bohrification process in more detail. We will explain that Bohrification can be defined as the following functor:

\begin{proposition}
Let $\cat{CStar}_{cr}$ be the category of C*-algebras with as arrows the *-homomorphisms that reflect commutativity. Then Bohrification is a functor $B: \cat{CStar}_{cr}^{op}\rightarrow\cat{cCTopos}$ to the category of toposes with internal commutative C*-algebras. 
\end{proposition}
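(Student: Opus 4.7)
The plan is to give Bohrification concretely: to a C*-algebra $A$ assign the topos $B(A) = [\mc{C}(A), \Set]$ of covariant functors on the poset $\mc{C}(A)$ of commutative unital C*-subalgebras of $A$, equipped with the tautological diagram $\ul{A}: \mc{C}(A) \to \cat{CStar}$, $C \mapsto C$. Since the axioms for a commutative C*-algebra are expressible using only finite limits of underlying sets, any such functor automatically defines an internal commutative C*-algebra in $[\mc{C}(A), \Set]$, with all operations computed stagewise.

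For functoriality, the crucial observation is that a commutativity-reflecting *-homomorphism $f: A \to B$ induces a monotone map $f^{-1}: \mc{C}(B) \to \mc{C}(A)$. Indeed, for any $D \in \mc{C}(B)$, the preimage $f^{-1}(D)$ is a unital *-subalgebra of $A$; it is commutative because any $a, a' \in f^{-1}(D)$ have $f(a), f(a') \in D$ commuting, whence $a$ and $a'$ commute in $A$ by the reflection hypothesis. The monotone map $f^{-1}$ then induces an essential geometric morphism $B(f): B(B) \to B(A)$ whose inverse image is precomposition with $f^{-1}$. To upgrade $B(f)$ to a morphism of ringed toposes, I produce the required internal *-homomorphism $B(f)^*\ul{A} \to \ul{B}$ by giving, at each $D \in \mc{C}(B)$, the restriction $f|_{f^{-1}(D)}: f^{-1}(D) \to D$, which is manifestly a *-homomorphism of commutative C*-algebras and natural in $D$.

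Functoriality itself is then bookkeeping: $(g\circ f)^{-1} = f^{-1}\circ g^{-1}$ gives $B(g\circ f) = B(f)\circ B(g)$ on the level of geometric morphisms, and the internal algebra structure maps compose by associativity of restriction; identities are preserved by inspection. The only conceptual step is the first move of the second paragraph, and it is where the hypotheses enter essentially: the commutativity-reflecting condition is used precisely to ensure that $f^{-1}$ lands in $\mc{C}(A)$, which is exactly what motivates taking $\cat{CStar}_{cr}$ rather than all *-homomorphisms as the source category.
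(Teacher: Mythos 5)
Your construction is essentially the paper's: the paper packages the inverse-image map $f^{-1}\colon C(B)\to C(A)$ as the right adjoint of $C\mapsto f(C)$ (showing this adjoint exists precisely when $f$ reflects commutativity) and then invokes a general adjoint-triple lemma to produce the geometric morphism, but the resulting morphism of ringed toposes --- precomposition with $f^{-1}$ as inverse image together with the restrictions $f|_{f^{-1}(D)}\colon f^{-1}(D)\to D$ as the internal *-homomorphism $B(f)^*\ul{A}\to\ul{B}$ --- is identical to yours. One caveat: your claim that $\ul{A}$ is an internal commutative C*-algebra ``because the axioms are expressible using only finite limits'' oversimplifies, since the norm and completeness axioms are not finite-limit conditions; the paper instead defers this point to the cited result of Heunen--Landsman--Spitters that any presheaf of C*-algebras is an internal C*-algebra.
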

A *-homomorphism $h: A\rightarrow B$ is said to reflect commutativity if for all $x,y\in A$ one has that $[x,y]=0$ if $[h(x),h(y)]=0$. We will precisely define the category $\cat{cCTopos}$ is section \ref{sec:psh}, but for now just mention that it is a subcategory of the category of ringed toposes. We first assign to each C*-algebra its poset of commutative subalgebras and then show that the category of copresheaves over this poset can be naturally endowed with an internal C*-algebra.

\subsection{Poset of subalgebras}
As we mentioned, we will be mostly interested in the commutative subalgebras of some algebra; in particular, one can consider the poset $C(A)$ of commutative subalgebras of $A$, ordered by inclusion. The assignment of this poset to a C*-algebra induces a functor $C: \cat{CStar}\rTo \cat{Pos}$, since one has for any homomorphism $A\rTo^{h} B$ a functor
\begin{diagram}
C(A)	&	\rTo^{C(h)}		&	C(B)
\end{diagram}
defined by
\begin{diagram}
C			&	\rMapsto	&	h(C)		\\
\dInto			&			&	\dInto		\\
D			&	\rMapsto	&	h(D).
\end{diagram}
Indeed, the image of a C*-algebra under a *-homomorphism is a C*-algebra itself (\cite{heu11}, lemma 5), so $C(h)$ is a well-defined functor. 

The main question is how much information is lost when passing from C*-algebras to their posets of commutative subalgebras.
\begin{lemma}\label{lem:ffiso}
\label{prop:pos}
The functor $C: \cat{CStar}\rightarrow \cat{Pos}$ is faithful and reflects all isomorphisms, except the isomorphism $C(0)$ with $0: \field{C}\rightarrow 0$.
\end{lemma}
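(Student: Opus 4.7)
The plan is to reduce both statements to claims about commutative C*-algebras, where Gelfand duality makes everything concrete. Throughout I use that every $a\in A$ decomposes as $a_1+ia_2$ with $a_1,a_2$ self-adjoint, and each self-adjoint element lies in the commutative subalgebra $C^*(a_i,1)\in C(A)$; hence a *-homomorphism $h:A\to B$ is entirely determined by its restrictions $h|_C$ to commutative subalgebras $C$.

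First, for faithfulness, I would take $h,h':A\to B$ with $C(h)=C(h')$ and fix a commutative $C\in C(A)$. The two functors agree on the down-set of $C$ and both land in the down-set below $h(C)=h'(C)$. Under Gelfand duality this down-set is anti-equivalent to the lattice of closed subspaces of $\operatorname{Spec}(C)$, so the action of $C(h)$ on subalgebras of $C$ encodes the underlying continuous map $\operatorname{Spec}(h(C))\to\operatorname{Spec}(C)$. This reconstructs $h|_C$ uniquely, and the self-adjoint decomposition above then gives $h=h'$.

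Next, for reflection of isomorphisms, suppose $C(h)$ is an order isomorphism; I would exclude the pair $(A,B)=(\field{C},0)$ and show $h$ is bijective. Surjectivity: given self-adjoint $b\in B$, the subalgebra $C^*(b,1)$ lies in the image of $C(h)$ by assumption, so $b\in h(A)$, and decomposing into real and imaginary parts handles arbitrary $b$. Injectivity: if some nonzero self-adjoint $a$ lay in $\ker h$, then either $a\in\field{C}\cdot 1$ (forcing $h(1)=0$, so $B=0$, the excluded case) or $\field{C}\cdot 1\subsetneq C^*(a,1)$ would be distinct elements of $C(A)$ both mapped to $\field{C}\cdot 1\subseteq B$, contradicting injectivity of the object map of $C(h)$. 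Thus $h$ is a bijective *-homomorphism. The stated exception is then immediate: $C(\field{C})$ and $C(0)$ are both singleton posets, so $C(0):C(\field{C})\to C(0)$ is vacuously an order iso even though $0:\field{C}\to 0$ is not an isomorphism of C*-algebras.

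The hard part is the Gelfand-type reconstruction step inside the faithfulness argument: one must genuinely check that $h|_C$ is pinned down by how $C(h)$ acts on the subalgebras of $C$, which amounts to a commutative-algebra version of Hamhalter's reconstruction theorem. Making this precise — in particular tracking the correspondence between $C(h)$ on the down-set of $C$ and the induced map $\operatorname{Spec}(h(C))\to\operatorname{Spec}(C)$, and accounting for the fact that the restriction of $C(h)$ need not itself be a poset isomorphism — is where the real technical work lies.
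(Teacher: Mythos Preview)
Your reflection-of-isomorphisms argument is essentially the paper's: both get surjectivity by hitting each normal generator via the surjectivity of $C(h)$, and injectivity by noting that a nontrivial kernel element outside $\field{C}\cdot 1$ collapses two distinct members of $C(A)$ to $\field{C}\cdot 1_B$.

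For faithfulness you take a genuinely different route, and there is a real gap. First, the Gelfand dictionary is misquoted: unital sub-C*-algebras of $C\cong\mc{C}(\Sigma)$ correspond to compact Hausdorff \emph{quotients} of $\Sigma$ (continuous surjections $\Sigma\twoheadrightarrow Y$), not to closed subspaces --- closed subspaces correspond to quotient \emph{algebras}, i.e.\ ideals. More importantly, even with the correct dictionary the local reconstruction you propose cannot work: the action of $C(h)$ on the down-set below a single $C$ does not determine $h|_C$. Already for $C=\field{C}^2$ the down-set is the two-element chain $\{\field{C}\cdot 1\subset\field{C}^2\}$, and both the identity and the swap $(x,y)\mapsto(y,x)$ act trivially on it. So the step you flag as ``where the real technical work lies'' is not merely hard --- it fails as stated. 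Hamhalter-type results do not rescue this: they concern order-isomorphisms of the \emph{entire} poset $C(A)$, explicitly exclude low-dimensional summands, and say nothing about recovering $h|_C$ from the down-set of one $C$. The paper's faithfulness argument is quite different and more elementary: rather than attempting to reconstruct each $h|_C$, it fixes a self-adjoint $a$, uses $C(h)=C(k)$ to conclude that $h(a)$ and $k(a)$ generate the same subalgebra of $B$, writes $k(a)=\lambda(a)h(a)$ for a real scalar, and then forces $\lambda\equiv 1$ via additivity and the unit. That argument exploits the hypothesis $C(h)=C(k)$ globally across all self-adjoint elements at once --- precisely the information your down-set-by-down-set strategy discards.
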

\begin{proof}
Let $h,k : A\rightarrow B$ be two *-homomorphisms such that $C(h)=C(k)$. A self-adjoint element $a$ of $A$ generates a commutative subalgebra of $A$ and both $h$ and $k$ map this subalgebra to the subalgebra of $B$ generated by $h(a)$. As a consequence, $k(a)=\lambda(a)h(a)$ for some real number $\lambda(a)$. We now have to show that $\lambda(a)=1$ for all self-adjoint elements $a$.

Indeed, first observe that $\lambda(a)=\lambda(b)$ if $a$ is a multiple of $b$. On the other hand, we have for all self adjoint elements $a,b$ that
$$
\lambda(a+b)(h(a)+h(b))=k(a+b)=\lambda(a)h(a)+\lambda(b)h(b).
$$
This means that $\lambda(a)=\lambda(b)$ even if $a$ and $b$ are linearly independent. We can thus say that $\lambda(a)$ is actually constant, which means that $\lambda(a)=1$ since $h$ and $k$ both map the identity element in $A$ to the identity element in $B$. We conclude that $h$ and $k$ agree on the self-adjoint elements of $A$.

In turn, the restriction of a *-homomorphism $h: A\rightarrow B$ to the self-adjoint part of $A$ uniquely defines $h$, since any element $a\in A$ can be written as $a=\frac{1}{2} (a_++ i a_-)$, where $a_+=a+a^*$ and $a_-=\frac{1}{i}(a - a^*)$ are both self-adjoint elements of $A$. Therefore $C$ is faithful.

We can use the same kind of argument to show that $C$ reflects isomorphisms, by restricting our attention to normal elements only. We will first show that $C$ reflects all embeddings, except for the isomorphism $C(0)$ with $0: \field{C}\rightarrow 0$, which is obviously not induced by an injective *-homomorphism. Indeed, if $h: \field{C}\rightarrow B$ is a *-homomorphism with $B\neq 0$, then $h$ is always an injection and the functor $C(h)$ is always an embedding. 

We therefore only have to show for $A\neq\field{C}$ that a *-homomorphism $h: A\rightarrow B$  is an injection if $C(h)$ is an embedding. Indeed, if $h$ were not injective, there would be some normal element $a\in A\setminus\field{C}\cdot 1$ that would be send to $0$ by $h$.  But then $C(h)$ would send both $\field{C}\cdot 1\subset A$ and the subalgebra $<a>$ generated by $a$ to the algebra $\field{C}\cdot 1\in C(B)$. This means that $C(h)$ cannot be an embedding if $h$ is not an injection.

Similarly, the fact that $C(h)$ is a surjection implies $h$ to be surjective: for any normal element $b\in B$, there is a subalgebra $C\subset A$ whose image is the subalgebra generated by $b$. In particular, there will be some element $a$ in $A$ such that $h(a)=b$. Combining these two results gives that $C$ reflects all isomorphisms, except for the isomorphism $C(0): C(\field{C})=*\rightarrow C(0)=*$.
\end{proof}
In the proof, we have used that $C(A)$ essentially contains all normal elements of $A$. We can therefore extract the set $N(A)$ of normal elements of $A$ from the poset $C(A)$. Moreover, we know for any two commuting elements in $N(A)$ how to add and multiply them. One can formalize this by introducing the concept of a partial C*-algebra, as done in \cite{vdb10}. There the authors also show that in a general setting, this is as far as we can get: there exist non-isomorphic C*-algebras of which the normal parts, with addition and multiplication defined only for commuting elements, are in fact isomorphic.

In particular, we cannot say anything about the commutator between two elements, except whether it is zero or not. If we replace the original product on an algebra by its symmetric product (i.e. $A*B=AB+BA$), we might retrieve some information on this product. Indeed, under certain conditions, two Neumann algebras are Jordan-isomorphic (isomorphic with respect to the symmetric product) if their posets of commutative subalgebras are isomorphic \cite{hdo10}:

\begin{proposition}
Let $A, B$ be two von Neumann algebras without type-$I_2$ summands and $H: C(A)\rightarrow C(B)$ an isomorphism. Then there is a Jordan-isomorphism $h: A\rightarrow B$ that induces the isomorphism $H$ between their posets of commutative subalgebras.
\end{proposition}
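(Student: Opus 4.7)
The plan is to reduce the proposition to Dye's theorem, which asserts that any orthoisomorphism between the projection lattices of two von Neumann algebras without type-$I_2$ direct summands is implemented by a (unique) Jordan *-isomorphism. The strategy has three stages: (a) extract from $H$ an orthoisomorphism of projection lattices, (b) invoke Dye's theorem to produce a Jordan *-isomorphism $h:A\rightarrow B$, and (c) verify that $h$ induces the original $H$ under the functor $C$.

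For stage (a), I would first show that the orthocomplemented lattice of projections $P(A)$ is recoverable from the poset $C(A)$. Every nontrivial projection $p\in A$ generates a two-dimensional commutative subalgebra $\langle p\rangle=\field{C}p\oplus\field{C}(1-p)$, and these two-dimensional subalgebras are characterized order-theoretically as the atoms above the bottom element $\field{C}\cdot 1\in C(A)$. An atom corresponds to an unordered pair $\{p,1-p\}$, so individual projections are recovered only up to this involution; however, the partial order on $C(A)$ lets one detect which projections lie inside a given commutative subalgebra, and hence the projection order $p\leq q$ (as it coincides with $p$ being a projection in the commutative subalgebra generated by $q$). Commutation of projections $p,q$ is encoded by the existence of a common upper bound for $\langle p\rangle$ and $\langle q\rangle$ in $C(A)$, and orthogonality $p\perp q$ is then the additional condition that their join is three-dimensional with distinguished atoms. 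Since $H$ preserves all of this structure, one obtains an induced orthoisomorphism $H_P:P(A)\rightarrow P(B)$.

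For stage (b), Dye's theorem applied to $H_P$ yields a Jordan *-isomorphism $h:A\rightarrow B$ such that $h|_{P(A)}=H_P$. This is the step where the hypothesis about type-$I_2$ summands is essential: over type-$I_2$ factors the projection lattice does not rigidify the Jordan product, and indeed counterexamples to this proposition are known in that case. For stage (c), I would use the spectral theorem: every $C\in C(A)$, being a commutative von Neumann subalgebra, is the norm-closed linear span of its projections. Since $h$ is linear on commutative subalgebras and agrees with $H_P$ on projections, $h(C)$ is the closed span of $H_P(P(C))=P(H(C))$, which equals $H(C)$. Hence $C(h)=H$.

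The main obstacle is stage (a): even though it is intuitively clear that $P(A)$ sits inside $C(A)$, one must show that \emph{all} of the orthocomplemented lattice structure — not only the partial order but also the orthogonality relation and the involution $p\mapsto 1-p$ — is definable purely from the poset $C(A)$, in a way that is automatically preserved by any poset isomorphism $H$. This is essentially a lattice-combinatorial translation problem and is the technical heart of the Hamhalter-D\"oring reconstruction; everything afterwards is a clean invocation of Dye's theorem plus the spectral theorem.
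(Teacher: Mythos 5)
The paper itself offers no proof of this proposition: it is imported verbatim from D\"oring--Harding \cite{hdo10}, so there is no in-text argument to compare against. Your overall architecture --- reconstruct the projection ortholattice from $C(A)$ and then invoke Dye's theorem, finishing with the spectral theorem --- is indeed the strategy of that reference, and you correctly identify the reconstruction step as the technical heart.

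That said, the concrete order-theoretic claims in your stage (a) are not correct as stated, and they paper over exactly the point that makes the reconstruction delicate. First, $p\leq q$ is \emph{not} equivalent to $p$ being a projection of the subalgebra generated by $q$: that subalgebra is $\field{C}q+\field{C}(1-q)$, whose only projections are $0$, $q$, $1-q$, $1$. Second, since an atom of $C(A)$ only remembers the unordered pair $\{p,1-p\}$, knowing that the join of $\langle p\rangle$ and $\langle q\rangle$ is three-dimensional tells you only that one of the four degenerate relations $p\perp q$, $p\leq q$, $q\leq p$, $(1-p)\perp(1-q)$ holds; these four cases are permuted by the substitutions $p\mapsto 1-p$ and $q\mapsto 1-q$, so no local, poset-definable condition can single out orthogonality. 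The actual content of the cited proof is showing that one can make a \emph{globally consistent} choice of representative from each pair $\{p,1-p\}$ (using how atoms sit inside larger Boolean subalgebras, \`a la Harding--Navara); your sketch asserts the conclusion of that analysis rather than supplying it. Finally, stage (c) uses that every $C\in C(A)$ is the closed linear span of its projections; this holds for abelian von Neumann subalgebras but fails for general commutative C*-subalgebras (which may contain no nontrivial projections at all), so both your argument and the theorem itself require $C(A)$ to be read as the poset of abelian \emph{von Neumann} subalgebras --- a point worth flagging since the surrounding text of this paper uses $C(A)$ for C*-subalgebras.
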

\noindent
On the other hand, there exist C*-algebras that are not isomorphic, although the corresponding Jordan algebras are isomorphic (an example is given in \cite{con75}). As a consequence, we cannot lift this result to C*-algebras and *-homomorphisms.

In the light of quantization and reduction, it seems sufficient that only the Jordan structure of an algebra is determined by its poset of commutative subalgebras. Indeed, as noted by for instance Weinstein \cite{bw97}, the symmetric product in a C*-algebra is the quantum analogue of the pointwise multiplication in a Poisson algebra, while the commutator is the quantum analogue of the Poisson bracket. One therefore remembers only the structure that corresponds to a classical algebra of observables without Poisson bracket. The only structure that is lost is the derivation needed to formulate the equations of motions as differential equations: for instance, the Heisenberg equation cannot be formulated without the notion of a commutator. One therefore has to find other ways to formulate dynamics in this setting.

\subsection{Presheaf topos}\label{sec:psh}
Given the poset $C(A)$, one would like to build a topos from it, internal to which one can construct a locale. The easiest way to construct one would be to take $[C(A),\Set]$.  This topos has a tautological object $\ul{A}:C(A)\rightarrow \cat{cCStar}$, defined by $\ul{A}(C)=C$, where $C$ is interpreted as a commutative C*-algebra. Moreover, $\ul{A}$ maps the ordering of the poset to the inclusions of the corresponding subalgebras. As is mentioned in proposition \ref{prop:cstar}, this object $\ul{A}$ is an internal commutative C*-algebra in $[C(A),\Set]$.

Hence to each C*-algebra, we assign a topos with an internal commutative C*-algebra. These toposes form a category:
\begin{definition}
Let $\cat{cCTopos}$ denote the category having
\begin{enumerate}
\item 	as objects (sheaf) toposes with internal commutative C*-algebras.
\item 	as arrows $(\mc{D},A)\rTo (\mc{E},B)$ pairs of arrows
		\begin{diagram}
		\mc{D}	&\pile{\lTo^{f^*} \\ \bot \\ \rTo_{f_*}}	&\mc{E}	\\
				&											&		\\
		f^*B	&\rTo^h 									&	A
		\end{diagram}
		where $f$ is a geometric morphism (with the property that $f^*B$ is again a C*-algebra) and $h$ an internal *-homomorphism in $\mc{D}$.
\end{enumerate}
\end{definition}
\noindent In this way, $\cat{cCTopos}$ is a subcategory of the category of locally ringed toposes that is considered in general geometry.

The direction of the internal *-homomorphism in this definition prevents us from extending the functor $\cat{CStar}\rightarrow\cat{Pos}$ to a functor $\cat{CStar}\rightarrow\cat{cCTopos}$. Indeed, we will have to restrict ourselves to the subcategory $\cat{CStar}_*$ which has as objects C*-algebras and as arrows those homomorphisms $h: A\rightarrow B$ such that $H: C(A)\rightarrow C(B)$ has a right adjoint. This category $\cat{CStar}_*$ is precisely $\cat{CStar}_{cr}$.

\begin{lemma}
A *-homomorphism $A\rTo^h B$ reflects commutativity if and only if the induced functor $C(h): C(A)\rTo C(B)$ has right adjoint.
\end{lemma}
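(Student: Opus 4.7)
The plan is to construct the right adjoint to $C(h)$ explicitly as the preimage functor $R(D)=h^{-1}(D)$, and to recover the reflection-of-commutativity property by running the same construction in reverse.

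For the forward direction, suppose $h$ reflects commutativity. For each $D\in C(B)$ set $R(D):=h^{-1}(D)\subseteq A$. This is norm-closed by continuity of $h$, closed under $*$ since $D$ is, and unital. The only non-trivial point is that $h^{-1}(D)$ is commutative, and this is exactly the content of the reflection hypothesis: for any $x,y\in h^{-1}(D)$ the images $h(x),h(y)$ sit in the commutative subalgebra $D$ and hence commute, so $[x,y]=0$. Thus $R(D)\in C(A)$, and the adjunction $h(C)\subseteq D \iff C\subseteq h^{-1}(D)$ is immediate from the definition.

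For the backward direction, suppose $C(h)$ has a right adjoint $R$, and take normal $x,y\in A$ with $[h(x),h(y)]=0$. The goal is to enclose $h(x),h(y)$ in a single commutative C*-subalgebra $D\subseteq B$; then both $\langle x\rangle,\langle y\rangle\in C(A)$ map into $D$, so by the universal property of $R$ both sit inside $R(D)$. Since $R(D)$ is commutative, $[x,y]=0$ follows. Constructing $D$ requires that $h(x),h(x^*),h(y),h(y^*)$ pairwise commute, and for normal $x,y$ (hence normal $h(x),h(y)$) this follows from the Fuglede--Putnam theorem applied inside $B$, upgrading the single relation $[h(x),h(y)]=0$ to commutation of all four.

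The main obstacle is precisely this last step: extracting pairwise commutation of the four images from the single hypothesis $[h(x),h(y)]=0$. Without normality of $x,y$, no commutative C*-subalgebra of $B$ containing $h(x),h(y)$ need exist, and the adjoint $R$ supplies no leverage. This gap is harmless in the present context, since the elements of $A$ whose commutation structure is visible to the poset $C(A)$ are exactly the normal ones, for which the Fuglede--Putnam step suffices.
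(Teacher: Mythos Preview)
Your forward direction coincides with the paper's: both set $R(D)=h^{-1}(D)$ and note that reflection of commutativity is exactly what makes this land in $C(A)$, after which the adjunction $h(C)\subseteq D \Leftrightarrow C\subseteq h^{-1}(D)$ is tautological.

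For the converse you take a genuinely different route from the paper. The paper argues structurally: given a right adjoint $h^*$, it shows that $h^*(D)$ must contain every \emph{normal} element $a\in h^{-1}(D)$, because the singly-generated commutative subalgebra $C^*(a,1)$ maps into $D$ and hence sits below $h^*(D)$ by the adjunction; since a $*$-subalgebra containing all normals of $h^{-1}(D)$ contains all of it (real--imaginary decomposition), one gets $h^*(D)=h^{-1}(D)$, so $h^{-1}(D)$ is commutative. This never needs to place \emph{two} elements into a common commutative subalgebra of $B$, so Fuglede--Putnam is not invoked. Your argument instead fixes a pair of normal $x,y$ with $[h(x),h(y)]=0$, uses Fuglede--Putnam to produce a single commutative $D$ containing both images, and then applies the adjunction twice. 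Both arguments are valid; the paper's is shorter and yields the stronger structural identification $R=h^{-1}$, while yours stays closer to the elementwise definition of ``reflects commutativity''.

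On the gap you flag: you are right that your argument, as written, only establishes the reflection property for \emph{normal} $x,y$, whereas the lemma is stated for arbitrary $x,y$. It is worth noting that the paper's proof has exactly the same lacuna: it concludes with ``$h^{-1}(D)$ is commutative for every commutative $D$, so $h$ reflects commutativity'', but passing from the former to the literal ``$[h(x),h(y)]=0\Rightarrow [x,y]=0$ for all $x,y$'' again requires finding a commutative $D$ containing $h(x),h(y)$, which is not automatic for non-normal images. So your explicit caveat is honest, and the restriction does not affect anything downstream: all subsequent uses in the paper only need the right adjoint and its identification with $h^{-1}$.
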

\begin{proof}
If $h$ reflects commutativity, then there is a functor $h^{-1}: C(B)\rTo C(A)$ mapping a commutative subalgebra to its inverse image. It is right adjoint to $C(h)$ because $C\subset h^{-1}(D)$ if and only if $h(C)\subset D$.

Conversely, if $C(h): C(A)\rightarrow C(B)$ has a right adjoint $h^*$, then $h^*(D)$ is the inverse image of $D$ for any $D\in C(B)$. Indeed, we note that $h^{-1}(D)$ is a subalgebra of $A$ (though not necessarily commutative) and that $C(h^{-1}(D))$ has a largest element $h^*(D)$, because any for any commutative $C\subset A$ we have that $h(C)\subset D$ if and only if $C\subset h^*(D)$. In particular does $h^*(D)$ contain all normal elements of $h^{-1}(D)$ (since any normal element in $h^{-1}(D)$ generates a commutative subalgebra which sits inside $h^{-1}(D)$). But if $h^*(D)$ contains all normal elements of $h^{-1}(D)$, then it must contain all of $h^{-1}(D)$. We conclude that $h^*(D)=h^{-1}(D)$, so $h$ reflects commutativity.
\end{proof}
\noindent In particular, we note that embeddings (i.e. monos in $\cat{CStar}$) are arrows in $\cat{CStar}_{cr}$. 

In this way, we find a functor $\cat{CStar}_{cr}^{op}\rightarrow\cat{Pos}$ that assigns to each arrow $h: A\rightarrow B$ in $\cat{CStar}_{cr}$ the inverse image functor $R: C(B)\rightarrow C(A)$ that is right adjoint to the functor $C(h)$. This particular functor can be extended to a functor $\cat{CStar}_{cr}^{op}\rightarrow\cat{cCTopos}$.
\begin{lemma}
There is a functor $\cat{CStar}_{cr}^{op}\rightarrow \cat{cCTopos}$ that sends a C*-algebra to the topos $[C(A),\Set]$ endowed with the internal algebra $\ul{A}: C(A)\rightarrow \cat{Set}$ given by $\ul{A}(C)=C$.
\end{lemma}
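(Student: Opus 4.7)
The plan is to unpack the data of an arrow in $\cat{cCTopos}$ and construct both pieces directly from the functor $\cat{CStar}_{cr}^{op}\to\cat{Pos}$ obtained in the preceding lemma. The object assignment is already given: $A\mapsto([C(A),\Set],\ul{A})$, where $\ul{A}(C)=C$ acts on inclusions $C\subseteq D$ by the corresponding subalgebra inclusion. This $\ul{A}$ lives in the category of commutative C*-algebras internal to $[C(A),\Set]$ by the (cited) proposition \ref{prop:cstar}, so the object part is well defined.

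For the morphism part, take $h:A\to B$ in $\cat{CStar}_{cr}$ and let $R=h^{-1}:C(B)\to C(A)$ be the right adjoint to $C(h)$ supplied by the previous lemma, sending $D\in C(B)$ to $h^{-1}(D)\in C(A)$. Precomposition along $R$ gives a functor $R^\ast:[C(A),\Set]\to[C(B),\Set]$ which, being pointwise, preserves all limits and colimits; in particular it is the inverse image part of a geometric morphism $f_h:[C(B),\Set]\to[C(A),\Set]$. Because $f_h^\ast$ is left exact and commutes with the equational structure, $f_h^\ast\ul{A}$ is again an internal commutative C*-algebra in $[C(B),\Set]$; concretely $(f_h^\ast\ul{A})(D)=\ul{A}(h^{-1}(D))=h^{-1}(D)$. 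The restriction of $h$ to $h^{-1}(D)$ lands in $D$, defining a *-homomorphism $\alpha_D:h^{-1}(D)\to D=\ul{B}(D)$; for $D\subseteq D'$ in $C(B)$, the inclusion $h^{-1}(D)\subseteq h^{-1}(D')$ is intertwined with $D\subseteq D'$ by these restrictions, so $\alpha$ is a natural transformation and hence an internal *-homomorphism $f_h^\ast\ul{A}\to\ul{B}$. Together $(f_h,\alpha)$ constitutes the required arrow $([C(B),\Set],\ul{B})\to([C(A),\Set],\ul{A})$ in $\cat{cCTopos}$, which is the image under a contravariant functor as desired.

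Functoriality then has to be checked on the two components separately. Given composable $A\xrightarrow{h_1}B\xrightarrow{h_2}C$ in $\cat{CStar}_{cr}$, the equality $(h_2\circ h_1)^{-1}=h_1^{-1}\circ h_2^{-1}$ of set-theoretic preimages, specialised to commutative subalgebras, shows that the right adjoints compose, and hence that $f_{h_2\circ h_1}^\ast=f_{h_1}^\ast\circ f_{h_2}^\ast$ up to the canonical identification; the identity on $A$ clearly yields the identity geometric morphism. On the *-homomorphism side, the composite arrow in $\cat{cCTopos}$ is $\alpha_{h_1}\circ f_{h_1}^\ast(\alpha_{h_2})$, whose component at $E\in C(C)$ is $(h_1)|_{h_1^{-1}h_2^{-1}(E)}$ followed by $(h_2)|_{h_2^{-1}(E)}$; this is precisely $(h_2\circ h_1)|_{(h_2\circ h_1)^{-1}(E)}$, matching $\alpha_{h_2\circ h_1}$.

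The main obstacle I anticipate is the bookkeeping in the last step: one must be careful that the comparison isomorphisms between composed inverse image functors in $\cat{Topos}$ really do intertwine the internal *-homomorphisms strictly (not just up to coherent iso), so that one obtains an honest functor rather than a pseudofunctor. Because $f_h^\ast$ here is literal precomposition by $h^{-1}$, the comparison is the identity natural isomorphism and the verification reduces to the elementary set-theoretic identities on preimages used above; everything else (preservation of finite limits by $f^\ast$, the fact that $f^\ast$ of an internal commutative C*-algebra is again such) is formal.
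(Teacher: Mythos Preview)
Your proposal is correct and follows essentially the same route as the paper: both take the inverse image functor to be precomposition with $R=h^{-1}:C(B)\to C(A)$, observe that $R^*\ul{A}=\ul{A}\circ R$ is again a copresheaf of commutative C*-algebras, and define the internal *-homomorphism pointwise as $h|_{h^{-1}(D)}:h^{-1}(D)\to D$. The only cosmetic difference is that the paper packages the existence of the geometric morphism via the adjoint triple coming from $C(h)\dashv R$ (Lemma~\ref{lem:adj}), whereas you note directly that precomposition preserves all (co)limits; your explicit verification of functoriality is a detail the paper leaves implicit.
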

\begin{proof}
We wish to use lemma \ref{lem:adj} which applies to categories of presheaves. However, we are interested in categories of copresheaves. This means that a homomorphism $h: A\rightarrow B$ induces an adjoint triple
\begin{diagram}
A		&	& C(A)^{op}		&	& & [C(A),\Set]				\\
\dTo^h	&	& \dTo^{C(h)} \vdash \uTo_R	&	& & \dTo^{C(h)_*} \vdash \uTo^{R_*} \vdash \dTo_{R^*} \\
B		&	& C(B)^{op}		&	& & [C(B),\Set]				
\end{diagram}
in which $R^*$ preserves limits. We pick the geometric morphism $(R_*\vdash R^*)$ as the image of the homomorphism $A\rightarrow B$. 
Observe that $R^*\ul{A}$ is indeed an internal C*-algebra, since $\ul{A}\circ R$ is a presheaf of C*-algebras and therefore an internal C*-algebra in $[C(B),\Set]$.

We should then find an internal *-homomorphism $\mu: R^*\ul{A}\rightarrow \ul{B}$. We can define this natural transformation pointwise by defining $\mu_D: R(D)\rightarrow D$ to be the *-homomorphism $h$ restricted to $R(D)$. Because $C(h)\dashv R$, we see that $h(R(D))=C(h)R(D)\subset D$, so this map is well defined. Furthermore, it clearly defines an internal *-homomorphism.
\end{proof}
We now see why the existence of a left adjoint is necessary for our construction: if we picked the left geometric morphism $C(h)_*\vdash C(h)^*$ instead of the right one, we would have to find a *-homomorphism $\nu: C(h)^*\ul{A}\rightarrow \ul{B}$. Pointwise, this would correspond to a *-homomorphism $\nu_D: h(D)\rightarrow D$ which we would not be able to construct in general.

We thus find a Bohrification functor 
$$B: \cat{CStar}_{rc}^{op}\rTo \cat{cCTopos}\rTo \cat{RingTopos}$$
that assigns to each C*-algebra a presheaf topos with an internal ring, or even a commutative C*-algebra. In the next section, we will show that this presheaf topos is in fact equivalent to a sheaf topos over a topological space. This allows us to describe Bohrification as a functor to the category of ringed spaces $\cat{RingSp}$, which is a subcategory of the category $\cat{RingTopos}$ of ringed toposes.

\subsection{Sheaf topos}
The aim of this paragraph is to establish some properties of the functor $[-,\Set]$ that we will need in section \ref{sec:aqft}. In particular, we show that $[-,\Set]$ actually maps into the topological spaces $\cat{Top}$, after their embedding in $\cat{Topos}$. 

\begin{lemma}\label{lem:locale}
There is a functor $\cat{Pos}\rTo^{\rm Up}\cat{Loc}$ so that there is a natural equivalence
\begin{diagram}[size=0.8em]
\cat{Pos}	&  						& & \rTo^{\rm Up}	& &			& \cat{Loc}\\
			&\rdTo(6,6)_{[-,\Set]}	& &					& &			& \\
			&						& &			& \simeq &			& \\
			& 						& &					& &			& \dTo_{Sh}\\
			&						& &					& &			& \\
			&						& &					& &			& \\
			&						& &					& &			& \cat{Topos}.
\end{diagram}
\end{lemma}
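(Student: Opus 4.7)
The plan is to identify the copresheaf topos $[P,\Set]$ with the sheaf topos of the Alexandrov locale on $P$. Define ${\rm Up}(P)$ to be the locale whose frame is the collection of up-closed subsets $U\subseteq P$ ordered by inclusion. Arbitrary unions and finite intersections of up-sets are again up-sets, and the infinite distributive law is inherited pointwise from $\Set$, so this is indeed a frame. For a monotone map $f:P\rightarrow Q$ the preimage $f^{-1}$ sends up-sets to up-sets and preserves arbitrary unions and finite intersections, hence is a frame homomorphism ${\rm Up}(Q)\rightarrow{\rm Up}(P)$; the corresponding locale morphism defines ${\rm Up}(f):{\rm Up}(P)\rightarrow{\rm Up}(Q)$, and functoriality follows from $(g\circ f)^{-1}=f^{-1}\circ g^{-1}$.

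The heart of the argument is the equivalence $Sh({\rm Up}(P))\simeq[P,\Set]$. I would exhibit the principal up-sets $\uparrow p=\{q\in P: p\leq q\}$ as a basis for ${\rm Up}(P)$, since every up-set satisfies $U=\bigcup_{p\in U}\uparrow p$ and since intersections of principal up-sets are joins of principal up-sets. The crucial observation is that covers by principal up-sets are trivial: if $\uparrow p=\bigcup_i\uparrow p_i$ then some $p_i$ satisfies both $p_i\leq p$ (from $p\in\uparrow p_i$) and $p\leq p_i$ (from $\uparrow p_i\subseteq\uparrow p$), so $p_i=p$ and the cover already contains $\uparrow p$ itself. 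Hence sheaves on this basis are just arbitrary contravariant functors on the basis, and since the basis ordered by inclusion is anti-isomorphic to $P$ via $p\mapsto\uparrow p$, these are precisely the objects of $[P,\Set]$. Explicitly the equivalence sends $A\in[P,\Set]$ to the sheaf $\tilde{A}$ defined by $\tilde{A}(U)=\lim_{p\in U}A(p)$, which satisfies $\tilde{A}(\uparrow p)=A(p)$ because $p$ is the minimum of $\uparrow p$; conversely a sheaf $F$ is recovered from $p\mapsto F(\uparrow p)$ by the standard base-to-locale extension.

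Naturality in $P$ amounts to checking that for a monotone $f:P\rightarrow Q$ the geometric morphism $[P,\Set]\rightarrow[Q,\Set]$ with direct image the right Kan extension $f_*$ matches the geometric morphism $Sh({\rm Up}(P))\rightarrow Sh({\rm Up}(Q))$ with direct image $F\mapsto F\circ f^{-1}$. On an up-set $V\subseteq Q$ the first gives $\widetilde{f_*A}(V)=\lim_{q\in V}\lim_{p:q\leq f(p)}A(p)$ while the second gives $({\rm Up}(f)_*\tilde{A})(V)=\lim_{p:f(p)\in V}A(p)$. The hardest step, and what I expect to be the main obstacle, is exhibiting the canonical isomorphism between these two limits; it follows from the observation that the functor $p\mapsto(f(p),p)$ from $\{p:f(p)\in V\}$ into the double-index poset is initial, since given any $(q,p)$ in the double-index poset one has $f(p)\in V$ (because $V$ is an up-set containing $q\leq f(p)$), so the choice $p'=p$ witnesses both non-emptiness and connectedness of the relevant under-category. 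With this isomorphism in place, naturality and compatibility with composition are essentially formal.
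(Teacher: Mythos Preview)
Your proof follows essentially the same route as the paper: construct the Alexandrov locale ${\rm Up}(P)$, observe that the principal up-sets $\uparrow p$ form a basis on which every cover is trivial, and conclude $\cat{Sh}({\rm Up}(P))\simeq[P,\Set]$ via the comparison with presheaves on the basis. The explicit formula $\tilde A(U)=\lim_{p\in U}A(p)$ also matches the paper's.

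The one place to be careful is your naturality check. Your claim that $F:p\mapsto(f(p),p)$ is \emph{initial} is not what the witness $p'=p$ establishes: the arrow $(q,p)\to(f(p),p)$ lives in the \emph{under}-category $(q,p)/F$, so what you have shown is that $F$ is \emph{final}. Finality controls colimits, not limits, and in fact $F$ is not initial in general (e.g.\ take $f$ constant at a non-minimal element of $Q$; then $F/(q,p)$ is empty whenever $q$ lies strictly below that value). The isomorphism you want is nonetheless correct, and your computation already contains the fix: the observation that $f(p)\in V$ for every $(q,p)$ in the double-index poset gives a well-defined projection $\pi:(q,p)\mapsto p$ to $\{p:f(p)\in V\}$, and the arrow $(q,p)\to(f(p),p)$ you wrote down is exactly the unit of an adjunction $\pi\dashv F$. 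Since left adjoints are initial and the diagram $(q,p)\mapsto A(p)$ factors through $\pi$, the limit comparison follows. The paper sidesteps this issue entirely by evaluating both direct images only on principal up-sets $\uparrow q$ and then invoking the sheaf property of $E$ to identify $\lim_{p\in f^{-1}(\uparrow q)}E(\uparrow p)$ with $E(f^{-1}(\uparrow q))$, which is a bit cleaner.
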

\begin{proof}
In fact, we give a functor $\cat{Pos}\rightarrow\cat{Top}$ by equipping each poset (as a set) with the Alexandrov topology consisting of all its upwards closed subsets. Then each functor (or increasing function) between posets defines a continuous function between the corresponding Alexandrov spaces.

The Alexandrov topology on a poset $P$ has a basis $\mc{B}$ consisting of opens $\uparrow (p):= \{x\in P | x\geq p\} $. Indeed, any upwards closed set $U$ is equal to $\bigvee_{p\in U}\uparrow(p)$. We therefore know that $\cat{Sh}({\rm Up}(P))\simeq \cat{Sh}(\mc{B})$ (see \cite{mml92} for a proof) and we see that $\mc{B}$ has the trivial covering (i.e. the only covering sieve for $\uparrow(p)$ is the maximal sieve), so that the sheaf condition becomes empty. 
Thus, $\cat{Sh}({\rm Up}(P))\simeq [\mc{B}^{op},\Set]$. But it is clear that $\mc{B}^{op}\simeq P$, so for each object $P$ we find an equivalence $\eta_P: \cat{Sh}({\rm Up}(P))\rTo^{\simeq}\left[P,\Set\right]$. Concretely, we find that a copresheaf $G$ in $[P,\Set]$ corresponds to the sheaf $F$ in $\cat{Sh}({\rm Up}(P))$ given by
\begin{equation}\label{eqn:sheaf}
F(U)=\lim_{\uparrow (p) \subset U} G(p) 
\end{equation}

We now have to check that this equivalence is natural in $P$. Let $f: P\rightarrow Q$ be a functor. This induces
\begin{itemize}
\item a geometric morphism
$$
[P,\Set] \pile{\lTo^{f^*} \\ \bot \\ \rTo_{\text{Ran}_f}} [Q,\Set]
$$
\item a frame map $F: {\rm Up}(Q)\rightarrow {\rm Up}(P)$ defined by $F(U)=f^{-1}(U)$, which in turn induces a geometric morphism
$$
\cat{Sh}({\rm Up}(P))\pile{\lTo^{F^*} \\ \bot \\ \rTo_{F_*}} \cat{Sh}({\rm Up}(Q)).
$$
\end{itemize}
We have to show that the following diagram, consisting of all right adjoints, commutes up to isomorphism:
\begin{diagram}[LaTeXeqno]\label{diag:isoiso}
\cat{Sh}({\rm Up}(P))	& \rTo^{F_*}	& \cat{Sh}({\rm Up}(Q))\\
\dTo^{\eta_P}				&				& \dTo_{\eta_Q}\\
[P,\Set]				& \rTo^{f_*}	& [Q,\Set]
\end{diagram}
First note that for a sheaf $E\in \cat{Sh}({\rm Up}(P))$, the induced copresheaf $\eta_PE$ is given by $\eta_P E(p)=E(\uparrow(p))$. For a sheaf $E\in \cat{Sh}({\rm Up}(P))$ we thus find:
$$(\eta_Q F_*E)(q)=E(f^{-1}(\uparrow q))
$$
and
$$
(f_*\eta_P E)(q) = ({\rm Ran}_f \eta_P E)(q) \simeq \lim_{q\rightarrow f(p)} (\eta_PE)(p) \simeq \lim_{p\in f^{-1}(\uparrow(q))} E(\uparrow(p))
.$$
Using the sheaf property of $E$, the last expression is isomorphic to $E(f^{-1}(\uparrow(q)))$. A similar argument now shows that diagram \ref{diag:isoiso} also commutes if we let the functors act on arrows. We then find that diagram \ref{diag:isoiso} commutes up to isomorphism.
\end{proof}
\noindent We thus see that each topos $[C(A),\Set]$ is localic with as underlying locale the Alexandrov space Up$(C(A))$. The proof made essential use of the construction of this Alexandrov space from a poset. We shortly discuss two properties of this construction that will be important in section \ref{sec:aqft}, both of which follow the equivalence between posets and Alexandrov locales discussed in more detail in \cite{car11}. 

First, we note that the functor Up is fully faithful if we take its codomain to be a certain subcategory of $\cat{Loc}$.

\begin{lemma}
The functor $\text{Up}: \cat{Pos}\rightarrow \cat{Loc}$ is fully faithful if we restrict its codomain to the locale morphisms whose inverse images have a left adjoint (in $\cat{Frm}$).
\end{lemma}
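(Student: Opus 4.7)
The plan is to handle faithfulness and fullness separately. Faithfulness is the easy half: if $f, g \colon P \to Q$ are distinct monotone maps then they disagree at some $p \in P$, and by antisymmetry of $\leq$ on $Q$ we may assume $g(p) \not\geq f(p)$; then $\uparrow\!f(p) \in \text{Up}(Q)$ is an up-set whose preimage under $f$ contains $p$ but whose preimage under $g$ does not, so the corresponding frame maps differ on this element.

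For fullness, given a locale morphism $h$ with inverse image $F = h^\ast \colon \text{Up}(Q) \to \text{Up}(P)$ and left adjoint $L \colon \text{Up}(P) \to \text{Up}(Q)$, I would try to define $f \colon P \to Q$ by declaring $f(p)$ to be the unique element with $L(\uparrow\!p) = \uparrow\!f(p)$, and then verify monotonicity and $f^{-1} = F$. The main obstacle, and the step I would carry out first, is showing that $L$ really does send each principal up-set to a principal up-set. The frame-theoretic handle I would use is that the principal up-sets in $\text{Up}(R)$ are precisely the nonempty completely join-prime elements: $\uparrow\!p$ is clearly such, since $p \in \bigcup_i U_i$ forces $p \in U_j$ for some $j$ and hence $\uparrow\!p \subseteq U_j$; conversely any nonempty completely join-prime $U$ equals $\bigcup_{q \in U}\uparrow\!q$, so $U \subseteq \uparrow\!q$ for some $q \in U$, giving $U = \uparrow\!q$.

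Complete join-primeness then transfers along the adjunction: if $L(\uparrow\!p) \subseteq \bigcup_i V_i$, adjunction gives $\uparrow\!p \subseteq F\bigl(\bigcup_i V_i\bigr) = \bigcup_i F(V_i)$ since $F$ preserves joins, so $\uparrow\!p \subseteq F(V_i)$ for some $i$, and adjunction returns $L(\uparrow\!p) \subseteq V_i$. Nonemptiness is also preserved, because $L(\uparrow\!p) = \emptyset$ would force $\uparrow\!p \subseteq F(\emptyset) = \emptyset$. Hence $L(\uparrow\!p) = \uparrow\!f(p)$ for a unique $f(p) \in Q$, and the remaining checks are routine: $p \leq p'$ gives $\uparrow\!p' \subseteq \uparrow\!p$, hence $\uparrow\!f(p') \subseteq \uparrow\!f(p)$, so $f(p) \leq f(p')$; and the chain $p \in F(U) \iff \uparrow\!p \subseteq F(U) \iff L(\uparrow\!p) \subseteq U \iff f(p) \in U$ shows $F = f^{-1}$, which is the desired factorisation.
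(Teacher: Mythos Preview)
Your proof is correct and follows essentially the same strategy as the paper: characterize the principal up-sets $\uparrow p$ by a purely frame-theoretic property preserved by the left adjoint $L$, then read off $f$ from $L(\uparrow p) = \uparrow f(p)$. The paper calls this property \emph{supercompactness} (every cover admits a single-element subcover) rather than complete join-primeness, but these coincide in $\text{Up}(P)$; your version is in fact slightly more careful, since you explicitly verify that $L(\uparrow p)$ is nonempty and spell out the monotonicity and $F = f^{-1}$ checks that the paper leaves implicit.
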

\begin{proof}
We observe that for any poset $P$, the products in $\mc{O}($Up$(P))$ are just the set-theoretical intersections. This means that for any functor $f: P\rightarrow Q$ between posets, the frame map Up$(f)=f^{-1}$ preserves all limits and therefore has a left adjoint: for any open $U$ of Up$(P)$, it is given by the smallest upwards closed subset of $Q$ containing the image $f(U)$.

It is obvious that Up is faithful: for two different functors $f,g: P\rightarrow Q$ one finds that $f^{-1}(\uparrow q)\neq g^{-1}(\uparrow q)$ for some element $q\in Q$. 

To show that Up is full, first note that an open $U$ of Up$(P)$ is of the form $\uparrow p$ precisely if it has the property that any cover of $U$ has a subcover consisting of one open (namely $\uparrow p$). We call such an open supercompact \cite{car11}. 
Now, let $F: \text{Up}(Q)\rightarrow\text{Up}(P)$ be a frame morphism and let $L$ be its left adjoint frame morphism. Then $L(\uparrow p)$ is supercompact: indeed, if $L(\uparrow p)\subset \bigvee_i U_i$, then $\uparrow p \subset \bigvee F(U_i)$. But since $\uparrow p$ is supercompact, we can replace this covering by a single open $F(U)$. Then $L(\uparrow)$ is covered by $U$.

In particular, $L(\uparrow p)=\uparrow q$ for a unique element $q\in Q$. We then define the functor $f: P\rightarrow Q$ by sending each $p$ to this corresponding $q$. It is now easily checked that $F$ is given by $f^{-1}$.
\end{proof}

As a direct corollary, we find that $[-,\Set]$ is full and faithful if we let its codomain be the category $\cat{Topos}_{ess}$ of toposes and {\it essential} geometric morphisms.

\begin{lemma}\label{cor:pshfunc}
The functor $[-,\Set]: \cat{Pos}\rightarrow\cat{Topos}_{ess}$ is full and faithful.
\end{lemma}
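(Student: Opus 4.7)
My plan is to combine the preceding lemma, which characterizes the image of $\text{Up}$ in $\cat{Loc}$, with the natural equivalence $[-,\Set]\simeq\text{Sh}\circ\text{Up}$ established in Lemma \ref{lem:locale}. The idea is to exhibit the desired functor, up to natural equivalence, as a composition of two fully faithful embeddings and then pin down how the morphism classes match.

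First I would recall the standard fact that the sheaf construction $\text{Sh}:\cat{Loc}\rightarrow\cat{Topos}$ is itself a fully faithful embedding, onto the subcategory of localic toposes. Combined with Lemma \ref{lem:locale}, this identifies $[-,\Set]$ (up to natural equivalence) with the composition of $\text{Up}:\cat{Pos}\rightarrow\cat{Loc}$ followed by $\text{Sh}:\cat{Loc}\rightarrow\cat{Topos}$. Faithfulness of $[-,\Set]$ is then immediate, since both factors are faithful.

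The second step is to match the morphism classes correctly. A geometric morphism $f:\text{Sh}(X)\rightarrow\text{Sh}(Y)$ between localic toposes is essential exactly when its inverse image $f^*$ admits a left adjoint; for localic toposes this translates, under the $\text{Sh}$-embedding, to the condition that the underlying frame map $f^{-1}:\mathcal{O}(Y)\rightarrow\mathcal{O}(X)$ admits a left adjoint in $\cat{Frm}$. In one direction, left Kan extension along the frame left adjoint produces the required left adjoint to $f^*$; in the other direction, any left adjoint to $f^*$ restricts to subterminals and so yields a left adjoint on frames. But the preceding lemma says precisely that $\text{Up}$ is fully faithful onto this subclass of locale morphisms. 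Composing with $\text{Sh}$ therefore gives that $[-,\Set]$ is full and faithful as a functor into $\cat{Topos}_{ess}$.

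The main obstacle I anticipate is the bookkeeping in the second step: justifying cleanly that the usual $\cat{Loc}/\cat{Topos}$ dictionary restricts to a bijection between frame maps with a left adjoint and essential geometric morphisms. The rest of the argument is a direct assembly of the preceding two lemmas.
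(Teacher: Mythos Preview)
Your overall strategy matches the paper's: reduce to the previous lemma on $\text{Up}$ by showing that essential geometric morphisms between the relevant sheaf toposes correspond exactly to frame maps with a left adjoint. The direction ``$f^*$ has a left adjoint $\Rightarrow$ $f^{-1}$ has a left adjoint'' is indeed immediate (restrict to subterminals), and the paper argues this the same way.

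The converse, however, is not mere bookkeeping and is \emph{false} for arbitrary localic toposes. For example, the terminal map $\field{Q}\to *$ yields a frame map $\{\emptyset,*\}\to\mc{O}(\field{Q})$ that certainly has a left adjoint (send any nonempty open to $*$), yet $\cat{Sh}(\field{Q})\to\Set$ is not essential, since $\field{Q}$ is not locally connected. So your assertion that ``for localic toposes this translates to the frame map admitting a left adjoint'' does not hold in general; the Alexandrov hypothesis is doing genuine work, not just tidying.

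The paper's proof uses this hypothesis explicitly. It factors $f^*$ as the composite of the inclusion $\cat{Sh}({\rm Up}(Q))\hookrightarrow[\mc{O}({\rm Up}(Q))^{op},\Set]$, the presheaf-level $\text{Lan}_{f^{-1}}$, and sheafification $\cat{a}$. The first two have left adjoints for general reasons (sheafification and $\text{Lan}_L$, respectively). The substantive step is that for an Alexandrov locale the sheafification functor $\cat{a}$ itself has a further left adjoint: under the equivalence $\cat{Sh}({\rm Up}(P))\simeq[P,\Set]$ from Lemma~\ref{lem:locale}, the inclusion of sheaves into presheaves on $\mc{O}({\rm Up}(P))$ is identified with a right Kan extension $\text{Ran}_i$, hence is the direct image of an essential geometric morphism. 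This is precisely the place where your ``left Kan extension along the frame left adjoint'' sketch needs the Alexandrov structure to descend from presheaves to sheaves, and it is the content you are missing rather than bookkeeping.
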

\begin{proof}
According to the previous lemma, we have to show that under localic reflection, the essential geometric morphisms $\cat{Sh}({\rm Up}(P))\rightarrow \cat{Sh}({\rm Up}(Q))$ correspond precisely to the frame maps $\mc{O}({\rm Up}(Q))\rightarrow \mc{O}({\rm Up}(P))$ with a left adjoint.

Indeed, if the inverse image $f^*: \cat{Sh}({\rm Up}(Q))\rightarrow \cat{Sh}({\rm Up}(P))$ preserves all limits, then so does the localic reflection $f^{-1}: \mc{O}({\rm Up}(Q))\rightarrow \mc{O}({\rm Up}(P))$, since it is given by the restriction of $f^*$ to the subterminals. We conclude that every essential geometric morphism between localic toposes is induced by a frame map with a left adjoint.

Conversely, let $f^{-1}: {\rm Up}(Q)\rightarrow {\rm Up}(P)$ be a frame map with left adjoint $L$. It induces a geometric morphism $(f_*\vdash f^*): \cat{Sh}({\rm Up}(P))\rightarrow \cat{Sh}({\rm Up}(Q))$, where the inverse image functor is given by
\begin{diagram}
\cat{Sh}({\rm Up}(Q))	& \rInto & [\mc{O}({\rm Up}(Q))^{op},\Set] & \rTo^{{\rm Lan}_{f^{-1}}} & [\mc{O}({\rm Up}(Q))^{op},\Set] & \rTo^{\cat{a}} & \cat{Sh}({\rm Up}(P))
\end{diagram}
and $\cat{a}$ is the sheafification functor. We claim that this inverse image functor has a left adjoint, so that $f^{-1}$ induces an essential geometric morphism.

First, the inclusion has the sheafification functor $\cat{a}$ as its left adjoint. By lemma \ref{lem:adj}, ${\rm Lan}_{f^{-1}}$ has a left adjoint given by ${\rm Lan}_{L}$. Finally, we note that the sheafification functor $\cat{a}$ has a left adjoint as well: indeed, let $i: P^{op}\rightarrow \mc{O}({\rm Up}(P))$ be the inclusion sending $p\in P$ to $\{x\geq p \}$. We now note that equation \ref{eqn:sheaf} is precisely the formula for the right Kan extension along $i$, so that Ran$_i: [P,\Set]\rightarrow [\mc{O}({\rm Up}(P))^{op},\Set]$ factors as
\begin{diagram}
[P,\Set] & \rTo^{\simeq} & \cat{Sh}({\rm Up}(P)) & \rInto & [\mc{O}({\rm Up}(P))^{op},\Set].
\end{diagram}
Since ${\rm Ran}_i$ is the direct image of an essential geometric morphism, so is the inclusion $\cat{Sh}({\rm Up}(P)) \rightarrow [\mc{O}({\rm Up}(P))^{op},\Set]$. This proves that the sheafification functor $\cat{a}$ has a left adjoint.
\end{proof}
\noindent We thus find that an essential geometric morphism $[P,\Set]\rightarrow [Q,\Set]$ must be induced by a functor $P\rightarrow Q$.

For our discussion in section \ref{sec:aqft}, it would greatly simplify things if the functor $[-,\Set]: \cat{Pos}\rightarrow\cat{Topos}$ would also preserve limits. However, if we consider its factorization as in lemma \ref{lem:locale}
\begin{diagram}
\cat{Pos} & \rTo^{\rm{Up}} & \cat{Top} & \rTo & \cat{Loc} & \rTo^{\text{Sh}} & \cat{Topos}
\end{diagram}
we note that the middle morphism $\cat{Top}\rightarrow\cat{Loc}$ does not preserve all limits, which means that we cannot guarantee $[-,\Set]$ to preserve all limits if its codomain is $\cat{Topos}$. We \emph{can} conclude that the functor $\cat{Pos}\rightarrow\cat{Top}$ we constructed in the proof of lemma \ref{lem:locale} preserves limits.

\begin{lemma}
The functor Up$: \cat{Pos}\rightarrow\cat{Top}$ from lemma \ref{lem:locale} preserves all limits.
\end{lemma}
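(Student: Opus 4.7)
The plan is to work directly with the explicit constructions of limits on both sides. Both $\cat{Pos}$ and $\cat{Top}$ are complete with limits computed on the underlying set, so for a diagram $D : J \to \cat{Pos}$ the underlying set of $\lim_j D(j)$ in $\cat{Pos}$ and of $\lim_j \mathrm{Up}(D(j))$ in $\cat{Top}$ coincide: both equal the set $L$ of compatible families $(x_j)_{j \in J}$ with $x_j \in D(j)$. The whole problem therefore reduces to showing that the two topologies one can put on $L$ agree, namely the Alexandrov topology of the componentwise partial order (obtained by applying $\mathrm{Up}$ after the $\cat{Pos}$-limit) and the initial topology generated by the projections $\pi_j : L \to \mathrm{Up}(D(j))$ (obtained from the $\cat{Top}$-limit).

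The easy inclusion, that the initial topology sits inside the Alexandrov topology, I would prove by observing that the projections $\pi_j$ are monotone by the componentwise construction of limits in $\cat{Pos}$, and that the preimage of an up-closed set under a monotone map is again up-closed. Hence each $\pi_j$ is continuous as a map $\mathrm{Up}(L) \to \mathrm{Up}(D(j))$, and the initial topology, being by definition the coarsest topology with this property, is contained in the Alexandrov topology on $L$.

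For the reverse inclusion I would exhibit any up-closed $V \subseteq L$ as the union of its principal up-sets $V = \bigcup_{x \in V} \uparrow x$ and use the identity $\uparrow x = \bigcap_{j} \pi_j^{-1}(\uparrow x_j)$, which expresses each principal up-set as an intersection of subbasic opens of the initial topology. For a product diagram this gives the product topology on $\mathrm{Up}(P) \times \mathrm{Up}(Q)$ back as the Alexandrov topology on $P \times Q$, and for an equalizer it matches the subspace topology on a subposet $S \subseteq P$ with the Alexandrov topology of the inherited order: any up-closed $U \subseteq S$ is recovered as the trace $S \cap (\uparrow_P U)$ of its up-closure in the ambient poset.

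The main obstacle is the last step for a general infinite shape $J$: the intersection $\bigcap_j \pi_j^{-1}(\uparrow x_j)$ is then infinite, while a priori only \emph{finite} intersections of subbasic opens are guaranteed to be basic opens of the initial topology. For finite products and for equalizers (and hence for all finite limits) the argument is immediate, but checking that the Alexandrov topology and the initial topology still coincide on infinite limits is the delicate point that needs the most care; a natural route is to identify arbitrary limits as intersections of equalizers inside a product and then exploit that the Alexandrov topology on each factor is closed under arbitrary intersections of opens, a feature one would want to show transfers to $L$.
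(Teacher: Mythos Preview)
Your approach for finite limits is correct and essentially coincides with the paper's: the paper also treats the terminal object, (binary) products, and equalizers separately, using exactly the identifications $\{x\geq p\}\times\{y\geq q\}=\{(x,y)\geq(p,q)\}$ and ``up-closed in $R$ equals trace on $R$ of an up-closed set in $P$'' that you describe. So on finite shapes there is nothing to choose between the two arguments.

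Your hesitation about the infinite case is not a lack of cleverness but a genuine obstruction: the lemma is in fact \emph{false} for infinite products, and the paper's remark that ``this argument can be applied to arbitrary products as well'' is an oversight. Take $P_i=\{0<1\}$ for $i\in\field{N}$. In $\mathrm{Up}\big(\prod_i P_i\big)$ the set $\uparrow(1,1,1,\dots)=\{(1,1,1,\dots)\}$ is open, since it is an up-set. In the product $\prod_i\mathrm{Up}(P_i)$ in $\cat{Top}$, however, every nonempty basic open constrains only finitely many coordinates and is therefore infinite; hence $\{(1,1,1,\dots)\}$ is not open. The identity map $\mathrm{Up}\big(\prod_i P_i\big)\to\prod_i\mathrm{Up}(P_i)$ is continuous (your easy inclusion) but not a homeomorphism, so $\mathrm{Up}$ does not preserve this infinite product. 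Your proposed rescue via ``arbitrary intersections of opens are open transfers to $L$'' cannot work, precisely because the product topology on an infinite product of nontrivial Alexandrov spaces is \emph{not} itself Alexandrov.

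Fortunately the paper only ever invokes this lemma for pullbacks (see the proof of Corollary~\ref{cor:surj}), so the correct statement---that $\mathrm{Up}$ preserves \emph{finite} limits---is what is actually needed, and both your argument and the paper's establish that without difficulty.
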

\begin{proof}
${\rm Up}$ obviously preserves the terminal object. For products we note that ${\rm Up}(P)\times {\rm Up}(Q)$ has a basis of opens consisting of opens of the form $\{x\geq p\}\times\{y\geq q\}=\{ (x,y)\geq (p,q)\}$ for all $p\in P$, $q\in Q$. These opens form precisely the opens on ${\rm Up}(P\times Q)$. This argument can be applied to arbitrary products as well.

The equalizer $R$ of a diagram $P\pile{\rTo\\ \rTo} Q$ consist precisely of the subposet of $P$ on which both functors agree. An open in $R$ is again an upwards closed set in $R$, which is the same thing as an upwards closed set in $P$, intersected with $R$. Thus Up$(R)$ inherits the subspace topology from $P$, which is precisely the topology on the equalizer $X$ of the diagram $\text{Up}(P)\pile{\rTo\\ \rTo} \text{Up}(Q)$ in $\cat{Top}$. So Up preserves equalizers as well.
\end{proof}

For our discussion in section \ref{sec:aqft}, it will therefore be easier to consider the Bohrification functor $B: \cat{CStar}\rightarrow \cat{RingTopos}$ as a functor into the full subcategory of $\cat{RingTopos}$ consisting of the ringed topological spaces spaces. We then describe Bohrification as a functor $B: \cat{CStar}_{rc}^{op}\rightarrow \cat{RingSp}$ where $\cat{RingSp}$ is the category of topological spaces with sheafs of rings on them.

\section{Quantum kinematics as internal classical kinematics}\label{sec:kin}
The previous section provided a way of assigning a `phase space' to a C*-algebra by means of a locale internal to some sheaf topos. We are now ready to discuss the relevance of this construction in physics and to give a precise formulation of the statement made in the introduction that quantum kinematics is classical kinematics internal to a suitable topos. To do so, we will first give a description of the kinematical structure of classical mechanics.

\subsection{Classical kinematics}\label{sec:class}
Recall that classical systems with finite degrees of freedom are geometrically described by Poisson manifolds. Such a manifold is called the phase space of a system and a point on this manifold describes a state of a specific particle: if we take the example of $T^*Q$ endowed with its canonical symplectic form, a state of a particle is described by its position on the configuration space $Q$ and its momentum. The smooth functions on the manifold are interpreted as the classical observables of the system and form a Poisson algebra.

\begin{definition}
A classical observable is a continuous function from a (Poisson) manifold $M$ to the real numbers.
\end{definition}
\noindent A manifold is determined by its real-valued functions and these functions define an algebra under the pointwise multiplication and addition. In particular, we see that a physical system can be identified with an \emph{algebraic} object.

Since quantum mechanics naturally encorporates a statistical viewpoint, we should extend the concept of a state as a point on a manifold to a probability distribution on a manifold.
\begin{definition}
A classical state is given by a \emph{probability valuation} on the manifold $M$. A probability valuation is a monotone map $\mu: \mc{O}(M)\rightarrow [0,1]$ satisfying
\begin{itemize}
\item $\mu(\emptyset)=0$,
\item $\mu(M)=1$
\item $\mu(U)+\mu(V)=\mu(U\cup V)+\mu(U\cap V)$ and
\item $\mu(\bigvee_i U_i)=\sup_{i\in I}\mu(U_i)$ for any directed family $(U_i)_{i\in I}$ of opens (i.e. a family with the property that for all $U_i$, $U_j$ there is a $U_k$ containing both).
\end{itemize}
\end{definition}
\noindent In particular, every point $x\in M$ defines such a valuation by assigning $U\mapsto 1$ if $x\in U$ and $U\mapsto 0$ otherwise. The states corresponding to a point in $M$ are called the pure states.

We mention that for a compact Hausdorff space $M$ (not necessarily a manifold) there is a one-to-one correspondence between probability valuations on $M$ and probability measures on the Borel subsets of $M$ \cite{hls09}. From a more algebraic perspective, one can define a state as a probability integral on the continuous function of $M$.

\begin{definition}
A probability integral on the self-adjoint part $C_{sa}$ of a commutative C*-algebra $C$ is a linear map $I: C_{sa}\rightarrow\field{R}$ such that $I(1)=1$ and $I(a)\geq 0$ if $a=bb^*$ for some $b$.
\end{definition}

The well-known Riesz-Markov theorem \cite{cs08} states that these two notions of a state coincide.
\begin{proposition}[Riesz-Markov]\label{prop:riesz}
For every compact Hausdorff space $M$ there is a bijection between the probability valuations on $M$ and the probability integrals on $\mc{C}(M)$.
\end{proposition}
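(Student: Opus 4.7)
The plan is to construct explicit maps in both directions and show that they are mutually inverse, using the abundance of continuous functions on a compact Hausdorff space provided by Urysohn's lemma.

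Given a probability valuation $\mu$, I would define the associated integral on a positive continuous function $f\in\mc{C}(M)_{sa}$ via the layer-cake formula
\[
I_\mu(f) = \int_0^{\|f\|_\infty} \mu\bigl(\{x\in M : f(x)>t\}\bigr)\,dt,
\]
which makes sense because $\{f>t\}$ is open and the integrand is monotone decreasing in $t$, hence Riemann-integrable. One then extends to all of $\mc{C}(M)_{sa}$ by writing $f = (f+c\cdot 1) - c\cdot 1$ for a suitable constant $c$. Unitality follows from $\mu(M)=1$, positivity is immediate from $\mu\geq 0$, and linearity in the positive cone reduces to additivity, which is the main calculation: it follows from modularity of $\mu$ together with the Scott-continuity axiom by approximating $f$ and $g$ from below by simple ``step'' functions whose superlevel sets live in a finite sublattice of $\mc{O}(M)$.

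Conversely, given a probability integral $I$ on $\mc{C}(M)_{sa}$, I would define a valuation on $\mc{O}(M)$ by
\[
\mu_I(U) = \sup\bigl\{ I(f) : f\in\mc{C}(M),\ 0\leq f\leq 1,\ \operatorname{supp}(f)\subset U \bigr\}.
\]
Monotonicity, $\mu_I(\emptyset)=0$, and $\mu_I(M)=1$ are straightforward; Scott-continuity on directed unions follows because any $f$ with compact support inside $\bigcup_i U_i$ is by compactness already supported in some single $U_i$. The modularity identity $\mu_I(U)+\mu_I(V) = \mu_I(U\cup V)+\mu_I(U\cap V)$ is the heart of the argument: given $f$ supported in $U\cup V$, one uses Urysohn's lemma to produce a continuous partition of unity $\{\varphi_U,\varphi_V\}$ subordinate to the cover $\{U,V\}$, writes $f = f\varphi_U + f\varphi_V$, and then exploits linearity of $I$ together with a careful approximation argument to match the two sides.

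Finally, I would verify that these constructions are mutually inverse. One direction, $\mu_{I_\mu}=\mu$, amounts to showing that $\mu(U)$ equals the supremum of $I_\mu(f)$ over $f$ supported in $U$; this uses the layer-cake definition together with the fact that indicator functions of opens are suprema of continuous functions supported in the open (again via Urysohn). The other direction, $I_{\mu_I}=I$, follows by checking agreement on positive functions through the layer-cake formula and then extending by linearity. The principal obstacle throughout is the interchange of suprema/integrals with the functional $I$ and the valuation $\mu$, which is exactly what compact Hausdorff regularity and Urysohn's lemma are needed for; without them there would be too few continuous functions to probe the opens faithfully.
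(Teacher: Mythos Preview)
The paper does not prove this proposition at all: it is stated as the ``well-known Riesz--Markov theorem'' with a reference to \cite{cs08} and then used as a black box, so there is no proof in the paper to compare your attempt against.

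Your outline is a correct sketch of one of the standard direct proofs, going via the layer-cake formula in one direction and a Daniell-type supremum over compactly supported test functions in the other. Two places deserve a bit more care if you want the argument to be complete rather than merely plausible. First, additivity of $I_\mu$ from the layer-cake formula is a genuine computation: one typically reduces to simple functions, but making the limit interchange rigorous uses Scott-continuity of $\mu$ in an essential way, and you should say explicitly how the approximation by step functions interacts with the valuation axioms. Second, your partition-of-unity argument for modularity of $\mu_I$ as written only yields subadditivity $\mu_I(U\cup V)\le \mu_I(U)+\mu_I(V)$; to get the full modular identity one usually argues instead with $\max(f_U,f_V)$ and $\min(f_U,f_V)$, using that $f_U+f_V=\max(f_U,f_V)+\min(f_U,f_V)$ together with the observation that the minimum is supported in $U\cap V$. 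With those two points tightened, the proof goes through.
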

\noindent
Thus, a classical state on a compact Hausdorff space $M$ is a probability valuation on $M$ or, equivalently, a probability integral on $\mc{C}(M)$. Thus, this proposition makes clear that a classical state has the algebraic description of a linear map on a commutative C*-algebra. Summarizing, we now find two equivalent ways of describing the kinematics of a classical system:

\begin{itemize}
\item In the \emph{geometric} picture, the observables of a system are the real-valued functions on a space $M$ and the states are valuations on $M$.
\item In the \emph{algebraic} picture, the observables of a system are given by the self-adjoint elements of some commutative C*-algebra $C$ and the classical states are given by probability integrals on $C$.
\end{itemize}
The Riesz-Markov \ref{prop:riesz} and Gelfand duality \ref{sec:gelf} theorems say that these two descriptions are equivalent.

Quantum mechanical kinematics is typically described from an algebraic perspective, which is similar but not exactly the same as the description of classical kinematics. A geometric picture similar to the one in classical physics is however absent, although some constructions like the GNS-construcion do give the quantum theory a somewhat geometrical flavour. The main point of `Bohrification' is that quantum kinematics does fit in any of the above two structures, either algebraic or geometric, if we internalize everything to our topos $[C(A),\Set]$.

\subsection{Quantum kinematics}\label{sec:quant}
We shortly state the well-known formulations of observables and states for a quantum system.

\begin{definition}
A quantum mechanical system is described by a noncommutative C*-algebra. A quantum observable is a self-adjoint element of this C*-algebra.
\end{definition}
From the discussion at the beginning of section \ref{sec:bohr}, we deduce the following definition of a quantum state:
\begin{definition}
A quantum state on a C*-algebra $A$ is a map $A\rightarrow\field{C}$ that is linear \emph{only} on each commutative subalgebra of $A$ and satisfies $\rho(a+ib)=\rho(a)+i\rho(b)$ for all self-adjoint $a,b$. Moreover, $\rho$ should be positive ($\rho(a^*a)\geq 0$) and unital ($\rho(1)=1$).
\end{definition}

With these definitions of quantum states and observables, we can give a precise meaning to the claim that quantum kinematics is internalized classical kinematics. We first notice that all the definitions from section \ref{sec:class} can be internalized to any topos. The following result by Spitters et. al. (\cite{hls09}) then realizes external quantum states as internal classical states:

\begin{lemma}\label{lem:states}
There is a natural bijection between the external states on $A$ and the internal probability integrals on $\ul{A}_{sa}$.
\end{lemma}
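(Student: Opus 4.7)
The plan is to unfold the meaning of ``internal probability integral on $\ul{A}_{sa}$'' in the presheaf topos $[C(A),\Set]$ and show that, stage by stage, it is precisely a compatible family of classical states on the commutative subalgebras of $A$, which in turn is the same data as an external quantum state.

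First I would fix the external description of the internal objects. Since we work in a presheaf topos over a poset, the internal C*-algebra $\ul{A}$ acts at stage $C$ as the commutative C*-algebra $C$ itself, with transition maps given by the subalgebra inclusions. The self-adjoint part $\ul{A}_{sa}$ is therefore the copresheaf $C\mapsto C_{sa}$. For the internal Dedekind reals $\ul{\field{R}}$ one uses the standard description in a presheaf topos over a poset (equivalently the locale ${\rm Up}(C(A))$ of lemma \ref{lem:locale}): an element of $\ul{\field{R}}$ at stage $C$ is a continuous real-valued function on $\uparrow(C)$, so in particular the constant function $1$ provides the internal unit.

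Next I would translate the three conditions in the definition of a probability integral using Kripke--Joyal semantics. An internal linear map $I:\ul{A}_{sa}\rightarrow\ul{\field{R}}$ is a natural transformation whose components $I_C:C_{sa}\rightarrow \ul{\field{R}}(C)$ are $\field{R}$-linear; naturality with respect to an inclusion $C\subseteq D$ states that $I_D$ restricted to $C_{sa}$, followed by restriction $\uparrow(C)\rightarrow \uparrow(D)$, equals $I_C$. The positivity clause $a=bb^*\Rightarrow I(a)\geq 0$, when forced at stage $C$, says that each $I_C$ is positive on $C_{sa}$; the normalization clause $I(1)=1$ says each $I_C$ sends the unit of $C$ to the constant function $1$. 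Since each $I_C$ then takes values in the locally constant functions on $\uparrow(C)$, I identify $I_C$ with a classical probability integral on $C_{sa}$, i.e.\ a state on the commutative C*-algebra $C$. Thus an internal probability integral on $\ul{A}_{sa}$ is exactly a family $\{I_C\}_{C\in C(A)}$ of states on the commutative subalgebras that is compatible under restriction.

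Finally I would establish the bijection with external quantum states. Given an external quantum state $\rho:A\rightarrow\field{C}$, define $I_C:=\rho|_{C_{sa}}$; linearity of $\rho$ on each commutative subalgebra, together with positivity and unitality, show each $I_C$ is a state, and compatibility under inclusions is automatic since all the $I_C$ arise by restricting a single map on $A$. Conversely, given a compatible family $\{I_C\}$, define $\rho$ on self-adjoint elements by $\rho(a):=I_{C^*(a)}(a)$ where $C^*(a)$ is the unital commutative C*-algebra generated by $a$, and extend to all of $A$ via $a=\tfrac12(a_++ia_-)$ with $a_\pm$ self-adjoint as in lemma \ref{lem:ffiso}; compatibility ensures linearity on every commutative subalgebra (two commuting self-adjoint elements lie in a common $C$, where $\rho=I_C$ is linear), and positivity and unitality transfer pointwise. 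The two constructions are mutually inverse, and naturality in $A$ is immediate from the constructions being defined by restriction.

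The main obstacle I anticipate is the careful handling of the internal Dedekind reals: one must justify that internal $\field{R}$-linearity of $I$ reduces to naturality together with ordinary $\field{R}$-linearity of each component $I_C$, rather than giving something strictly weaker. This is where the explicit presheaf-topos description of $\ul{\field{R}}$, and the fact that its sections over each supercompact open $\uparrow(C)$ form an $\field{R}$-vector space containing all locally constant functions, does the real work; everything else is a routine translation between the external quasi-linear description of quantum states and the stage-wise description of internal integrals.
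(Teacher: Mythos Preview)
Your approach is essentially the same as the paper's: both reduce an internal probability integral on $\ul{A}_{sa}$ to a compatible family $\{I_C\}_{C\in C(A)}$ of classical states on the commutative subalgebras, and then match this with the external definition of a quantum (quasi-)state by restriction and by the decomposition $z=x+iy$ into self-adjoint parts. The converse direction and the naturality check are also organized in the same way.

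The one substantive difference is your treatment of the internal reals. The paper does not use the internal Dedekind reals at all: it explicitly takes $\field{R}$ to be the \emph{constant} presheaf (cf.\ the sentence just before the proof, ``since $\field{R}$ is in both toposes just the constant presheaf with values in the reals''), so that $\ul{A}_{sa}(C)=C_{sa}\to\field{R}(C)=\field{R}$ is literally an ordinary state and the Kripke--Joyal unfolding is trivial. By contrast, you work with the Dedekind reals as continuous functions on $\uparrow(C)$ and must then argue that the components $I_C$ factor through the locally constant functions; you correctly flag this as the one nontrivial obstacle. Your route is more scrupulous about which internal real-number object is meant, at the cost of an extra reduction step; the paper's route is shorter because it fixes the constant presheaf as the target from the outset.
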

To see what the naturality means in this context, we mention that a *-homomorphism $h: A\rightarrow B$ defines
\begin{itemize}
\item a map from the (quasi-)states on $B$ to the states on $A$ by precomposing with $h$.
\item a map from the probability integrals on $\ul{B}_{sa}$ to the integrals on $\ul{A}_{sa}$. Recall from the previous section that $h$ induced a geometric morphism $(R_* \vdash R^*): [C(B),\Set]\rightarrow [C(A),\Set]$ and a *-homomorphism $R^*\ul{A}\rightarrow\ul{B}$. We then precompose an integral with this *-homomorphism to obtain an integral $R^*\ul{A}_{sa}\rightarrow\field{R}$. Applying $R_*$ to this map gives an integral $R_*R^*\ul{A}_{sa}\rightarrow\field{R}$, since $\field{R}$ is in both toposes just the constant presheaf with values in the reals. We finally find a *-homomorphism $\ul{A}\rightarrow R_*R^*\ul{A}$ which we can postcompose with this integral to find our integral $\ul{A}_{sa}\rightarrow\field{R}$.
\end{itemize}
With these two maps acting on the states and integrals, we can check that the given bijection is in fact natural.
\begin{proof}
This follows from the fact that any quasi-state is precisely determined by its action on the commutative subalgebras. In particular, given a quasi-state $\rho: A\rightarrow \field{C}$ one defines a state on $\ul{A}_{sa}$ by $\rho_C=\rho|_{C_sa}: C_{sa}=\ul{A}_{sa}(C)\rightarrow\field{R}(C)=\field{R}$, which clearly defines a positive, unital linear functional on $\ul{A}_{sa}$.

Conversely, a state $\rho: \ul{A}_{sa}\rightarrow\field{C}$ defines a quasi-state $\mu: A\rightarrow\field{C}$ by defining $\mu(x)=\rho_C(x)$ for all $x\in C_{sa}$.  The fact that $\rho$ is a natural transformation makes this assignment consistent. Then for any element $z=x+iy$ with $x,y\in A_{sa}$ we define $\mu(z)=\mu(x)+i\mu(y)$, which clearly gives a quasi-state on $A$.

Suppose we have a *-homomorphism $h: A\rightarrow B$. Then we find that the probability integral corresponding to $\rho\circ h: A\rightarrow\field{C}$ is given by $\rho_C = \rho\circ h|_{C_sa}$. On the other hand, the integral $\rho: \ul{B}_{sa}\rightarrow\field{R}$ is mapped to the map given pointwise by
\begin{diagram}
\ul{A}(C)	&\rInto	& R_*R^*\ul{A}(C)	&\rTo^{h|_{h^{-1}(h(C))}}	& R_*\ul{B}(C)	&\rTo^{\rho|_{h(C)}} &\field{R}(C)
\end{diagram}
or equivalently
\begin{diagram}
C	& \rInto	& h^{-1}(h(C))	& \rTo^h	& h(C)	& \rTo^\rho	& \field{R}.
\end{diagram}

\noindent Clearly this is the same natural transformation as the integral corresponding to $\rho\circ h: A\rightarrow \field{C}$, so the bijection from the theorem is indeed natural in $A$.
\end{proof}

Moreover, we have the observables sitting in the self-adjoint part of the internal C*-algebra $\ul{A}$.
We have therefore retrieved the algebraic picture of classical mechanics, internal to the topos $[C(A),\Set]$: quantum observables are self-adjoint elements of an internal commutative C*-algebra and quantum states are probability integrals on this internal algebra. We now extract from this algebraic description a geometric presentation, using the same methods as for classical physics:
\begin{itemize}
\item We construct a space $\ul{\Sigma}$ such that $\ul{A}\simeq\mc{C}(\ul{\Sigma})$, using the constructive version of the Gelfand duality mentioned in section \ref{sec:gelf}.
\item We identify the probability integrals on $\ul{A}$ with probability valuations on $\ul{\Sigma}$ using a constructive version of the Riesz-Markov theorem \ref{prop:riesz}.
\end{itemize}

\subsection{Geometric description of quantum kinematics}\label{sec:spec}
For each C*-algebra, we found a topos $[C(A),\Set]$ which contained a copresheaf of commutative C*-algebras $\ul{A}$. We can apply the constructive version of Gelfand duality to this internal algebra to obtain an internal locale called the internal spectrum $\ul{\Sigma}_A$ of $\ul{A}$. We will  show that this construction is in some sense functorial, so that the *-homomorphisms we allowed in section \ref{sec:bohr} induce certain maps between the internal locales. 

To do this, we first note that any internal locale in a topos $\cat{Sh}(X)$ can be represented as a bundle over $X$, as shown in \ref{prop:loc}. This bundle can be calculated for the algebras in question, as is done in \cite{hlsw10}, where they obtained the following result:

\begin{proposition}\label{lem:comp}
Let $P$ be a poset and $\ul{A}\in [P,\Set]$ a presheaf of commutative C*-algebras. The spectrum of $\ul{A}$ then corresponds to the bundle $\pi: \Sigma\rightarrow{\rm Up}(P)$ where $\Sigma$ is the topological space with underlying set $\Sigma=\coprod_{x\in P}\Sigma_{\ul{A}(x)}$ and $\Sigma_{\ul{A}(x)}$ is the classical Gelfand spectrum of $\ul{A}(x)$. We say that an subset $\mc{U}\subset\Sigma$ is open if for all $x,y\in P$ the following conditions are satified:
\begin{enumerate}
\item $\mc{U}_x := \mc{U}\cap\Sigma_{\ul{A}(x)}C$ is an open in the Gelfand spectrum $\Sigma_{\ul{A}(x)}$.
\item For $x\leq y$, one finds a map $\ul{A}(x\leq y)^*: \Sigma_{\ul{A}(y)}\rightarrow\Sigma_{\ul{A}(x)}$. Then one should have that $(\ul{A}(x\leq y)^*)^{-1}(\mc{U}_x)\subset\mc{U}_y$.
\end{enumerate}
The map $\pi$ is the canonical projection onto ${\rm Up}(P)$, sending $\Sigma_{\ul{A}(x)}$ to $x$.
\end{proposition}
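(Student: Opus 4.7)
The plan is to apply the general correspondence between internal locales in $\cat{Sh}(X)$ and external bundles over $X$ (proposition \ref{prop:loc}) to the particular case $X = \mathrm{Up}(P)$ with internal locale $\ul{\Sigma}_A$, and then to unpack what this bundle looks like concretely using the equivalence $\cat{Sh}(\mathrm{Up}(P)) \simeq [P, \Set]$ from lemma \ref{lem:locale}.

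First I would identify the underlying set of $\Sigma$ by computing the fibers of $\pi$. For each point $x \in P$, viewed as a point of the Alexandrov space $\mathrm{Up}(P)$, the smallest open containing $x$ is $\uparrow x$. Under the equivalence of lemma \ref{lem:locale}, the sheaf corresponding to the copresheaf $\ul{A}$ has value $\ul{A}(x)$ on $\uparrow x$, so the stalk of the internal commutative C*-algebra $\ul{A}$ at $x$ is simply $\ul{A}(x)$. Because constructive Gelfand duality commutes with stalks (as a geometric construction preserved by inverse image functors), the fiber of $\pi$ over $x$ is the classical Gelfand spectrum $\Sigma_{\ul{A}(x)}$, which produces the underlying set $\coprod_{x\in P} \Sigma_{\ul{A}(x)}$ with the claimed projection onto $x$.

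Second I would determine the topology on $\Sigma$. Internally, the frame $\mc{O}(\ul{\Sigma}_A)$ is presented by constructive Gelfand duality as being generated by formal basic opens $D_a$ indexed by self-adjoint elements $a \in \ul{A}_{sa}$. Externalizing this presentation along the bundle correspondence, a basic open at stage $x \in P$ corresponding to $a \in \ul{A}(x)_{sa}$ yields the set of pairs $(y,\sigma)$ with $y \geq x$ and $\sigma \in \Sigma_{\ul{A}(y)}$ such that $\sigma$ evaluates positively on the image of $a$. Because $\ul{A}(x \le y)$ is the inclusion of subalgebras $\ul{A}(x) \hookrightarrow \ul{A}(y)$, the intersection of this basic open with $\Sigma_{\ul{A}(y)}$ is exactly the preimage $(\ul{A}(x \le y)^*)^{-1}$ of the classical basic open of $\Sigma_{\ul{A}(x)}$ determined by $a$. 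This naturality forces condition (ii), and conversely any subset $\mc{U}$ satisfying (i) and (ii) is exhibited, fiberwise, as a join of such basic opens, hence open in the sense of the internal frame. Continuity of $\pi$ itself is then automatic since $\pi^{-1}(\uparrow x) = \coprod_{y \geq x} \Sigma_{\ul{A}(y)}$ trivially satisfies both (i) and (ii).

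The main obstacle I expect lies in the second step, namely matching the externalization of the internal frame $\mc{O}(\ul{\Sigma}_A)$ with the topology described by (i) and (ii). The subtle point is that internal joins and meets in $\mc{O}(\ul{\Sigma}_A)$ must be translated into external sheaf-theoretic joins over all stages $x \in P$; one has to verify that the resulting subsets of $\coprod_{x\in P} \Sigma_{\ul{A}(x)}$ are exactly those satisfying the monotonicity condition (ii), and that (ii) is neither strictly weaker (missing some internal relation) nor strictly stronger (imposing a constraint not present internally) than being such a join of basic opens.
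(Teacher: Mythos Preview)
The paper does not actually prove this proposition: it is quoted as a result of \cite{hlsw10}, with only the preceding sentence indicating that one applies proposition~\ref{prop:loc} to compute the bundle. So there is no in-paper proof to compare against; I can only assess your outline on its own terms and against what the cited reference does.

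Your two-step strategy (identify fibers via stalks, then externalize the internal frame presentation) is the right shape and is essentially what \cite{hlsw10} carries out. Two points deserve more care, however. First, the assertion that ``constructive Gelfand duality commutes with stalks as a geometric construction'' is not innocent: the spectrum is built from the lattice $L_A$ together with a covering relation (see the end of section~\ref{sec:gelf}), and while the lattice $L_A$ is geometric, completeness and the covering relation are more delicate. In \cite{hlsw10} this is handled by working directly with the explicit presentation of $\mc{O}(\ul{\Sigma}_A)$ as a copresheaf on $P$ rather than by invoking a general preservation principle; you should either do the same or give a precise statement of what is preserved and why.

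Second, your argument for the topology only shows one containment cleanly: the externalized basic opens satisfy (i) and (ii). For the converse you assert that any $\mc{U}$ satisfying (i) and (ii) is ``fiberwise a join of such basic opens, hence open in the sense of the internal frame''. But being a fiberwise join is not the same as being a global section of the internal frame: you must check that the family $x \mapsto \mc{U}_x$ is compatible with the restriction maps of the frame copresheaf $\mc{O}(\ul{\Sigma}_A)$, and that is exactly what condition (ii) is supposed to encode. Making this precise requires you to identify the restriction map $\mc{O}(\ul{\Sigma}_A)(x) \to \mc{O}(\ul{\Sigma}_A)(y)$ for $x \le y$ with the inverse-image map $(\ul{A}(x\le y)^*)^{-1}$ on opens of the classical spectra, which in turn rests on the functoriality of the lattice construction $A \mapsto L_A$. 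This is the substantive computation in \cite{hlsw10}, and your sketch does not yet supply it.
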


If we apply this proposition to the tautological functor $\ul{A}\in[C(A),\Set]$, we find a bundle $\Sigma_A\rightarrow{\rm Up}(C(A))$ which gives the external description of our quantum \emph{phase space}. In particular, we thus assign to each C*-algebra a phase space $\ul{\Sigma}_A$, which lives inside a topos instead of just $\Set$. The above proposition thus gives a way to assign to each C*-algebra a space, as we announced in the beginning of section \ref{sec:bohr}.

We should now assign maps of bundles to the arrows in $\cat{CStar}_{cr}$. For a commutativity reflecting *-homomorphism $h: A\rightarrow B$, we found a geometric morphism $R_*\vdash R^*: \cat{Sh}({\rm Up}(C(B)))\rightarrow \cat{Sh}({\rm Up}(C(A)))$; therefore, if we send such a *-homomorphism to a map of bundles, the natural choice of the map between the base spaces would be $R$. Using the notation from section \ref{sec:inloc}, a map $\Sigma_B\rightarrow \Sigma_A$ between the total spaces would now correspond to a map to the pullback $R^\sharp\Sigma_A$.

\begin{diagram}
		&			&			& 		& \Sigma_B\\
		&			&			& \ldTo	\ldTo(1,4)^{\pi_B}\ldTo(4,2)^{h^*}	&		\\
\Sigma_A	&	\lTo	&	R^\sharp\Sigma_A	& 		&		\\
\dTo^{\pi_A}&				&	\dTo		&		&		\\
{\rm Up}(C(A))	&	\lTo^{R}	&	{\rm Up}(C(B))	&		&
\end{diagram}

The map $\Sigma_B\rightarrow R^\sharp\Sigma_A$ is canonically defined from the information from the previous sections. Indeed, we see that
\begin{lemma}
The bundle $R^\sharp\Sigma_A\rightarrow{\rm Up}(C(B))$ is precisely the external spectrum of the C*-algebra $R^*\ul{A}$ in $\cat{Sh}({\rm Up}(C(B)))$.
\end{lemma}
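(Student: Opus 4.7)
The plan is to apply Lemma \ref{lem:comp} directly to the copresheaf $R^*\ul{A}$ on $C(B)$ and then identify the bundle it produces with the topological pullback $R^\sharp \Sigma_A$. By construction, $R^*\ul{A}$ sends $D \in C(B)$ to $\ul{A}(R(D)) = R(D)$, viewed as a commutative C*-subalgebra of $A$, and sends each relation $D \leq D'$ to the inclusion $R(D) \hookrightarrow R(D')$ obtained from functoriality of $R$. Lemma \ref{lem:comp} then realises the external spectrum of $R^*\ul{A}$ as the bundle over ${\rm Up}(C(B))$ whose total space is $\coprod_{D \in C(B)} \Sigma_{R(D)}$ and whose opens are the subsets $\mathcal{W}$ that are fibrewise open and compatible with the Gelfand duals $\Sigma_{R(D')} \to \Sigma_{R(D)}$ for every $D \leq D'$ in $C(B)$.

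On the other hand, set-theoretically $R^\sharp \Sigma_A = \{(D,\sigma) \in {\rm Up}(C(B)) \times \Sigma_A : \pi_A(\sigma) = R(D)\}$, and since $\pi_A^{-1}(R(D)) = \Sigma_{R(D)}$, this is canonically in bijection with $\coprod_{D \in C(B)} \Sigma_{R(D)}$. So the underlying sets of the two bundles agree, and what remains is to match the topologies. A subbasis for the pullback topology is obtained by restricting boxes $(\uparrow\! D_0) \times \mathcal{U}$ with $D_0 \in C(B)$ and $\mathcal{U}$ open in $\Sigma_A$, and I would verify that a subset of the common underlying set is a union of such restrictions precisely when it meets the two conditions of Lemma \ref{lem:comp}. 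One direction is immediate: the intersection of such a box with the pullback is fibrewise open, and the compatibility under $D \leq D'$ is inherited from the corresponding compatibility of $\mathcal{U}$ at $R(D) \leq R(D')$ together with the upward-closure of $\uparrow\! D_0$.

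The main obstacle is the converse: showing that every open $\mathcal{W}$ produced by Lemma \ref{lem:comp} actually arises as such a union. The cleanest route is to cover each point $(D,\sigma) \in \mathcal{W}$ by a box $(\uparrow\! D) \times \mathcal{U}$ with $\mathcal{U}$ a neighbourhood of $\sigma$ in $\Sigma_A$ chosen small enough that its restriction to every $\Sigma_{R(D')}$ with $D' \geq D$ remains inside $\mathcal{W}$; this is exactly what the fibrewise-openness and Gelfand-compatibility conditions of Lemma \ref{lem:comp} are designed to make possible, and the argument parallels the Alexandrov-topology step used in the proof of Lemma \ref{lem:locale}. Once this bidirectional matching of opens is in place, the two bundles over ${\rm Up}(C(B))$ coincide, yielding the claim.
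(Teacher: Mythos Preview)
Your approach is correct and follows the same route as the paper: apply Proposition~\ref{lem:comp} to the copresheaf $R^*\ul{A}: D\mapsto R(D)$ and identify the resulting bundle with the topological pullback $R^\sharp\Sigma_A$. The paper's proof simply asserts that the pullback $R^\sharp\Sigma_A$ carries ``the topology precisely as in proposition~\ref{lem:comp}'' without further justification, whereas you spell out the verification of the topology matching via subbasic boxes $(\uparrow D)\times\mathcal U$; your sketch of the converse direction (saturating $\mathcal W_D$ upward along the Gelfand restriction maps to produce an open $\mathcal U\subset\Sigma_A$) is exactly the construction needed, so your version is a more detailed rendering of the same argument rather than a different one.
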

\begin{proof}
Using the expression in proposition \ref{lem:comp} for $\Sigma_A$, we find that $R^\sharp\Sigma_A$ is precisely the set $\coprod_{D\in C(B)}\Sigma_{R(D)}$ with the topology precisely as in proposition \ref{lem:comp} and with the obvious projection map to ${\rm Up}(C(B))$. The computation of the spectrum of $R^*\ul{A}$ as in proposition \ref{lem:comp} will give us precisely the same bundle.
\end{proof}

The discussion in section \ref{sec:psh} gave us an internal *-homomorphism $R^*\ul{A}\rightarrow\ul{B}$, which corresponds by Gelfand duality to a continuous map $\ul{\Sigma}_B\rightarrow\ul{\Sigma}_{R^*A}$. One can check that the external version of this arrow (which is essentially the Gelfand dual of the *-homomorphism $h$ restricted to each commutative subalgebra) precisely gives the map from proposition \ref{lem:comp}. We thus conclude that
\begin{itemize}
\item Bohrification induces a functor $\Sigma: \cat{CStar}_{cr}^{op}\rightarrow\cat{Bund}$ that assigns to each C*-algebra a bundle of locales.
\end{itemize}

On the other hand, we will be mainly interested in the internal locale instead of the bundle. For such a formulation, we have to define a category of toposes with internal locales.

\begin{definition}
Let $\cat{SpTopos}$ be the category having
\begin{itemize}
\item as objects pairs $(\mc{E},L)$ where $\mc{E}$ is a topos and $L$ is a locale in $\mc{E}$
\item as arrows pairs consisting of a geometric morphism $(f_* \vdash f^*): \mc{E}\rightarrow\mc{F}$ and a map of locales $f_*L\rightarrow M$ in $\mc{F}$.
\end{itemize}
\end{definition}
\noindent
Observe that this definition makes sense since we saw that the direct image of a geometric morphism preserved locales. On the other hand, the map $\ul{\Sigma}_B\rightarrow R^\sharp\ul{\Sigma}_A$ corresponds via the adjunction mentioned at the end of section \ref{sec:inloc} to a continuous $R_*\ul{\Sigma}_B\rightarrow \ul{\Sigma}_A$. The previous discussion then makes clear that
\begin{itemize}
\item Bohrification gives a functor $B: \cat{CStar}_{cr}^{op}\rightarrow \cat{SpTopos}$ that assigns to each C*-algebra a phase space, internal to a topos.
\end{itemize}
We have therefore justified the claim we made in section \ref{sec:bohr} that Bohrification gave a way to assign (generalized) spaces to C*-algebras.

We can now again realize quantum kinematics as internal classical kinematics, where this time we give a geometrical description. Indeed, as shown in \cite{hls09}, observables can be realized as real-valued functions on the phase space $\ul{\Sigma}_A$:

\begin{proposition}
There is an injective map $\delta: A_{sa}\rightarrow\mc{C}(\ul{\Sigma}_A, \field{IR})$ with the property that $a\leq b$ iff $\delta(a)\leq\delta(b)$.
\end{proposition}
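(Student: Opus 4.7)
My plan is to construct $\delta$ via the \emph{daseinisation} of self-adjoint elements, following the standard scheme used in \cite{hls09}, and then read off injectivity from order preservation in both directions.

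First, for each self-adjoint $a \in A_{sa}$ and each context $C \in C(A)$, I would define an outer and inner approximation of $a$ inside $C$, namely
\[
\delta^o_C(a) \;=\; \bigwedge\{\,c\in C_{sa}\,:\, c\geq a\,\},\qquad
\delta^i_C(a) \;=\; \bigvee\{\,c\in C_{sa}\,:\, c\leq a\,\},
\]
taking suprema and infima in an enveloping von Neumann completion so that they always exist. By Gelfand duality, $\delta^o_C(a)$ corresponds to an upper-semicontinuous function and $\delta^i_C(a)$ to a lower-semicontinuous function on the classical Gelfand spectrum $\Sigma_{\ul{A}(C)}$, and the pair defines a continuous map into the interval-domain locale $\field{IR}$ on that fibre.

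Next I would verify that these fibrewise assignments patch into a continuous map on the total space $\Sigma_A$ of proposition \ref{lem:comp}. The key compatibility is that if $C\subset D$, then $\delta^o_C(a)\geq \delta^o_D(a)$ and $\delta^i_C(a)\leq \delta^i_D(a)$ as elements of $D^{**}$, which under the restriction map $\ul{A}(C\leq D)^*:\Sigma_{\ul{A}(D)}\to\Sigma_{\ul{A}(C)}$ translates into the openness condition (ii) of proposition \ref{lem:comp} for the preimages of basic opens of $\field{IR}$. Externally this yields a continuous $\ul{\Sigma}_A\to \field{IR}$; internally this is the promised element of $\mc{C}(\ul{\Sigma}_A,\field{IR})$.

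For the ordering, preservation in one direction is immediate: if $a\leq b$ then every $c\in C_{sa}$ with $c\geq b$ also satisfies $c\geq a$, and dually for the inner approximation, so $\delta^o_C(a)\leq \delta^o_C(b)$ and $\delta^i_C(a)\leq \delta^i_C(b)$ for every $C$. For the converse, I would evaluate at the distinguished context $C_a := C^*(1,a)$ generated by $a$ and the unit. Since $a\in (C_a)_{sa}$, the approximations collapse to $\delta^o_{C_a}(a)=\delta^i_{C_a}(a)=a$, so $\delta(a)$ restricted to the fibre over $C_a$ is the sharp function corresponding to $a$ itself. The hypothesis $\delta(a)\leq\delta(b)$ then forces $a\leq \delta^i_{C_a}(b)$, and since $\delta^i_{C_a}(b)\leq b$ by construction, we conclude $a\leq b$.

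Injectivity is then a formal consequence: $\delta(a)=\delta(b)$ implies $\delta(a)\leq\delta(b)$ and $\delta(b)\leq\delta(a)$, hence $a\leq b$ and $b\leq a$, so $a=b$.

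The main obstacle I anticipate is step two, the patching across contexts: one must check that the pair $(\delta^i_C(a),\delta^o_C(a))_{C\in C(A)}$ really assembles into a globally continuous map to $\field{IR}$ in the Alexandrov/bundle topology of proposition \ref{lem:comp}, rather than only a fibrewise continuous one. This requires the monotonicity of $\delta^o$ and $\delta^i$ under inclusion of contexts to match exactly the openness condition built into the total space $\Sigma_A$, and is the technical heart of the construction.
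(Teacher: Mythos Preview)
The paper does not actually supply its own proof of this proposition: it is stated with the attribution ``as shown in \cite{hls09}'' and no argument is given in the text. Your proposal via daseinisation --- inner/outer approximations in each context, patched across the Alexandrov bundle of proposition~\ref{lem:comp}, with order-reflection read off at the abelian context $C_a=C^*(1,a)$ --- is precisely the construction carried out in \cite{hls09} (building on the D\"oring--Isham programme), so your approach is correct and coincides with the source the paper defers to. The one point worth tightening is your appeal to ``an enveloping von Neumann completion'' to guarantee existence of the infima and suprema defining $\delta^o_C(a)$ and $\delta^i_C(a)$: in the general C*-setting of \cite{hls09} this is handled more carefully (via the lattice $L_A$ and the frame of the spectrum rather than spectral projections), so you should either restrict to the von Neumann case or follow their localic formulation to make that step rigorous.
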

\noindent
Here $\field{IR}$ is the so called interval domain, which in $\Set$ corresponds to the set $\field{IR}$ of compact intervals $[a,b]$, which has as basic opens the subsets $\{[a,b]\subset (x,y) \}$ for every open interval $(x,y)$. In the presheaf category $[C(A),\Set]$ the interval domain $\field{IR}$ is the constant presheaf assigning to each point the interval domain in $\Set$.

On the other hand, we have already shown in lemma \ref{lem:states} that quantum states correspond to probability integrals on $\ul{A}$. We use a constructive version of the Riesz-Markov theorem \ref{prop:riesz} to identify these probability integrals on $\ul{A}$ with probability valuations on $\ul{\Sigma}_A$:

\begin{lemma}[\cite{cs08}]
The Riesz-Markov theorem \ref{prop:riesz} holds in any topos $\mc{E}$.
\end{lemma}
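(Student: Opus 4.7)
The plan is to reduce the statement to an instance of the general transfer principle: any theorem proved in intuitionistic higher-order logic (without choice or excluded middle) is automatically valid in the internal language of any (elementary) topos, via the Kripke--Joyal forcing semantics. So the lemma amounts to checking that Riesz--Markov admits a constructive proof in the base category $\Set$ equipped with intuitionistic logic, which is then immediately interpretable in any $\mc{E}$.

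The first step is therefore to formulate the theorem internally. All the relevant notions --- a compact completely regular locale $L$, its internal algebra $\mc{C}(L,\field{R})$ of continuous real-valued maps, probability valuations $\mu:\mc{O}(L)\to [0,1]$, and probability integrals $I: \mc{C}(L,\field{R})\to\field{R}$ --- can be defined by geometric formulas in the internal language of $\mc{E}$: valuations are monotone maps on the frame satisfying the modularity and directed-suprema axioms already spelled out, and probability integrals are linear positive unital functionals, both notions which make sense internally with no recourse to classical logic. Thus the statement ``Riesz--Markov holds for the locale corresponding to a commutative C*-algebra'' is expressible as a proposition in the internal language.

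The second and main step is to supply a constructive proof in $\Set$. From an integral $I$, one defines the associated valuation on a basic open $U\subseteq \Sigma$ by
\[
\mu(U) \;=\; \sup\{\, I(a) \mid 0\leq a\leq 1,\ \mathrm{coz}(a)\subseteq U\,\},
\]
where the cozero sets are constructed point-freely via the spectrum. Monotonicity, modularity, and directed-suprema continuity follow from linearity and positivity of $I$ together with the fact that the spectrum is completely regular (so every open is the directed supremum of cozero sets it contains). Conversely, from a valuation $\mu$ one reconstructs the integral by a constructive Daniell--Lebesgue procedure on simple functions and passing to suprema, using that $\mc{C}(\Sigma,\field{R})$ is the uniform completion of the simple functions in the constructive sense. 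The detailed constructive arguments are exactly those of \cite{cs08}, which is why the lemma cites that source.

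The last step is to invoke the transfer principle: the proof just sketched is entirely intuitionistic (no law of excluded middle, no choice, no impredicative constructions beyond those available in any topos), and every inference step is stable under the Kripke--Joyal interpretation in $\mc{E}$. Consequently the bijection exists in $\mc{E}$ and is natural. The main obstacle in this plan is purely the verification of constructivity of the classical proof --- in particular that the spectrum is built as a locale rather than a point set, that ``$\sup$'' in the definition of $\mu(U)$ is interpreted as a join in $[0,1]$ rather than a choice of approximating sequence, and that the Daniell extension uses only directed suprema. Once those replacements are in place (as carried out in \cite{cs08}), topos-validity is automatic.
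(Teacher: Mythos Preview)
Your proposal is correct and essentially aligned with the paper's own treatment: the paper does not supply a proof at all but simply cites \cite{cs08}, relying implicitly on precisely the transfer principle you spell out---that a theorem proved constructively (intuitionistically, point-free, without choice) is valid in the internal language of any topos. Your write-up makes explicit what the paper leaves to the citation; there is no substantive difference in approach.
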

\noindent In particular, any integral $\ul{A}\rightarrow\field{R}$ gives a valuation $\mc{O}\ul{\Sigma}\rightarrow [0,1]_l$, where $[0,1]_l$ are the lower reals instead of the Dedekind reals: a lower real is described by an open interval in the rationals that is downwards closed. In $\Set$ these are just the open intervals $(-\infty, r)$ for any real number $r$, so in $\Set$ the lower reals and the Dedekind reals coincide. However, this need not hold in an arbitrary topos.

We can say that observables correspond to continuous maps $\Sigma\rightarrow\field{IR}$ and that states correspond to valuations $\mc{O}\ul{\Sigma}\rightarrow [0,1]_l$. Thus, due to the constructive Gelfand and Riesz-Markov theorems, the quantum kinematical structure in $\Set$ corresponds to a geometric picture of classical kinematics in $[C(A),\Set]$:

\begin{itemize}
\item There is a phase space $\ul{\Sigma}$ such that quantum observables give continuous, real-valued functions $\ul{\Sigma}\rightarrow\field{IR}$.
\item Quantum states are then given by valuations $\mc{O}(\ul{\Sigma})\rightarrow [0,1]_l$.
\end{itemize}

This summary shows that Bohrification builds a strong parallel between quantum and classical kinematics. However, the situation is not precisely the same: the mathematical concept of a locale in a topos does not share all the nice properties that a normal topological space has. In particular, there is a lack of `pure states' in the classical sense, because the internal locale does not have any points (that is, locale maps from the point $*$). Indeed, recall from section \ref{sec:class} that in classical physics, a pure state is described by a point of the phase space. The fact that the internal phase space has no global points is now a direct consequence of the no-go theorem by Kochen and Specker (cf. \cite{ks67}) discussed in section \ref{sec:bohr}.

\begin{proposition}
There is a bijective correspondence between KS-maps on $A$ and (internal) points of $\ul{\Sigma}$ in $[C(A),\Set]$.
\end{proposition}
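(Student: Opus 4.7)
The plan is to use the external description of the internal locale $\ul{\Sigma}$ given in Proposition \ref{lem:comp} and unfold what a global point amounts to in those terms. An internal point of $\ul{\Sigma}$ in $[C(A),\Set] \simeq \cat{Sh}(\text{Up}(C(A)))$ corresponds externally to a continuous section of the bundle $\pi: \Sigma \rTo \text{Up}(C(A))$, where $\Sigma = \coprod_{C\in C(A)}\Sigma_C$ and $\Sigma_C$ is the classical Gelfand spectrum of $C$. So the first step is to translate ``internal point'' into ``continuous section of $\pi$''.

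Next I would unpack the continuity of such a section $s$. Since $\pi \circ s = \text{id}$, $s$ picks for each commutative subalgebra $C \in C(A)$ a character $\sigma_C : C \rTo \field{C}$, i.e.\ a point of $\Sigma_C$. The topology on $\Sigma$ from Proposition \ref{lem:comp} has a basis consisting of opens of the form $\mc{U}_{C,V} := \coprod_{D \geq C} (\ul{A}(C\leq D)^*)^{-1}(V)$ for $V$ open in $\Sigma_C$, which are precisely the smallest opens containing a given open $V \subset \Sigma_C$. The preimage $s^{-1}(\mc{U}_{C,V})$ must be upwards closed in $\text{Up}(C(A))$. Taking $V$ to be an open neighbourhood of $\sigma_C$ in $\Sigma_C$, upwards closure forces $\sigma_D|_C \in V$ for every $D \geq C$ and every such $V$; since $\Sigma_C$ is Hausdorff this means $\sigma_D|_C = \sigma_C$. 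Conversely, any coherent family $(\sigma_C)_{C \in C(A)}$ satisfying $\sigma_D|_C = \sigma_C$ for all $C \subseteq D$ yields a continuous section, since one verifies immediately that $s^{-1}$ of each basic open is upwards closed.

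Finally I would match such coherent families with KS-maps. Given a KS-map $\phi: A \rTo \field{C}$, the restriction $\sigma_C := \phi|_C$ is a character of $C$ by definition, and for $C \subseteq D$ we have $\sigma_D|_C = \phi|_C = \sigma_C$, giving a coherent family. Conversely, a coherent family determines a well-defined map on $N(A) = \bigcup_{C} C$ (any two subalgebras containing a normal element $a$ share the subalgebra generated by $a$, on which the two restrictions agree by coherence), which extends to a KS-map on all of $A$, and the two constructions are mutually inverse since a KS-map is determined up to the equivalence used here by its restrictions to commutative subalgebras.

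The main obstacle I expect is the second step: carefully verifying that the topology from Proposition \ref{lem:comp} forces exactly the compatibility $\sigma_D|_C = \sigma_C$ — in particular ruling out more exotic sections that might, a priori, satisfy the upwards closure condition without coming from strict restriction. The key technical ingredient is that each Gelfand spectrum $\Sigma_C$ is compact Hausdorff, so distinct characters can be separated by opens; this is what converts ``for every neighbourhood $V$ of $\sigma_C$, also $\sigma_D|_C \in V$'' into genuine equality.
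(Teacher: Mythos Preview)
Your argument is correct, but it takes a different route from the paper's proof. The paper works on the algebraic side: it invokes the internal Gelfand duality (Lemma~\ref{lem:point}) to identify points of $\ul{\Sigma}$ with internal *-homomorphisms $\ul{A}\to\field{C}$, and then observes that such an internal homomorphism in the presheaf topos $[C(A),\Set]$ is simply a natural transformation, i.e.\ a compatible family of characters $\sigma_C:C\to\field{C}$, which is exactly (the data of) a KS-map. Your approach instead works on the geometric side, using the external bundle description of Proposition~\ref{lem:comp} and unpacking what a continuous section of $\pi:\Sigma\to\text{Up}(C(A))$ looks like. The paper's route is shorter because the hard work has been packaged into Lemma~\ref{lem:point}; yours is more self-contained and makes the topology explicit, at the cost of the Hausdorffness argument needed to force strict compatibility $\sigma_D|_C=\sigma_C$. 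Both are perfectly valid and arrive at the same coherent-family description.

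One small point: as you noticed in your last paragraph, a KS-map as literally defined is a map on all of $A$, while a coherent family of characters only constrains values on normal elements. The paper glosses over this (indeed, its proof silently switches to $A_{sa}\to\field{R}$), and your phrase ``determined up to the equivalence used here'' is the honest way to handle it; you might state explicitly that the bijection is with KS-maps modulo their values on non-normal elements, or equivalently with the underlying coherent families.
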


\begin{proof}
Suppose we have a KS-map $v: A_{sa}\rightarrow\field{R}$. This defines a natural transformation $v: \ul{A}_{sa}\rightarrow\field{R}$ by letting $(v_C)=v|_{C_{sa}}: C_{sa}\rightarrow\field{R}$. In particular, this defines a *-homomorphism $\ul{A}\rightarrow\field{C}$ (since $\ul{A}=\ul{A}_{sa}\oplus i\ul{A}$). As is also the case for the classical Gelfand spectrum, the points of the spectrum of $\ul{A}$ are precisely the *-homomorphisms to $\field{C}$. We thus see that the points of the spectrum of $\ul{A}$ correspond precisely to the valuations on $A$.
\end{proof}
\begin{corollary}[Kochen-Specker]
If $A=B(H)$ with dim$(H)>2$, then $\ul{\Sigma}$ has no points.
\end{corollary}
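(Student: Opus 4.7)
The plan is to obtain this as an immediate consequence of the preceding proposition combined with the Kochen--Specker theorem (Proposition \ref{prop:ks}) cited at the start of Section \ref{sec:bohr}. Concretely, the preceding proposition establishes a bijection between KS-maps $A_{sa}\to\field{R}$ and internal points of $\ul{\Sigma}$ in $[C(A),\Set]$. So showing that $\ul{\Sigma}$ has no points amounts to showing that no KS-maps exist on $A$.

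Specializing to $A=B(H)$ with $\dim(H)>2$, the classical Kochen--Specker theorem (Proposition \ref{prop:ks}) asserts precisely that there are no KS-maps on $B(H)$. Under the bijection of the preceding proposition, this translates to the statement that $\ul{\Sigma}$ admits no internal points, which is exactly what the corollary claims. One only needs to take a moment to check that the notion of ``point'' matches: an internal point of $\ul{\Sigma}$ is a locale map from the terminal locale, and via constructive Gelfand duality this corresponds to an internal $*$-homomorphism $\ul{A}\to\field{C}$, which by the preceding proposition is extracted from (and in turn assembles into) a KS-map on $A$.

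Since both ingredients are already in place, the only real obstacle would have been the identification of internal points with KS-maps, but that was handled in the previous proposition. Therefore the proof of the corollary is essentially a one-line combination: assume for contradiction that $\ul{\Sigma}$ had a point; transport this point through the bijection of the preceding proposition to obtain a KS-map $B(H)_{sa}\to \field{R}$; and invoke Proposition \ref{prop:ks} to derive a contradiction.
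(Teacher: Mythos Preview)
Your proposal is correct and follows exactly the same approach as the paper: the paper's proof is the single line ``This is just a reformulation of the original Kochen-Specker theorem~\ref{prop:ks}, using the previous lemma,'' which is precisely the combination of the preceding bijection with Proposition~\ref{prop:ks} that you spell out.
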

\begin{proof}
This is just a reformulation of the original Kochen-Specker theorem \ref{prop:ks}, using the previous lemma.
\end{proof}

This shows that quantum kinematics looks like classical kinematics, internal to a topos. The only difference remains that the internal phase space contains no points and as a consequence, that quantum mechanics has no pure states in the classical sense.

\section{Bohrification of a local net}\label{sec:aqft}
The previous section summarized how Bohrification identified quantum kinematics with the classical kinematics on some internal phase space. We will now try to add a dynamical flavour to this picture using the framework of algebraic quantum field theory (see for a mathematical account e.g.  \cite{hal06}).

The idea of AQFT is to characterize a quantum field theory on a spacetime $X$ by the assignment of algebras of local observables to each open subset of $X$. These observables present what can be measured by performing an experiment within that certain region of space and time. The fact that within more space and time, certainly more things can be measured, results in saying that this assignment gives a copresheaf $A: \mc{O}(X)\rightarrow \cat{CStar}_{inc}$. Finally, one has to impose some kind of locality condition on this copresheaf, since any relativistic quantum theory should respect the causal structure of spacetime. We will therefore shortly describe the causal structure of spacetime and state what axioms should then be satisfied by a net of observables.

\subsection{Local nets}
The mathematical structure of a spacetime is given by a Lorentzian manifold. Recall that a Lorentzian manifold is a manifold with a metric of signature $(1,p)$, so that at each point the tangent space decomposes in a timelike and spacelike part, separated by the light cone of tangent vectors whose length is $0$. We will keep the $(n+1)$-dimensional Minkowski space $\field{R}^{1+n}$ in mind as a specific example.

We can then say that a curve is, for example, timelike if its tangent vectors are all timelike. Two points $x,y$ in a Lorentzian space $X$ are called \emph{spacelike separated} if there is no timelike or lightlike curve from $x$ to $y$. In Minkowski space this would mean that the line from $x$ to $y$ is spacelike. From a physical point of view, this means that no signals can be send from $x$ to $y$ or the other way around. Two subsets $U,V$ of $X$ are spacelike separated if all $x\in U$, $y\in V$ are spacelike separated.

A connected hypersurface $S$ in a Lorentzian space is said to be a \emph{Cauchy surface} if every timelike or lightlike curve intersects $S$ in precisely one point. This condition formalizes the idea that the points on $S$ give a space at one specific time. In Minkowski space, any plane spanned by only spacelike vectors forms a Cauchy surface.

To formulate a locality condition on the copresheaf $A: \mc{O}(X)\rightarrow \cat{CStar}_{inc}$, we will restrict its domain to a subcategory of opens that nicely fits the causal structure of spacetime sketched above. In particular, we will define it on the set of causal complete opens of a Lorentzian space $X$.
\begin{definition}
For any open $O$ in a Lorentzian space, let $O'$ be its causal complement 
$$
O'=\{ x\in X | x \text{ is spacelike separated from O} \}^{int}.
$$
Let $\mc{V}(X)$ be the set of connected causally complete opens, i.e. opens such that $O''=O$.
\end{definition}
In two dimensions, this set consists of causal diamonds and wedges. Under the inclusions, $\mc{V}(X)$ forms a poset that has all limits and colimits. It then has the natural structure of a site if we say that $U$ is covered by $U_i\rightarrow U$ iff $\bigcup U_i=U$.\\

An algebraic quantum field theory will now be given by a copresheaf $\mc{V}(X)\rightarrow\cat{CStar}_{inc}$ on the category of causally complete opens of a Lorentzian space $X$. We will call such a copresheaf a \emph{net} on $X$. A quantum field theory will now be described by a net that matches the causal structure of $X$, a so-called local net.

\begin{definition}
A local net is a copresheaf $A: \mc{V}(X)\rightarrow \cat{CStar}_{inc}$ with the property that for any two spacelike separated opens $O_1$ and $O_2$ in $X$, the algebras $A_{O_1}$ and $A_{O_2}$ mutually commute in $A_{O_1\vee O_2}$.
\end{definition}
This definition basically says that relativistic independence (i.e. regions being spacelike separated) implies quantum mechanical independence (i.e. that two observables from separated regions commute). All observables are contained in a large C*-algebra $\mc{A}=A_{X}$, so we can assume that all algebras are subalgebras of $\mc{A}$. 

On top of these basic axioms, one often considers local nets that satisfy some additional conditions, like equivariance under the action of the Poincar\'e group if one considers nets on Minkowski space, or the existence of a vacuum sector (see \cite{hal06}). We will impose two conditions on our net that often arise in discussions of AQFT (see for instance \cite{ha96}): we will require our nets to be both \emph{additive} and \emph{strongly local}.

\begin{definition}[\cite{gl06, mu98}]
Let $A:\mc{V}(X)\rightarrow\cat{CStar}_{inc}$ be a net. We say that $A$ is additive if for any two spacelike separated opens $O_1$ and $O_2$ such that $\overline{O_1}\cap\overline{O_2}=\{*\}$, one has that $A(O_1\vee O_2)=A(O_1)\vee A(O_2)$ where the latter is taken in $\mc{A}=A(X)$.
\end{definition}

\begin{definition}
We say that a net $A:\mc{V}(X)\rightarrow\cat{CStar}_{inc}$ is strongly local if it is local and has the property that for any two spacelike separated opens $O_1$ and $O_2$, and any pair of commutative subalgebras $C_1\subset A(O_1)$ and $C_2\subset A(O_2)$, one finds for the algebra $C_1\vee C_2\subset A(O_1\vee O_2)$ generated by them that $(C_1\vee C_2)\cap A(O_1)=C_1$ and $(C_1\vee C_2)\cap A_2=C_2$.
\end{definition}

Being additive in some sense expresses the local character of AQFT: the observables in two small opens $O_1$ and $O_2$ suffice to describe everything that can be observed in the larger open $O_1\vee O_2$ generated by them. For nets on $\field{R}^2$, this condition follows from the so-called \emph{split-property for wedges}, as discussed in \cite{gl06}. 

Strong locality is precisely the kind of locality condition we need in theorem \ref{prop:main}. This condition holds for nets satisfying \emph{Einstein causality}, which is one of the axioms for an AQFT imposed in \cite{bf09}. 
\begin{definition}
A net $A:\mc{V}(X)\rightarrow\cat{CStar}_{inc}$ is called Einstein causal if for any two spacelike separated opens $O_1$ and $O_2$, one has that the inclusions $A(O_1)\subset A(O_1\vee O_2)$ and $A(O_2)\subset A(O_1\vee O_2)$ factor over the tensor product
\begin{diagram}
 & & A(O_1\vee O_2) & &\\
& \ruInto & \uInto & \luInto & \\ 
A(O_1) & \rInto & A(O_1)\otimes A(O_2) & \lInto & A(O_2).
\end{diagram}
\end{definition}
\noindent An Einstein causal net is automatically local, since $A(O_1)$ and $A(O_2)$ commute in their tensor product, so also in $A(O_1\vee O_2)$. Furthermore, an Einstein causal net is indeed stronlgy local: in this case one has for two commutative subalgebras $C_1$ and $C_2$ that $C_1\vee C_2=C_1\otimes C_2$, so that $(C_1\otimes C_2)\cap A_1=C_1$.

In \cite{bf09} it is argued that Einstein causality expresses that the subsystems localized at $O_1$ and $O_2$ are completely independent: ordinary locality only states that it does not matter whether you first do a measurement in $O_1$ and then one in $O_2$, or the other way around. Einstein causality adds to this that the subsystems localized at $O_1$ and $O_2$ are even statistically independent: a state $\rho_1$ on the system at $O_1$ and a state $\rho_2$ on the system at $O_2$ give a product state $\rho_1\otimes\rho_2$ on composite system. Einstein causality thus expresses that there is no way for the subsystem localized at $O_1$ to interact with the subsystem localized at the spacelike separated region $O_2$.\\
\\
\noindent
In the next section, we will need to restrict our local nets to spatial hypersurfaces. If one fixes a Cauchy surface $S\rInto X$, one can consider the restriction of a local net to $S$, which is defined as follows:
\begin{definition}
Let $S\rInto X$ be a Cauchy surface and $A$ a local net. To each connected open $U\subset S$ one can assign the smallest connected causally complete open $O_U\subset X$ that contains $U$. 

We then define the restriction of $A$ to $S$ to be the net $A|_S: \mc{V}(S)\rightarrow\cat{CStar}_{inc}$ that sends each connected open $U\subset S$ to the algebra $A|_S(U)=A(O_U)$.
\end{definition}
\noindent We can indeed construct this net since for $U\subset V\subset S$ one has that $O_U\subset O_V$, so that one finds an inclusion $A(O_U)\rInto A(O_V)$. If we consider a spacelike line $S$ in $\field{R}^{1+1}$, an interval $I\in\mc{V}(S)$ corresponds to the causal diamond in $\field{R}^{1+1}$ that has $I$ as its spacelike axis.

In Minkowski space $\field{R}^{1+n}$, a direct consequence of this construction is that the restriction $A|_S$ of a local net is local in the following sense: for $U$ and $V$ disjoint connected opens of $S$ one has that $A|_S(U)$ and $A|_S(V)$ commute in $A|_S(S)$. Indeed, if $U$ and $V$ are disjoint, then $O_U$ and $O_V$ are spacelike separated.\\

\noindent
We will now turn to the application of the constructions from sections \ref{sec:bohr} and \ref{sec:kin} to these local nets. The Bohrification functor from section \ref{sec:bohr} now enables us to assign to a local net a topos-valued functor. Indeed, let $A:\mc{V}(X)\rightarrow \cat{CStar}_{inc}$ be a local net. Since $\cat{CStar}_{inc}$ is a subcategory of $\cat{CStar}_{cr}$, we can postcompose with the Bohrification functor $\cat{CStar}_{cr}\rightarrow \cat{RingTopos}^{op}$ to  ringed toposes. In fact, as we remarked in section \ref{sec:sh}, we can also consider the Bohrification functor as mapping into the category $\cat{RingSp}$ of ringed spaces, seen as a subcategory of the ringed toposes.

\begin{lemma}
Bohrification induces a functor 
$$[{\mc{V}(X)},\cat{CStar_{inc}}]\rightarrow[{\mc{V}(X)^{op}},\cat{RingSp}],$$
which is faithful and reflects isomorphisms.
\end{lemma}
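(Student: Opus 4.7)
My plan is to reduce the statement to the pointwise content of Lemma \ref{lem:ffiso}, exploiting the fact that the Bohrification functor $B : \cat{CStar}_{cr}^{op} \to \cat{RingSp}$ factors (at the level of underlying toposes) through $C : \cat{CStar} \to \cat{Pos}$ followed by the fully faithful embedding $[-,\Set] : \cat{Pos} \to \cat{Topos}_{ess}$ of Corollary \ref{cor:pshfunc}. First I would write down the induced functor explicitly: given a net $A : \mc{V}(X) \to \cat{CStar}_{inc}$, since every inclusion is a monomorphism and monomorphisms reflect commutativity trivially, the arrows of $\cat{CStar}_{inc}$ lie in $\cat{CStar}_{cr}$; hence postcomposing with $B$ gives a functor $\mc{V}(X)^{op} \to \cat{RingSp}$, and a natural transformation of nets is sent to a natural transformation of presheaves of ringed spaces by whiskering. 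Functoriality of this assignment is immediate from functoriality of whiskering.

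For faithfulness, I would use the standard fact that whiskering by a faithful functor is faithful on natural transformations. Two natural transformations $\eta,\eta' : A \Rightarrow A'$ satisfying $B\eta = B\eta'$ must satisfy $B(\eta_O) = B(\eta'_O)$ for each $O \in \mc{V}(X)$, and then it suffices to know that $B$ itself is faithful. This in turn follows from Lemma \ref{lem:ffiso}: if two $*$-homomorphisms induce the same morphism in $\cat{cCTopos}$, they a fortiori induce the same underlying essential geometric morphism, which by Corollary \ref{cor:pshfunc} amounts to the same functor $C(A) \to C(B)$ on posets (obtained as the left adjoint of the right adjoint chosen in the construction), and by Lemma \ref{lem:ffiso} the underlying $*$-homomorphisms must then agree.

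For reflection of isomorphisms I would use the componentwise characterization: a natural transformation is an isomorphism precisely when every component is. So if $B\eta$ is a natural isomorphism, then each $B(\eta_O)$ is an isomorphism in $\cat{RingSp}$, and the second half of Lemma \ref{lem:ffiso} lets me conclude that each $\eta_O$ is an isomorphism in $\cat{CStar}$ --- and hence in $\cat{CStar}_{inc}$, since a bijective $*$-homomorphism is automatically an inclusion with inclusion inverse. Therefore $\eta$ itself is a natural isomorphism of nets.

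The only genuinely subtle point will be dealing with the exception in Lemma \ref{lem:ffiso}, namely the degenerate map $0 : \field{C} \to 0$ which $C$ sends to an isomorphism of one-point posets even though it is not itself an isomorphism. The resolution is very clean: this map is not injective and so is not a morphism in $\cat{CStar}_{inc}$; since every component $\eta_O$ of a morphism of nets is an inclusion, the problematic map never arises as an $\eta_O$, and the exception is vacuous for our restricted functor. Apart from checking this, the argument is essentially "apply Lemma \ref{lem:ffiso} pointwise," so I do not expect further obstacles.
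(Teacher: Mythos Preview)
Your proposal is correct and follows the same approach as the paper: reduce to the general categorical fact that postcomposition by a faithful, isomorphism-reflecting functor preserves both properties, and check the hypotheses componentwise. You are in fact more careful than the paper's own proof, which simply invokes this principle without explicitly justifying the passage from Lemma~\ref{lem:ffiso} (a statement about $C$) to the corresponding properties of $B$, and without addressing the $\field{C}\to 0$ exception; your observation that this map is not an inclusion and hence never occurs as a component of a morphism of nets is exactly the right way to dispose of it.
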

\begin{proof} We have for any functor $F: \mc{D}\rightarrow\mc{E}$ that postcomposition induces a functor $F_*: [\mc{C},\mc{D}]\rightarrow [C,\mc{E}]$. Moreover, if $F$ is faithful and reflects isomorphisms, then $F_*$ has these properties as well. 

Indeed, let $\mu_1,\mu_2\in[\mc{C},\mc{D}](A,B)$ be two distinct natural transformations. Then certainly there is some $C\in\mc{C}_0$ so that $\mu_1(C)\neq\mu_2(C)$. Since $F$ is faithful, this implies that $(F_*\mu_1)(C)=F(\mu_1(C))$ is distinct from $(F_*\mu_2)(C)$, so that $F_*\mu_1\neq F_*\mu_2$.

Similarly, for $F_*\mu: F_*A\rightarrow F_*B$ an isomorphism, we find that $\mu$ is a natural transformation such that $\mu(C)$ is an isomorphism, since $(F_*\mu)(C)=F(\mu(C))$ is. Thus, $\mu$ is a natural isomorphism.
\end{proof}

We will call the image of a net under this functor its Bohrified net.
\begin{definition}\label{def:bnet}
For any net $A: \mc{V}(X)\rightarrow\cat{CStar}_{inc}$, we define its Bohrified net $B(A): \mc{V}(X)^{op}\rightarrow \cat{RingSp}$ to be the functor that sends an open $O$ to the presheaf category $[C(A(O)),\Set]$ (which is the sheaf topos corresponding to the space Up$(C(A))$) together with the internal ring $\ul{A}(O)$. An inclusion $O_1\rInto O_2$ is sent to the pair
\begin{align}
[C(A(O_2)),\Set] &\pile{\lTo^{R^*} \\ \bot \\ \rTo_{R_*}} [C(A(O_1)),\Set]\nonumber\\
R^*\ul{A}(O_1)&\rTo^{i}\ul{A}(O_2).\nonumber
\end{align}
Here the geometric morphism $R_*\vdash R^*$ is induced by the restriction map $R: C(A(O_2))\rightarrow C(A(O_1))$ that sends a commutative subalgebra $C\subset A(O_2)$ to $C\cap A(O_1)$. The natural transformation $i$ is given by the inclusion
$$
i_C: \ul{A}(O_1)_{R(C)}=C\cap A(O_1)\rInto C= \ul{A}(O_2)_{C}.
$$
\end{definition}
We find for each net $A$ a Bohrified net $B(A): \mc{O}(X)^{op}\rightarrow\cat{RingSp}$. The natural question to ask now is whether this functor satisfies some kind of sheaf property. In the next section we will show that, though it is not a sheaf, this Bohrified net is close to being a sheaf.

\subsection{Locality as local descent}
The aim of this section will be to investigate whether the Bohrified net $B(A): \mc{O}(X)^{op}\rightarrow\cat{RingSp}$ from definition \ref{def:bnet} satisfies some kind of sheaf property. As discussed in appendix \ref{sec:sh}, this is a reasonable question since the sheaf property can be formulated for any functor that maps into a complete category, while $\cat{RingSp}$ is indeed complete according to the remark after lemma \ref{lem:ringtopos}. We will show that the restriction of a local net to a spacelike line will almost be a sheaf; in particular consider a spacelike line and let an interval $K$ on it be covered by two intervals $I$ and $J$. Then for the Bohrified net 
$$
B(A): \mc{O}(X)^{op}\rightarrow\cat{RingSp}\rightarrow\cat{RingTopos}
$$
their category of \emph{matching families} (see section \ref{sec:sh}) is now defined as the pullback in the category of ringed spaces
\begin{diagram}
B(A)(I) \times_{B(A)(I \cap J)}  B(A)(J)	& \rTo & B(A)(J)\\
\dTo										&		& \dTo_{B(A)(I\cap J\rightarrow J)}\\
B(A)(I)					& \rTo_{B(A)(I\cap J\rightarrow I)\qquad}	& B(A)(I\cap J)
\end{diagram}
where the arrow $B(A)(I\cap J\rightarrow I)$ in $\cat{RingSp}$ consists of the geometric morphism $R(I)_*\vdash R(I)^*$ that is induced by the restriction map $R(I): C(A(I))\rightarrow C(A(I\cap J))$ and a ring homomorphism $h: R(I)^*\ul{A}(I\cap J)\rightarrow \ul{A}(I)$. The descent morphism (\ref{sec:sh})
$$
B(A)(K)\rightarrow B(A)(J) \times_{B(A)(I \cap J)}  B(A)(J)
$$
is induced from the restriction maps
$$
B(A)(K)\rightarrow B(A)(I)
$$
from definition \ref{def:bnet}. It consists of
\begin{itemize}
\item a geometric morphism 
\begin{diagram}
[C(A(K)),\Set] &\rTo^{r_*\vdash r^*} & [C(A(I)),\Set]\times_{[C(A(I\cap J)),\Set]} [C(A(J)),\Set]
\end{diagram}
where the pullback is constructed in the subcategory $\cat{Top}$ of $\cat{Topos}$
\item and a ring homomorphism 
$$
r^*\left(\ul{A}(I)\coprod_{\ul{A}(I\cap J)}\ul{A}(J)\right)\rightarrow\ul{A}(K)
$$
where $\ul{A}(I)\coprod_{\ul{A}(I\cap J)}\ul{A}(J)$ is the pushout ring as defined in lemma \ref{lem:ringtopos}.
\end{itemize}

If $B(A)$ were a sheaf, this descent morphism would be an equivalence for all $I$ and $J$. We will show that for a Bohrified \emph{local} net, this descent morphism will not be an equivalence, but the geometric morphism will be a \emph{local geometric surjection}. Recall that a \emph{local} geometric morphism $(f_*\vdash f^*): \mc{E}\rightarrow\mc{F}$ is a geometric morphism such that the direct image $f_*$ has a further right adjoint which is full and faithful (cf. \cite{joh02} section C3.6). Local geometric morphisms model `infinitesimal thickenings': if a sheaf topos $\cat{Sh}(X)$ has a local geometric surjection to the point $\cat{Sh}(*)=\Set$, then $X$ is the `infinitesimal thickening' of a point, in the sense that there is a point whose only neighbourhood is the whole space $X$. This specific example arises if one considers the topos $\cat{Sh}(\text{Spec}(R))$ of sheaves over the spectrum of a local ring.

Specifically, we will prove the following theorem:
\begin{theorem}\label{prop:main}
Let $A: \mc{V}(\field{R}^{1+1})\rightarrow\cat{CStar}_{inc}$ be an additive net. Then $A$ is strongly local, precisely when for any spacelike line and two intervals $(I,J)$ on it, one finds for the descent morphism
\begin{diagram}
B(A)(I\vee J)&	\rTo^r & B(A)(I)\times_{B(A)(I\cap J)}B(A)(J).
\end{diagram}
that
\begin{itemize}
\item the descent morphism of the toposes is a local geometric morphism 
\begin{diagram}
[C(A(I \cup J)),\cat{Set}]&	\pile{\lInto^{r^*}\\ \bot \\ \rTo_{r_*}\\ \bot \lInto} & [C(A(I)),\Set]\times_{[C(A(I\cap J)),\Set]}[C(A(J)),\cat{Set}]
\end{diagram}
so in particular a geometric surjection and
\item that the descent morphism in $\cat{Ring}^{op}([C(A(I\cup J)),\cat{Set}])$ is an epi.
\end{itemize}
\end{theorem}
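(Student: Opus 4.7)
My plan is to translate the statement into conditions on poset maps via the Alexandrov equivalence of Lemma \ref{lem:locale}, and then read off the adjoint data directly from the algebra.

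First I would identify the pullback topos explicitly. Since Up preserves limits and the geometric morphisms in the cospan are the essential ones induced by the right adjoints $R_I\colon C(A(I))\to C(A(I\cap J))$ and $R_J\colon C(A(J))\to C(A(I\cap J))$, the $2$-pullback
$$[C(A(I)),\Set]\times_{[C(A(I\cap J)),\Set]}[C(A(J)),\Set]$$
is equivalent to $[P,\Set]$, where $P=C(A(I))\times_{C(A(I\cap J))}C(A(J))$ is the poset of pairs $(C_I,C_J)$ satisfying $C_I\cap A(I\cap J)=C_J\cap A(I\cap J)$. Under this identification, the descent morphism $r$ is the essential geometric morphism induced by the poset map $\phi\colon C(A(I\vee J))\to P$, $C\mapsto (C\cap A(I),\,C\cap A(J))$, and the ring component is the one coming from the inclusions $C\cap A(I)\hookrightarrow C$, $C\cap A(J)\hookrightarrow C$.

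Next I would record the purely topos-theoretic characterization: the essential geometric morphism $[X,\Set]\to[Y,\Set]$ induced by a poset map $\phi\colon X\to Y$ is local precisely when $\phi$ admits a right adjoint $\psi\colon Y\to X$ with $\phi\circ\psi\cong\mathrm{id}_Y$. Indeed, under $\phi\dashv\psi$ we get $\mathrm{Ran}_\phi\simeq \psi^*=-\circ\psi$, so $r_*$ acquires the further right adjoint $\mathrm{Ran}_\psi$; fully faithfulness of $\mathrm{Ran}_\psi$ is equivalent to $\psi$ being a fully faithful poset embedding, i.e.\ to the retraction identity $\phi\circ\psi=\mathrm{id}_Y$. (Local implies geometric surjective is then automatic.)

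The main algebraic step is to show that strong locality, under the standing assumptions of additivity and ordinary locality, is exactly the existence of such a $\psi$ for our $\phi$. The only natural candidate is $\psi(C_I,C_J)=C_I\vee C_J$, which lives in $C(A(I\vee J))$ because locality makes $C_I$ and $C_J$ commute in $A(I\vee J)$. The retraction identity $\phi\psi(C_I,C_J)=((C_I\vee C_J)\cap A(I),\,(C_I\vee C_J)\cap A(J))=(C_I,C_J)$ is the strong locality condition verbatim, and conversely applying the retraction recovers strong locality. The adjunction $\phi\dashv\psi$ reduces to the biconditional $C\subseteq C_I\vee C_J \iff C\cap A(I)\subseteq C_I\text{ and }C\cap A(J)\subseteq C_J$; one direction is immediate and the other is where I expect to need additivity, to write every normal element of $C\subset A(I\vee J)$ in terms of elements of $A(I)$ and $A(J)$ and then push the commutative-subalgebra-generated-by argument through.

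Finally, for the ring component it suffices to evaluate stalkwise over each $C\in C(A(I\vee J))$: the map in question is the canonical one
$$(C\cap A(I))\coprod_{C\cap A(I\cap J)}(C\cap A(J))\longrightarrow C$$
in commutative C*-algebras. Additivity applied inside $C$ (together with strong locality to control how subalgebras meet $A(I)$ and $A(J)$) shows this map has the appropriate surjectivity, so that the global map of internal rings is epi in $\cat{Ring}^{op}([C(A(I\cup J)),\Set])$; conversely, if the map is epi for every $C$ one extracts additivity and strong locality by specializing to $C=C_I\vee C_J$. I expect the main obstacle to be the topos-theoretic identification of the pullback as $[P,\Set]$ and the clean characterization of locality via retracting right adjoints on posets; once these are established, the rest is a direct translation of the algebraic definitions.
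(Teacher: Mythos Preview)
Your overall strategy---identify the pullback topos as $[P,\Set]$ via the Alexandrov equivalence, reduce the local-geometric-morphism condition to an adjunction between $r$ and $\vee$ at the poset level, then treat the ring component stalkwise---is exactly the paper's approach. However, you have the direction of the poset adjunction backwards, and this is a genuine error, not a notational slip.

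You claim that $\phi\dashv\psi$ with $\phi=r$ and $\psi=\vee$, and that this yields $\mathrm{Ran}_\phi\simeq\psi^*$. But for covariant functor categories, the identity $\mathrm{Ran}_\phi\simeq\psi^*$ follows from $\psi\dashv\phi$, not from $\phi\dashv\psi$ (the latter gives $\mathrm{Lan}_\phi\simeq\psi^*$ instead). And indeed what actually holds is $\vee\dashv r$: the defining property of $C_I\vee C_J$ is that $C_I\vee C_J\subseteq C$ iff $C_I\subseteq C$ and $C_J\subseteq C$, i.e.\ iff $(C_I,C_J)\leq r(C)$. Your stated biconditional
\[
C\subseteq C_I\vee C_J \iff C\cap A(I)\subseteq C_I\ \text{and}\ C\cap A(J)\subseteq C_J
\]
is false in general and cannot be rescued by additivity: take $A(I)=A(J)=M_2(\field{C})$ with $A(I\vee J)=M_2\otimes M_2$, and let $C$ be generated by $e_{11}\otimes e_{11}+e_{22}\otimes e_{22}$; then $C\cap A(I)=C\cap A(J)=\field{C}\cdot 1$, so the right-hand side holds with $C_I=C_J=\field{C}\cdot 1$, yet $C\not\subseteq\field{C}\cdot 1$.

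Once you flip the adjunction to $\vee\dashv r$, everything lines up: $r_*=\mathrm{Ran}_r=\vee^*$ acquires the further right adjoint $\mathrm{Ran}_\vee$, and $\mathrm{Ran}_\vee$ is fully faithful precisely when $\vee$ is, i.e.\ when $r\circ\vee=\mathrm{id}$, which is strong locality verbatim. The role of additivity is then not to prove your false biconditional, but to show that $\vee$ is well-defined when $I$ and $J$ overlap (by splitting into $I\setminus J$, $I\cap J$, $J\setminus I$ and using locality on the disjoint pieces); this is where the paper uses it. A minor point on the ring side: ``epi in $\cat{Ring}^{op}$'' means mono in $\cat{Ring}$, so you want injectivity of the stalkwise map, not surjectivity.
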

Furthermore we notice that if $A$ is just local, we do find an extra right adjoint to the geometric morphism $r_*\vdash r^*$, which is however not full and faithful.

We will prove this theorem in three steps: first we consider the descent morphism induced on the posets of commutative subalgebras, to conclude in section \ref{sec:topdes} that the descent morphism of the toposes is a local geometric surjection. Finally we show in section \ref{sec:ringdes} that the descent morphism of the rings is an epi in $\cat{Ring}^{op}([C(A(I\cup J)),\cat{Set}])$.

\subsubsection{Descent morphism for posets}\label{sec:posdes}
Theorem \ref{prop:main} relies on the fact that the locality of a net on $\field{R}^{1+1}$ is determined by the locality of the restrictions of the net to each spacelike line. 
\begin{lemma}\label{lem:restr}
Let $A: \mc{V}(\field{R}^2)\rightarrow \cat{CStar}_{inc}$ be a net. Then $A$ is strongly local iff its restriction to any spacelike axis is local. Moreover, $A$ is additive iff its restriction to any spacelike axis is additive.
\end{lemma}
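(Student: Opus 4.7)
The plan is to prove both equivalences by means of a geometric dictionary between causally complete opens in $\mc{V}(\field{R}^{1+1})$ and intervals on spacelike lines. First I would set up the dictionary: every bounded open in $\mc{V}(\field{R}^{1+1})$ is a causal diamond and can be written as $O_U$ for a unique interval $U$ on a unique spacelike line $S$. Two diamonds $O_U$ and $O_V$ are spacelike separated in $\field{R}^2$ precisely when $U$ and $V$ can be realized as disjoint intervals on a common spacelike line $S$. Furthermore, when $\overline U$ and $\overline V$ touch at most at a single point, the envelope $O_{U\vee V}$ computed along $S$ coincides with the join $O_U \vee O_V$ taken in $\mc{V}(\field{R}^2)$. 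This identification is the technical backbone for both claims.

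The forward directions are then immediate. If $A$ is strongly local, then for disjoint intervals $U,V$ on $S$ the diamonds $O_U, O_V$ are spacelike separated in $\field{R}^2$, so $A(O_U)$ and $A(O_V)$ commute, which is exactly locality of $A|_S$. If $A$ is additive, then for $U$ and $V$ almost-disjoint on $S$ the geometric identification $O_{U\vee V} = O_U \vee O_V$ combined with additivity in $\field{R}^2$ yields $A|_S(U\vee V) = A|_S(U) \vee A|_S(V)$, giving additivity of $A|_S$.

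The converse direction is where the real work lies. Assume $A|_S$ is local (respectively additive) for every spacelike line $S$. Given any two spacelike separated $O_1, O_2 \in \mc{V}(\field{R}^2)$, I would choose a spacelike line $S$ cut by both, exhibiting $O_1 = O_{U_1}$ and $O_2 = O_{U_2}$ for disjoint intervals on $S$; locality of $A|_S$ then furnishes the commutation of $A(O_1)$ and $A(O_2)$, giving plain locality of $A$. A parallel application recovers additivity of $A$. To upgrade ordinary locality to \emph{strong} locality, given commutative $C_i \subset A(O_i)$ I would use the assumed additivity to realize $C_1 \vee C_2$ inside $A(O_1 \vee O_2) = A(O_1) \vee A(O_2)$, where the generators from $C_1$ and from $C_2$ live in mutually commuting subalgebras; then an element of $C_1 \vee C_2$ belonging to $A(O_1)$ must, under the presentation of the join as spanned by products $c_1 c_2$, have its $C_2$-dependence trivialize, forcing membership in $C_1$.

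The main obstacle I anticipate is precisely this last step, namely extracting the reflective intersection condition $(C_1 \vee C_2)\cap A(O_1) = C_1$ from mere commutation on the line. The 1D hypothesis produces commutation, but nothing prima facie prevents $C_1 \vee C_2$ from containing elements of $A(O_1)$ outside $C_1$. I expect one must genuinely appeal to additivity to pin down $C_1 \vee C_2$ as the commutative algebra freely generated by two commuting commutative pieces over their intersection, a tensor-like description which is what makes the intersection collapse to $C_1$. Any honest proof will have to invest effort at exactly this point, whereas the remaining geometric reductions are essentially bookkeeping on the combinatorics of diamonds and spacelike lines in $\field{R}^{1+1}$.
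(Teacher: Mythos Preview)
Your reduction to spacelike lines is the right idea and is what the paper does, but there are two problems.

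First, a geometric gap. You assert that two spacelike separated diamonds $O_1,O_2$ can be written as $O_{U_1},O_{U_2}$ for disjoint intervals on a \emph{common} spacelike line $S$. That would force both diamonds to have their spacelike diagonals on $S$, which is not true in general. The paper circumvents this by instead choosing a line $l$ and \emph{enlarged} diamonds $U_1\supset O_1$, $U_2\supset O_2$ whose diagonals lie on $l$ and which remain spacelike separated; commutation of $A(U_1)$ and $A(U_2)$ (from the hypothesis on $A|_l$) then passes down to $A(O_1),A(O_2)$ by isotony. The same enlargement trick, combined with $A(O_i)\subset A(U_i)$, yields the strong-locality inclusion $(C_1\vee C_2)\cap A(O_1)\subset (C_1\vee C_2)\cap A(U_1)=C_1$ directly, and is what the paper means by ``a similar argument applies to strong locality and additivity.''

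Second, you are trying to extract \emph{strong} locality of $A$ from mere locality of the restrictions, and you rightly flag this as the sticking point. The paper does not attempt that implication: its proof reads the lemma as ``strongly local iff restrictions are strongly local,'' and simply reruns the enlargement argument with the strong condition on both sides. Your proposed fix---realizing $C_1\vee C_2$ as a tensor-like free commutative algebra so that the intersection with $A(O_1)$ collapses to $C_1$---is essentially assuming Einstein causality, which the paper explicitly singles out as a strictly stronger hypothesis than strong locality. Without it there is no mechanism preventing $(C_1\vee C_2)\cap A(O_1)$ from being larger than $C_1$. So the obstacle you identify is real, and the resolution is to read the hypothesis on the restrictions as strong locality rather than to close that gap.
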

\begin{proof}
If $A$ is strongly local, then clearly its restriction to a spacelike axis is strongly local and if $A$ is additive, then its restricition is.
Conversely, for any two spacelike separated causal diamonds $O_1$ and $O_2$, there is some axis $l$ and two spacelike separated diamonds $U_1$ and $U_2$ such that the spacelike diagonal of $U_1$ and $U_2$ lies on $l$ and that $O_1\subset U_1$ and $O_2\subset U_2$. The locality of the restriction of $A$ then implies that $A(U_1)$ and $A(U_2)$ commute in $A(U_1\vee U_2)$, so $A(O_1)$ and $A(O_2)$ commute as well. A similar argument applies to strong locality and additivity.
\end{proof}

Our aim will thus be to prove the properties of the descent morphism listed in theorem \ref{prop:main} for nets on the line. The crucial step in the proof is to note that the commutance of two algebras in a larger can be naturally expressed in terms of their posets of commutative subalgebras.

\begin{proposition}\label{prop:adj}
Let $A_1, A_2 \rInto A$ be two subalgebras of a C*-algebra $A$. Then $A_1$ and $A_2$ mutually commute in $A$ if and only if the functor
\begin{diagram}
C(A)		& \rTo^r 	& C(A_1)\times C(A_2)\\
C\subset A	& \mapsto	& (C\cap A_1, C\cap A_2)
\end{diagram}
has a left adjoint $\vee$.
\end{proposition}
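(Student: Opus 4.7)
The plan is to verify both implications directly, constructing the left adjoint explicitly in one direction and extracting commutators from its unit in the other.

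For the $(\Rightarrow)$ direction, assume $A_1$ and $A_2$ commute in $A$. I would define $\vee: C(A_1)\times C(A_2)\rightarrow C(A)$ by letting $C_1\vee C_2$ be the unital C*-subalgebra of $A$ generated by $C_1\cup C_2$. The key observation is that this subalgebra is commutative: within each $C_i$ the elements commute by hypothesis, while across $C_1$ and $C_2$ they commute because $A_1$ commutes with $A_2$. Commutativity of generators extends first to *-polynomials in them and then, by continuity of the commutator, to the norm closure. The adjunction $\vee\dashv r$ then reduces to the tautological equivalence
\[
C_1\vee C_2\subseteq C \iff C_1\subseteq C\cap A_1 \text{ and } C_2\subseteq C\cap A_2,
\]
valid for every commutative $C\in C(A)$: the right-hand side says exactly that $C_1\cup C_2\subseteq C$, which, since $C$ is itself a C*-subalgebra, is equivalent to the containment of the C*-algebra generated by $C_1\cup C_2$.

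For the $(\Leftarrow)$ direction, assume $r$ has a left adjoint $\vee$. Pick arbitrary self-adjoint elements $a_1\in A_1$ and $a_2\in A_2$. Each generates a unital commutative C*-subalgebra $\langle a_k\rangle\in C(A_k)$ via continuous functional calculus. Setting $D:=\langle a_1\rangle\vee\langle a_2\rangle\in C(A)$, the unit of the adjunction at the pair $(\langle a_1\rangle,\langle a_2\rangle)$ is precisely the containment
\[
(\langle a_1\rangle,\langle a_2\rangle)\leq r(D)=(D\cap A_1,D\cap A_2).
\]
Thus $a_1,a_2\in D$, and since $D$ is commutative we get $[a_1,a_2]=0$. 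Decomposing arbitrary elements of $A_k$ as complex linear combinations of self-adjoint ones (as in the proof of Lemma \ref{lem:ffiso}) then yields $[A_1,A_2]=0$ in $A$.

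The only mildly subtle point is on the $(\Rightarrow)$ side, namely the topological claim that commutativity of a *-subalgebra is inherited by its norm closure, which follows from joint continuity of the C*-multiplication. The $(\Leftarrow)$ direction is purely formal once one recognizes that the unit of an adjunction between posets is nothing more than the corresponding containment of subobjects, so no estimation or approximation is needed there.
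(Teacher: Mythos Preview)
Your proof is correct and follows essentially the same route as the paper's: in the forward direction you build the left adjoint as the generated subalgebra and check the adjunction via the tautological containment, and in the converse you use the unit to place both generators inside a single commutative $D\in C(A)$. The only cosmetic differences are that the paper works with normal elements rather than self-adjoint ones, and that it phrases the converse by first noting the left adjoint must coincide with the ``generated subalgebra'' map; your direct use of the unit is slightly cleaner but amounts to the same argument.
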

\begin{proof}
Suppose that $A_1$ and $A_2$ indeed mutually commute in $A$. Then there is a functor $C(A_1)\times C(A_2)\rightarrow C(A)$, which is given by the map $(C_1, C_2)\mapsto C_1\vee C_2$ sending each pair of commutative subalgebras to the algebra in $A$ generated by them (i.e. the least subalgebra of $A$ containing both $C_1$ and $C_2$). Since the elements in $C_1$ commute with the elements in $C_2$, all generators of $C_1\vee C_2$ commute so that $C_1\vee C_2$ will be commutative. Moreover, the definition of $C_1\vee C_2$ states precisely that $C_1\vee C_2\leq C$ iff $(C_1,C_2)\leq r(C)$, so that $\vee\dashv r$.

Conversely, if $r$ has a left adjoint, then it must be given by $\vee$. In particular, we can take any normal element $c_1\in A_1$ and $c_2\in A_2$ and consider the commutative subalgebras $C_1\subset A_1$ and $C_2\subset A_2$ generated by them. Then $C_1\vee C_2$ is commutative, so $c_1$ and $c_2$ commute. We conclude that $A_1$ and $A_2$ mutually commute in $A$.
\end{proof}

We can directly apply this to \emph{any} local net.

\begin{corollary}\label{cor:local}
A net $A:\mc{V}(X)\rightarrow\cat{CStar}_{inc}$ is local if and only if one finds for any pair $(O_1,O_2)$ of spacelike separated opens that the functor
$$
C(A(O_1\vee O_2)) \rTo^r C(A(O_1))\times C(A(O_2))
$$
has a left adjoint $\vee$.
\end{corollary}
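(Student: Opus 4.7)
The plan is essentially to unwind the definitions and apply Proposition \ref{prop:adj} pairwise. By the definition of a local net, $A$ is local precisely when for every pair $(O_1, O_2)$ of spacelike separated opens in $\mc{V}(X)$, the subalgebras $A(O_1)$ and $A(O_2)$ mutually commute inside $A(O_1 \vee O_2)$. Since this is a condition quantified over pairs of spacelike separated opens, and the conclusion of the corollary is likewise quantified over the same pairs, it suffices to establish the equivalence pair-by-pair.

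For a fixed such pair $(O_1, O_2)$, I would set $A_1 := A(O_1)$, $A_2 := A(O_2)$, and $A := A(O_1 \vee O_2)$. Since the net consists of inclusions, $A_1$ and $A_2$ are genuine C*-subalgebras of $A$, which is the setup required by Proposition \ref{prop:adj}. That proposition then directly yields: $A_1$ and $A_2$ commute in $A$ if and only if the restriction functor $r : C(A) \to C(A_1) \times C(A_2)$, $C \mapsto (C \cap A_1, C \cap A_2)$, admits a left adjoint (which is necessarily given by the join $\vee$ in $C(A)$).

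Quantifying this equivalence over all spacelike separated pairs gives the corollary. No step really merits being called an obstacle; the only mild subtlety is checking that the restriction functor appearing in the corollary is literally the same as the one in Proposition \ref{prop:adj}, which is immediate from the definitions. I would therefore keep the proof to essentially one sentence, stating that the claim is an immediate application of Proposition \ref{prop:adj} to $A_1 = A(O_1)$, $A_2 = A(O_2)$, $A = A(O_1 \vee O_2)$, combined with the definition of locality.
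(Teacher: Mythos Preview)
Your proposal is correct and matches the paper's approach exactly: the corollary is stated immediately after Proposition~\ref{prop:adj} without a separate proof, precisely because it is just the pairwise application of that proposition to $A_1 = A(O_1)$, $A_2 = A(O_2)$, $A = A(O_1 \vee O_2)$ combined with the definition of locality, as you describe.
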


The strong locality of the net now precisely gives the following extension of this result:

\begin{lemma}\label{lem:inv}
A net is strongly local, precisely when the adjunction unit from proposition \ref{cor:local} is an isomorphism, or equivalently, when the left adjoint $\vee$ is full and faithful.
\end{lemma}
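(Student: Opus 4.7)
My plan is to unwind the adjunction $\vee \dashv r$ produced by Corollary \ref{cor:local} and observe that its unit is, pointwise, precisely the pair of inclusions whose being equalities is the defining condition of strong locality.

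First, assume the net is local, so that by Corollary \ref{cor:local} the functor $r : C(A(O_1 \vee O_2)) \to C(A(O_1)) \times C(A(O_2))$ has a left adjoint $\vee$ sending $(C_1, C_2)$ to the subalgebra $C_1 \vee C_2$ of $A(O_1 \vee O_2)$ generated by them (commutative because $A(O_1)$ and $A(O_2)$ mutually commute). I would then compute the unit of this adjunction at $(C_1, C_2)$: it is a morphism in the poset $C(A(O_1)) \times C(A(O_2))$,
\[
\eta_{(C_1,C_2)} : (C_1, C_2) \;\longrightarrow\; r(C_1 \vee C_2) = \bigl((C_1 \vee C_2) \cap A(O_1),\; (C_1 \vee C_2) \cap A(O_2)\bigr),
\]
and in a poset this morphism is simply the pair of obvious inclusions $C_i \subseteq (C_1 \vee C_2) \cap A(O_i)$, which hold automatically since $C_i \subseteq C_1 \vee C_2$ and $C_i \subseteq A(O_i)$.

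The key observation is now immediate: a morphism in a poset is an isomorphism iff the two sides are equal. Hence $\eta_{(C_1,C_2)}$ is an isomorphism for every pair $(C_1, C_2)$ precisely when $(C_1 \vee C_2) \cap A(O_i) = C_i$ for $i = 1, 2$, which is exactly the strong locality condition from the definition. This proves the equivalence between strong locality and the unit being an isomorphism for every spacelike separated pair $(O_1, O_2)$.

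For the equivalence with $\vee$ being fully faithful, I would invoke the standard categorical fact that a left adjoint is fully faithful iff its unit is a natural isomorphism. Since we are working with functors between posets, faithfulness is automatic, and fullness of $\vee$ at $(C_1, C_2) \leq (C_1', C_2')$ is the statement that $C_1 \vee C_2 \leq C_1' \vee C_2'$ implies $(C_1, C_2) \leq r(C_1' \vee C_2')$, which by the unit/counit triangle is exactly equivalent to the unit $\eta_{(C_1, C_2)}$ being an equality. No substantive obstacle is anticipated here; the whole argument is essentially a translation of the strong locality axiom into the language of the adjunction of Corollary \ref{cor:local}, with the mild subtlety being only to confirm that the adjunction already exists (guaranteed by ordinary locality, which is implied by strong locality) before discussing whether its unit is invertible.
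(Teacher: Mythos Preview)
Your proof is correct and follows essentially the same approach as the paper: both observe that the strong locality condition is literally the statement that $r\circ\vee$ equals the identity, which in a poset is equivalent to the unit being an isomorphism, and then invoke the standard equivalence between invertibility of the unit and full faithfulness of the left adjoint. Your write-up simply unpacks in detail what the paper compresses into two sentences.
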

\begin{proof}
Indeed the definition of strong locality precisely states that $r\circ \vee$ is isomorphic to the identity. Since we are working with posets, this means that the adjunction unit should be an isomorphism.
\end{proof}

We will now restrict our attention to local nets on the line: for additive nets on the line, we can generalize this result to opens that are not spacelike separated. Note that a net $A: \mc{V}(\field{R})\rightarrow\cat{CStar}_{inc}$ on the open intervals of the line is additive if $A((a,c))=A((a,b))\vee A((b,c))$.

\begin{proposition}\label{prop:line}
Let $A:\mc{V}(\field{R})\rightarrow\cat{CStar}_{inc}$ be an additive net. Then $A$ is local if and only if for any interval $K$ which is covered by two intervals $I$ and $J$, we have that the functor
$$C(A(K))\rTo^r C(A(I))\times_{C(A(I\cap J))}C(A(J))$$
has a left adjoint $\vee$. Moreover, the net is strongly local iff the unit of the adjunction is an isomorphism, or equivalently, the left adjoint $\vee$ is full and faithful.
\end{proposition}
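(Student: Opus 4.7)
The plan is to mimic the argument in Proposition \ref{prop:adj} and Corollary \ref{cor:local}, using additivity to reduce the overlapping case to one involving three pairwise-commuting pieces, and then constructing the left adjoint as the commutative subalgebra generated by a given compatible pair. Write $I\cap J = (b_1, b_2)$ and $K = I\cup J = (a, c)$ with $I = (a, b_2)$, $J = (b_1, c)$. By additivity, $A(I) = L\vee M$, $A(J) = M\vee N$, and $A(K) = L\vee M\vee N$, where $L = A((a, b_1))$, $M = A((b_1, b_2)) = A(I\cap J)$, $N = A((b_2, c))$. Applying locality to the pairwise disjoint intervals $(a, b_1), (b_1, b_2), (b_2, c)$ on the spacelike line (via Corollary \ref{cor:local}), the algebras $L$, $M$, $N$ pairwise commute inside $A(K)$.

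For the forward implication, given $(C_1, C_2)$ in the pullback with $D := C_1\cap M = C_2\cap M$, I propose to set $\vee(C_1, C_2)$ equal to the $C^*$-subalgebra of $A(K)$ generated by $C_1\cup C_2$; the main task is then to verify this subalgebra is commutative. The compatibility condition guarantees that the two algebras meet inside the common commutative subalgebra $D\subseteq M$. Using that $L$ commutes with $A(J) = M\vee N$ and that $N$ commutes with $A(I) = L\vee M$, any commutator $[c_1, c_2]$ with $c_1\in C_1$ and $c_2\in C_2$ reduces to a computation on the $M$-components. Additivity together with the pairwise commutation of $L$, $M$, $N$ lets me represent elements of $L\vee M$ via the canonical surjection $L\otimes_{\max} M\twoheadrightarrow L\vee M$ (well-defined because $L$ and $M$ commute) and analogously for $M\vee N$; the compatibility on $M$ then forces the relevant $M$-traces of $c_1$ and $c_2$ to lie in the common commutative $D$, which kills the commutator. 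Once commutativity is in hand, the universal property of $\vee$ is immediate from unravelling the ordering on the pullback.

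For the converse, suppose $r$ always admits a left adjoint. Given spacelike separated intervals $I_0, J_0$ with $I_0\cap J_0 = \emptyset$, embed them in overlapping intervals $I\supseteq I_0$ and $J\supseteq J_0$ arranged so that $I\cap J$ is a small interval disjoint from $I_0\cup J_0$. Any commutative $C_1\subseteq A(I_0)\subseteq A(I)$ and $C_2\subseteq A(J_0)\subseteq A(J)$ intersect $A(I\cap J)$ only in $\field{C}\cdot 1$ (using that the relevant pieces of the net are separated), so $(C_1, C_2)$ lies in the pullback. Existence of $\vee(C_1, C_2)\in C(A(K))$ produces a commutative subalgebra of $A(K)$ containing both, so in particular every pair of normal generators of $A(I_0)$ and $A(J_0)$ commute, yielding locality. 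For the moreover clause, the unit at $C\in C(A(K))$ is the inclusion $C\hookrightarrow \vee r(C) = \vee(C\cap A(I), C\cap A(J))$; this is an isomorphism precisely when $C$ is generated by its restrictions to $A(I)$ and $A(J)$, which by the same additive decomposition into $L, M, N$ translates exactly into the strong locality condition $(C_1\vee C_2)\cap A(O_i) = C_i$ for the pairwise spacelike separated pieces.

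The hard part will be making the commutativity argument in the forward direction precise. A commutative $C_1\subseteq L\vee M$ does not in general decompose as $(C_1\cap L)\vee(C_1\cap M)$, so one cannot naively split $c_1\in C_1$ into $L$- and $M$-components. Controlling this requires using the surjection $L\otimes_{\max} M\twoheadrightarrow L\vee M$ carefully and showing that elements of $C_1$ lift, modulo the kernel, to a form whose $M$-part is constrained to lie in $D$; I expect that this is exactly the point at which the pairwise commutation of $L, M, N$ together with the pullback compatibility must be combined, and it may be necessary to invoke strong locality of appropriate spacelike pairs even to set up the left adjoint, relegating mere locality to a consequence characterization.
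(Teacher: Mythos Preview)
Your overall strategy matches the paper's: decompose $K$ into the three disjoint pieces $I\setminus J$, $I\cap J$, $J\setminus I$, use additivity to write $A(I)$, $A(J)$, $A(K)$ as the corresponding joins, and define the candidate left adjoint by $(C_1,C_2)\mapsto C_1\vee C_2$. The differences are in execution.

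\textbf{Commutativity of $C_1\vee C_2$.} The paper does \emph{not} attempt to split elements of $C_1$ into $L$- and $M$-components, and never invokes $L\otimes_{\max}M\twoheadrightarrow L\vee M$ or any lifting argument. Its argument is: $C_1\subset A(I)$ commutes with all of $A(J\setminus I)$ by locality; $C_1$ commutes with $C_2\cap A(I\cap J)$ because this equals $C_1\cap A(I\cap J)\subset C_1$; and since $C_2\subset A(J)=A(I\cap J)\vee A(J\setminus I)$, one concludes $[C_1,C_2]=0$. So the machinery you introduce (tensor products, controlled lifts, ``$M$-traces'') is not part of the paper's proof, and your stated worry that one ``cannot naively split $c_1$'' is precisely why the paper does not try to. Whether the paper's last inference is itself fully justified is a fair question, but your proposed route via lifts through the max tensor product does not obviously close the gap either and is considerably more complicated than what the paper writes.

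\textbf{Converse direction.} The paper handles the implication ``left adjoint $\Rightarrow$ local'' simply by taking $I\cap J=\emptyset$, so that the pullback degenerates to the product $C(A(I))\times C(A(J))$ and Corollary~\ref{cor:local} applies verbatim. Your detour through auxiliary overlapping intervals $I\supseteq I_0$, $J\supseteq J_0$ and the claim that $C_i\cap A(I\cap J)=\field{C}\cdot 1$ is unnecessary (and that last claim would itself need justification).

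\textbf{Unit versus counit.} In the ``moreover'' clause you identify the unit of $\vee\dashv r$ with the map $C\hookrightarrow \vee r(C)$ for $C\in C(A(K))$; that is the \emph{counit}. The unit lives on the other side: at $(C_1,C_2)$ it is the comparison $(C_1,C_2)\to r(C_1\vee C_2)=\big((C_1\vee C_2)\cap A(I),\,(C_1\vee C_2)\cap A(J)\big)$, and its being an isomorphism is exactly the strong locality condition $(C_1\vee C_2)\cap A(O_i)=C_i$. This matches Lemma~\ref{lem:inv} and the paper's one-line treatment of the ``moreover'' part.
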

\begin{proof}
Note that the category $C(A(I))\times_{C(A(I\cap J))}C(A(J))$ has as objects pairs $(C_1,C_2)\in C(A(I))\times C(A(J))$ such that $C_1\cap A(I\cap J)=C_2\cap A(I\cap J)$. In particular, since $A(\emptyset)=\field{C}\cdot 1$, one finds that this simplifies to a direct product if $I$ and $J$ are disjoint intervals. In that case, we can apply corollary \ref{cor:local} to obtain the statement.

In case $I$ and $J$ are not disjoint, one can still let the left adjoint of $r$ send a pair $(C_1,C_2)$ of commutative subalgebras to the subalgebra $C_1\vee C_2$ generated by them in $A(K)$. Indeed, we can replace the two intervals $I$ and $J$ by the three disjoint intervals $I\cap J$, $I\setminus J$ and $J\setminus I$\footnote{We let $I\setminus J$ denote the interior of the set-theoretical complement}, as indicated in figure \ref{fig:2}. 
\begin{figure}[!t]
\centering
	\includegraphics[width=0.2 \textwidth,trim=50mm 90mm 50mm 70mm]{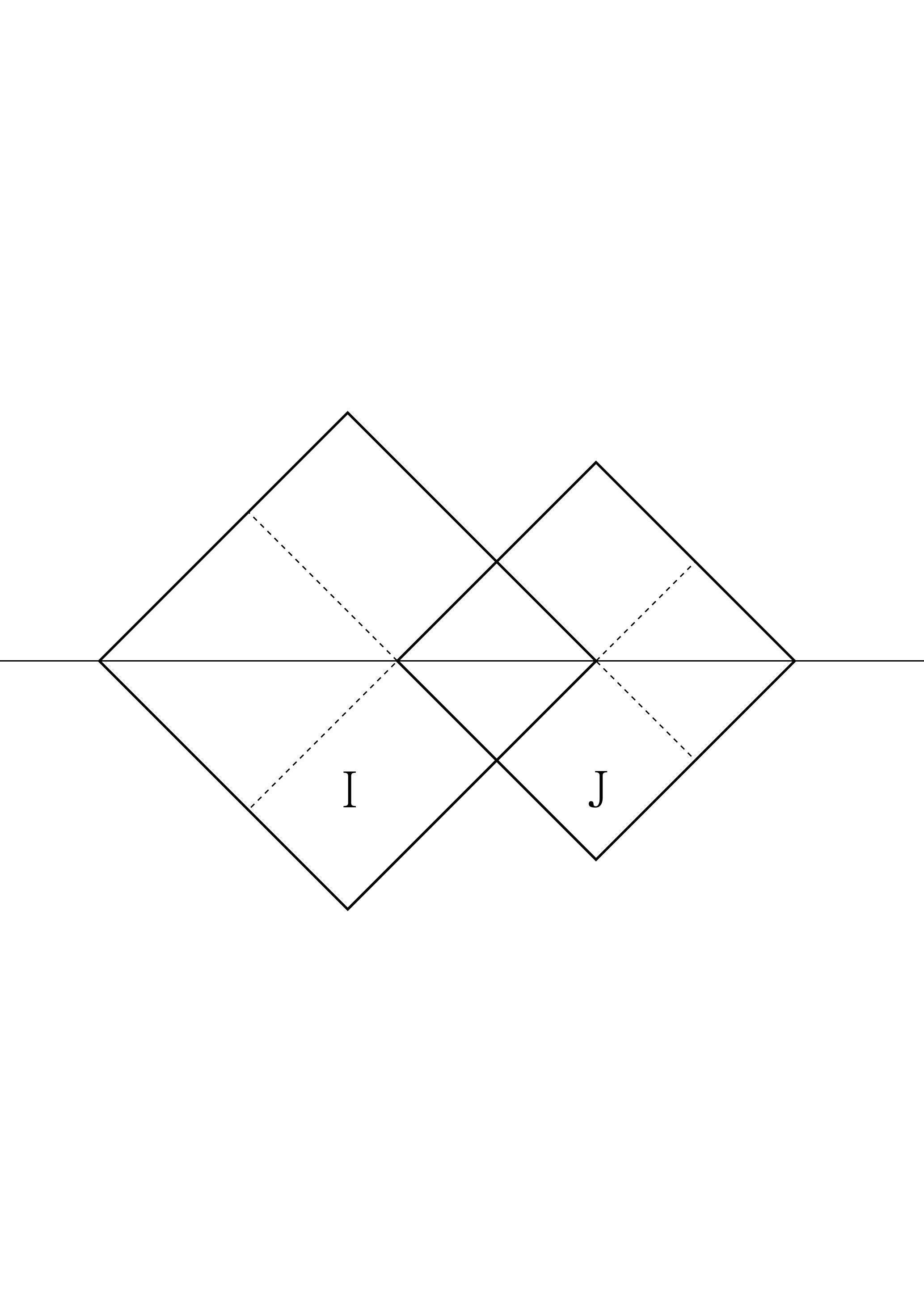}
\caption{\it \label{fig:2} The causal diamonds corresponding to $I\cap J$, $I\setminus J$ and $J\setminus I$.}
\end{figure}
One may note that the additivity property of the net $A$ implies that $A(I)=A(I\setminus J)\vee A(I\cap J)$ and similarly for $J$. In particular, this means that 
$$A(K)=A(I\setminus J)\vee A(I\cap J)\vee A(J\setminus I).$$ 
Moreover, these three algebras mutually commute and have trivial intersection in $A(K)$ since the corresponding intervals are disjoint (hence spacelike separated). 

Now, let $(C_1,C_2)\in C(A(I))\times C(A(J))$ be a pair of commutative subalgebras that agree on $A(I\cap J)$. We note that $C_1$ commutes with the part of $C_2$ in $A(I\cap J)$, since $C_1$ and $C_2$ are the same in $A(I\cap J)$. Furthermore, $C_1$ commutes with the part of $C_2$ in $A(J\setminus I)$, since $C_1\subset A(I)$ and $A(I)$ and $A(J\setminus I)$ mutually commute in $A(K)$. Because $C_2$ is a subalgebra of $A(J)=A(I\cap J)\vee A(J\setminus I)$, this means that the whole of $C_2$ has to commute with $C_1$.

Thus, $C_1$ and $C_2$ mutually commute in $A(K)$, so $C_1\vee C_2$ is a commutative subalgebra of $A(K)$. We thus find a well-defined map
$$C(A(I))\times_{C(A(I\cap J))}C(A(J))\rTo^{\vee}C(A(K))$$
which is easily seen to be a functor. The definition of strong locality gives that the unit of the adjunction is an isomorphism, precisely if the net is strongly local.
\end{proof}

If $r$ were an isomorphism, this result would state (part of) a sheaf property of the presheaf $I\mapsto C(A(I))$. We see that $C(A(-))$ is in fact a coseparated presheaf (see definition \ref{def:sh}) since $C(A(I\cup J))\rightarrow C(A(I))\times_{C(A(I\cap J))}C(A(J))$ is an epimorphism. Since $C(A)$ only serves as the site for a presheaf topos, we lift this result to the corresponding toposes in the next section.

\subsubsection{Descent morphism for toposes}\label{sec:topdes}
We now lift the result of proposition \ref{prop:line} to the corresponding presheaf toposes, using some category theoretic results from the appendix.

\begin{corollary}
Let $A:\mc{V}(\field{R})\rightarrow\cat{CStar}_{inc}$ be an additive net. Then $A$ is strongly local if and only if for every pair of intervals $(I,J)$ covering an interval $I\cup J$, the geometric morphism $r_*\vdash r^*$ extends to a quadruple of adjunctions
\begin{diagram}[LaTeXeqno]\label{diag:quad}
[C(A(I))\times_{C(A(I\cap J))}C(A(J)),\cat{Set}]	&
\pile{\lTo^{Lan_r} \\ \bot \\ \rInto^{r^*} \\ \bot \\ \lTo^{r_*} \\ \bot \\ \rInto_{\text{Ran}_\vee}}
& [C(A(I\cup J)),\cat{Set}].
\end{diagram}
in which both $r^*$ and $Ran_\vee$ are full and faithful.
\end{corollary}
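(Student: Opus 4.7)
The plan is to lift the adjunction of posets produced by Proposition \ref{prop:line} to the claimed chain of adjunctions between the associated presheaf toposes via standard Kan-extension identities, then transfer fully faithful conditions back and forth between the poset and topos levels.

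For the forward direction, Proposition \ref{prop:line} produces, under strong locality, a left adjoint $\vee$ to the restriction functor $r$ which is moreover fully faithful (the unit $\text{id} \to r \circ \vee$ being an isomorphism). Given the adjunction $\vee \dashv r$ of posets, the pointwise formulas for Kan extensions yield natural isomorphisms $r^* \cong \text{Lan}_\vee$ and $r_* = \text{Ran}_r \cong \vee^*$, where $\vee^*$ denotes precomposition along $\vee$. Splicing the three always-present adjunctions $\text{Lan}_r \dashv r^*$, $r^* \dashv r_*$, and $\vee^* \dashv \text{Ran}_\vee$ through the isomorphism $r_* \cong \vee^*$ produces the desired quadruple
$$\text{Lan}_r \dashv r^* \dashv r_* \dashv \text{Ran}_\vee.$$

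Fully faithfulness of $r^*$ will come from computing the unit of $r^* \dashv r_*$ pointwise: at a presheaf $F$ on $C(A(I)) \times_{C(A(I \cap J))} C(A(J))$ it is the map $F \to F \circ r \circ \vee$ induced by the unit $\text{id} \to r \circ \vee$ of $\vee \dashv r$, which is an isomorphism precisely by strong locality. Fully faithfulness of $\text{Ran}_\vee$ is then a standard consequence of $\vee$ being fully faithful: the pointwise limit formula gives $(\text{Ran}_\vee F) \circ \vee \cong F$, so the counit $\vee^* \text{Ran}_\vee \to \text{id}$ is an isomorphism.

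For the converse, assume the quadruple exists with $r^*$ and $\text{Ran}_\vee$ fully faithful. The existence of a further right adjoint to $r_*$ makes the geometric morphism essential in both directions, and by Lemma \ref{cor:pshfunc} such essential geometric morphisms between Alexandrov toposes are induced by monotone maps of the underlying posets; the extra right adjoint $\text{Ran}_\vee$ must therefore arise from a poset functor $\vee: C(A(I)) \times_{C(A(I \cap J))} C(A(J)) \to C(A(I \cup J))$, which by uniqueness of adjoints is left adjoint to $r$. Transferring fully faithfulness back from $\text{Ran}_\vee$ to $\vee$ via the pointwise limit formula and invoking Proposition \ref{prop:line} then gives strong locality of $A$. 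The most delicate step is this converse direction, where Lemma \ref{cor:pshfunc} is needed to recover the poset-level functor $\vee$ from the topos-level adjoint data rather than producing it by hand.
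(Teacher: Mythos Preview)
Your argument is essentially the paper's own: lift the poset-level adjunction of Proposition~\ref{prop:line} to the copresheaf categories via the Kan-extension identities (the paper packages these as Lemma~\ref{lem:adj}), deduce the fully-faithfulness statements (the paper cites Lemmas~\ref{lem:faith} and~\ref{lem:ff}), and for the converse invoke Lemma~\ref{cor:pshfunc} to recover a poset map $\vee$ from the essential geometric morphism with inverse image $r_*$.

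One step in your converse is too compressed. After obtaining $r_*=\vee^*$ you write that $\vee$ ``by uniqueness of adjoints is left adjoint to $r$.'' Uniqueness of adjoints only gives $r^*\cong\mathrm{Lan}_\vee$ at the topos level; it does not by itself yield $\vee\dashv r$ at the poset level. The paper closes this gap by restricting $\mathrm{Lan}_r\dashv r^*=\mathrm{Lan}_\vee$ to representables via Lemma~\ref{lem:lanrep}: since $\mathrm{Lan}_\vee y_{(C_1,C_2)}=y_{\vee(C_1,C_2)}$ and $\mathrm{Lan}_r y_D=y_{r(D)}$, the adjunction descends along the Yoneda embedding to $\vee\dashv r$. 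Equivalently, evaluating $r^*=\mathrm{Lan}_\vee$ on a representable $y_{(C_1,C_2)}$ gives $[r(-)\geq(C_1,C_2)]=[\,-\geq\vee(C_1,C_2)\,]$, which is exactly the adjunction bijection. Once you insert this short argument, your proof matches the paper's; your explicit transfer of full faithfulness from $\mathrm{Ran}_\vee$ back to $\vee$ in the converse is in fact a point the paper leaves implicit.
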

\begin{proof}
$A$ is local iff we find from proposition \ref{prop:line} an adjunction
$$
[C(A(I))\times_{C(A(I\cap J))}C(A(J))]^{op}\pile{\lTo^{r^{op}}\\ \bot \\ \rTo_{\vee^{op}}}C(A(I\cup J))^{op}
$$
for every pair of intervals $(I,J)$. Lemma \ref{lem:adj} then states that one finds the quadruple of adjunctions in diagram \ref{diag:quad}, where $r_*=\vee^*$ and the further right adjoint is in fact $\vee_*$.
Lemmas \ref{lem:faith} and \ref{lem:ff} imply that $r^*$ and $Ran_\vee$ are full and faithful.

Conversely, given the adjoint quadruple, corollary \ref{cor:pshfunc} implies that $r_*$ is actually a precomposition functor $\vee^*$ for some functor 
$$\vee: C(A(I))\times_{C(A(I\cap J))}C(A(J))\rightarrow C(A(I\cup J))$$
since it is the inverse image of an essential geometric morphism.
We wish to show that $\vee$ is left adjoint to $r$. To do so, we note that by lemma \ref{lem:lanrep}, both $\text{Lan}_r$ and $\text{Lan}_\vee$ restrict to the observables. Denoting the Yoneda embedding by $Y$, we then have a commuting square
\begin{diagram}
[C(A(I))\times_{C(A(I\cap J))}C(A(J)),\cat{Set}]	&
\pile{\rInto^{\text{Lan}_\vee=r^*} \\ \bot \\ \lTo^{\text{Lan}_r}}	& [C(A(I\cup J)),\cat{Set}]\\
\uInto^Y	&	&	\uInto_Y\\
[C(A(I))\times_{C(A(I\cap J))}C(A(J))]^{op} &\pile{\lTo^{r^{op}}\\ \bot \\ \rTo_{\vee^{op}}} & C(A(I)\vee A(J))^{op}
\end{diagram}
so $\vee$ is indeed left adjoint to $r$. Since we find the adjunction from lemma \ref{prop:line}, we conclude that $A$ is local.
\end{proof}
As a direct corollary, we find that $r$ induces a local geometric morphism.
\begin{corollary}\label{cor:surj}
Let $A:\mc{V}(\field{R})\rightarrow\cat{CStar}_{inc}$ be an additive net. Then $A$ is strongly local iff for every pair of intervals $(I,J)$ covering an interval $I\cup J$, one has a local geometric surjection
\begin{diagram}
[C(A(I\cup J)),\Set ] &	& \pile{\lInto^{r^*}\\ \bot \\ \rTo_{r_*}\\ \bot \\ \lInto}& & [C(A(I)),\Set]\times_{[C(A(I\cap J)),\Set]} [C(A(J)),\cat{Set}]
\end{diagram}
where the pullback is computed in the subcategory $\cat{Top}$ of $\cat{Topos}$.
\end{corollary}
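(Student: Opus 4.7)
The plan is to reduce this corollary to the preceding one by identifying the right-hand topos with a presheaf topos on a pullback of posets, and then reading off the local-surjection statement directly from the quadruple of adjunctions.

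First I would carry out the identification of the pullback topos. By lemma \ref{lem:locale}, each $[C(A(\cdot)),\Set]$ is equivalent to $\cat{Sh}(\text{Up}(C(A(\cdot))))$, so the pullback on the right-hand side of the displayed diagram may equivalently be computed in $\cat{Top}$. The lemma showing that $\text{Up}:\cat{Pos}\to\cat{Top}$ preserves limits then gives
$$
\text{Up}(C(A(I)))\times_{\text{Up}(C(A(I\cap J)))}\text{Up}(C(A(J)))\;\simeq\;\text{Up}\bigl(C(A(I))\times_{C(A(I\cap J))}C(A(J))\bigr),
$$
so the pullback topos corresponds to the presheaf topos $[C(A(I))\times_{C(A(I\cap J))}C(A(J)),\Set]$. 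Naturality of the equivalence of lemma \ref{lem:locale} in the poset argument, together with corollary \ref{cor:pshfunc}, then ensures that the canonical descent morphism is exactly the essential geometric morphism induced by the restriction functor $r: C(A(I\cup J))\to C(A(I))\times_{C(A(I\cap J))}C(A(J))$, $C\mapsto (C\cap A(I),C\cap A(J))$.

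Next I would invoke the previous corollary. Under the identification above, the descent geometric morphism of toposes is precisely the one analysed there: $A$ is strongly local iff it extends to a quadruple $\text{Lan}_r\dashv r^*\dashv r_*\dashv \text{Ran}_\vee$ in which $r^*$ and $\text{Ran}_\vee$ are both fully faithful. The existence of a fully faithful right adjoint $\text{Ran}_\vee$ to the direct image $r_*$ is by definition (cf.\ \cite{joh02}, C3.6) the condition that the geometric morphism is local. Fully faithfulness of the inverse image $r^*$ is in particular faithfulness of $r^*$, which is equivalent to the geometric morphism being a surjection. Conversely, any local geometric morphism whose inverse image is fully faithful (as indicated by the $\lInto$ arrows in the statement) supplies precisely the quadruple with both fully faithful adjoints, and hence, by the previous corollary, strong locality of $A$.

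The only nontrivial point is the first step: the full subcategory $\cat{Top}\subset\cat{Topos}$ is not closed under all limits computed in $\cat{Topos}$, so one genuinely needs both the limit-preservation of $\text{Up}$ and the equivalence of lemma \ref{lem:locale} to recognise the pullback as the presheaf topos on the pullback poset. Once this presentation is in hand, the remainder is a direct translation between adjoint-functor language and the standard topos-theoretic notions of local and surjective geometric morphisms.
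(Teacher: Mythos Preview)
Your proposal is correct and follows essentially the same route as the paper: the paper's proof consists of the single observation that
\[
[C(A(I))\times_{C(A(I\cap J))}C(A(J)),\Set]\simeq [C(A(I)),\Set]\times_{[C(A(I\cap J)),\Set]}[C(A(J)),\Set],
\]
obtained from the limit-preservation of $\text{Up}:\cat{Pos}\to\cat{Top}$, after which the statement is read off from the previous corollary. You spell out more explicitly than the paper does how the quadruple of adjunctions translates into the terms ``local'' and ``surjection'', and you correctly flag that essentiality of the descent morphism (guaranteed once the pullback is recognised as a presheaf topos) supplies the missing leftmost adjoint $\text{Lan}_r$ in the converse direction.
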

\begin{proof}
The statement then follows immediately if one considers the equivalence
$$
[C(A(I))\times_{C(A(I\cap J))}C(A(J)),\cat{Set}]\simeq [C(A(I)),\Set]\times_{[C(A(I\cap J)),\Set]}[C(A(J)),\cat{Set}]
$$
which follows from the fact that $[-,\Set]:\cat{Pos}\rightarrow\cat{Top}$ preserves all limits, according to corollary \ref{cor:pshfunc}.
\end{proof}

Combining this proposition with lemma \ref{lem:restr} at the beginning of section \ref{sec:posdes} implies proposition \ref{prop:main}: we find that an additive net is strongly local precisely when for any spacelike line in $\field{R}^2$ and two intervals $I$ and $J$ on it, one finds that 
\begin{diagram}
[C(A(I)\vee A(J)),\cat{Set}]&	\pile{\lInto^{r^*}\\ \bot \\ \rTo_{r_*}}& [C(A(I)),\Set]\times_{[C(A(I\cap J)),\Set]}[C(A(J)),\cat{Set}].
\end{diagram}
is a local geometric morphism. In particular, the geometric morphism $r_*\vdash r^*$ is a geometric surjection. Finally, we turn to the internal rings.

\subsubsection{Descent morphism for rings}\label{sec:ringdes}
To complete the description of the descent morphism, we consider the descent morphism of the corresponding internal rings.
The topos $[C(A(I))\times_{C(A(I\cap J))}C(A(J)),\cat{Set}]$ is now endowed with the pushout ring
$$
\ul{A}(I)\coprod_{\ul{A}(I\cap J)}\ul{A}(J);\quad (C_1,C_2)\mapsto C_1\otimes_{C_1\cap C_2}C_2
$$
where $C_1\otimes_{C_1\cap C_2}C_2$ is the tensor product over $C_1\cap C_2$, which is indeed the pushout in the category of commutative rings. The descent morphism now gives a ring homomorphism
$$
r^*\left( \ul{A}(I)\coprod_{\ul{A}(I\cap J)}\ul{A}(J) \right)\rTo^\eta \ul{A}(K)
$$
in the topos $[C(A(K)),\Set]$. For each commutative subalgebra $D\subset A(K)$, we find a ring homomorphism
$$
\eta_D: (D\cap A(I))\otimes_{D\cap A(I\cap J)}(D\cap A(J))\rightarrow D=\ul{A}(K)(D).
$$
If we assume that the net is Einstein causal, we then find that $\eta$ is an injection.

\begin{proposition}
For an additive and Einstein causal net $A$, we find that the descent morphism of the ring is monic.
\end{proposition}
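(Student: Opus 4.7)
The plan is to verify that $\eta$ is a monomorphism in the category of internal commutative rings by showing that each component $\eta_D$ is injective as a ring homomorphism, since in a presheaf category monos are detected pointwise and the forgetful functor from rings to sets reflects monos. Fix $D\in C(A(K))$ and set $D_0 = D\cap A(I\cap J)$, $D_1 = D\cap A(I)$ and $D_2 = D\cap A(J)$; the goal then is to show that
$$\eta_D : D_1\otimes_{D_0}D_2 \longrightarrow D,\qquad d_1\otimes d_2\longmapsto d_1\cdot d_2,$$
is injective.

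The first substantive step is to identify the ambient structure by exploiting additivity and Einstein causality simultaneously. Write $L=I\setminus J$, $M=I\cap J$ and $R=J\setminus I$; these three intervals are pairwise spacelike separated, so iterated application of Einstein causality gives an injection $A(L)\otimes A(M)\otimes A(R) \hookrightarrow A(K)$, and additivity forces this to be surjective and hence an isomorphism. The same ingredients applied to the pairs $(L,M)$ and $(M,R)$ yield $A(I)\cong A(L)\otimes A(M)$ and $A(J)\cong A(M)\otimes A(R)$, which together imply a canonical ring isomorphism
$$A(I)\otimes_{A(I\cap J)}A(J)\;\xrightarrow{\;\cong\;}\;A(K).$$

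With this identification in hand, $\eta_D$ factors as
$$D_1\otimes_{D_0}D_2\;\longrightarrow\; A(I)\otimes_{A(I\cap J)}A(J)\;\xrightarrow{\;\cong\;}\;A(K),$$
whose image lies in $D\subset A(K)$. Since the second arrow is an isomorphism and the inclusion $D\hookrightarrow A(K)$ is injective, injectivity of $\eta_D$ is equivalent to injectivity of the first arrow, i.e.\ of the canonical map of commutative rings induced by the inclusions $D_0\hookrightarrow A(I\cap J)$, $D_1\hookrightarrow A(I)$ and $D_2\hookrightarrow A(J)$.

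This last injectivity is the step I expect to be the main obstacle, since in general a tensor product of subalgebras along subalgebras need not embed into the corresponding tensor product of the ambient algebras. The plan for overcoming it is to use that Einstein causality makes $A(I)$ a \emph{free} right $A(I\cap J)$-module with basis $A(L)$, and $A(J)$ a free left $A(I\cap J)$-module with basis $A(R)$; in particular both are faithfully flat over $A(I\cap J)$. Using a chosen basis of $A(L)$ one can put any element of $A(I)\otimes_{A(I\cap J)}A(J)\cong A(L)\otimes A(I\cap J)\otimes A(R)$ into normal form and read off its $A(I\cap J)$-coefficients; the conditions $D_1\cap A(I\cap J) = D_0 = D_2\cap A(I\cap J)$ together with faithful flatness then force any preimage of $0$ to already represent $0$ in $D_1\otimes_{D_0}D_2$. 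The argument is thus a double application of faithfully flat base change, made possible precisely by the tensor-factorisation provided by Einstein causality; strong locality alone would not be enough, which is why the hypothesis is strengthened here.
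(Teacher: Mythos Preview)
Your approach coincides with the paper's: reduce to the three disjoint intervals $L=I\setminus J$, $M=I\cap J$, $R=J\setminus I$, use additivity together with Einstein causality to obtain $A(K)\cong A(L)\otimes A(M)\otimes A(R)$, and then argue that the map $D_1\otimes_{D_0}D_2\to A(K)$ is injective. The paper in fact stops exactly where you flag the difficulty, simply asserting that ``it is obvious that their tensor product over the intersection with $A(I\cap J)$ embeds in $D$''; you are more honest in isolating this as the nontrivial step.

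That said, your proposed resolution does not close the gap. Freeness (hence faithful flatness) of $A(I)$ and $A(J)$ over $A(M)$ is a statement about the \emph{ambient} algebras, whereas what you need to control the map $D_1\otimes_{D_0}D_2\to A(I)\otimes_{A(M)}A(J)$ is flatness of $D_1$ or $D_2$ over $D_0$, or some substitute for it. Writing elements of $A(L)\otimes A(M)\otimes A(R)$ in normal form with respect to bases of $A(L)$ and $A(R)$ does not by itself let you read off relations in $D_1\otimes_{D_0}D_2$, because $D_1$ need not decompose compatibly with that basis and $D_0\hookrightarrow A(M)$ need not be flat. So the invocation of ``faithfully flat base change'' is not doing the work you claim; as written, both your argument and the paper's leave the same lacuna. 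If you want to complete it, you will need either an explicit flatness statement for the commutative sub-C*-algebras involved, or an argument specific to C*-tensor products rather than purely ring-theoretic tensor products.
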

\begin{proof}
Indeed, like in the proof of proposition \ref{prop:line}, we can replace the two intersecting intervals $I$ and $J$ by the three disjoint intervals $I\setminus J$, $I\cap J$ and $J\setminus I$. Then additivity and Einstein causality give that
$$
A(K)=A(I\setminus J)\otimes A(I\cap J) \otimes A(J\setminus I).
$$
In particular, we now have two commutative subrings  $(D\cap A(I))\subset A(I\setminus J)\otimes A(I\cap J)$ and $(D\cap A(J))\subset A(J\setminus I)\otimes A(I\cap J)$ that both restrict to the same ring in $A(I\cap J)$. Then it is obvious that their tensor product over the intersection with $A(I\cap J)$ embeds in $D$.
\end{proof}

We thus conclude that the descent morphism of the Bohrified consists of a geometric surjection and an injective ring homomorphism. This matches the intuition that the ring structure of a ringed topos usually behaves dually to the geometric structure.

\subsection{Application to boundary nets}
The proof of proposition \ref{prop:main} made essential use of restricting a net to a boundary: indeed, we described the locality of a net on $\field{R}^2$ by means of its restrictions to spacelike axes. In \cite{lr04}, the authors develope a general theory of AQFT with boundaries. In particular, they consider (conformal) nets on the upper half $\field{R}\times\field{R}_{>0}$ of Minkowski space. They show how to construct from a net on the boundary $\field{R}\times\{ 0\}$ a bulk net on the upper half, such that its restriction to the boundary returns the original boundary net. By doing so, bulk nets can be constructed that satisfy appropriate boundary conditions.

In particular, given a local net $A$ on the line, they define a \emph{trivial bulk net} $A^+$ by $A^+(O)=A(I)\vee A(J)$, where $I$ and $J$ are the two intervals on the boundary, obtained by considering the two null projections of $O$ on the boundary, as indicated in figure \ref{fig:3}. 
\begin{figure}[!t]
\centering
	\includegraphics[width=0.2 \textwidth,trim=50mm 140mm 50mm 80mm]{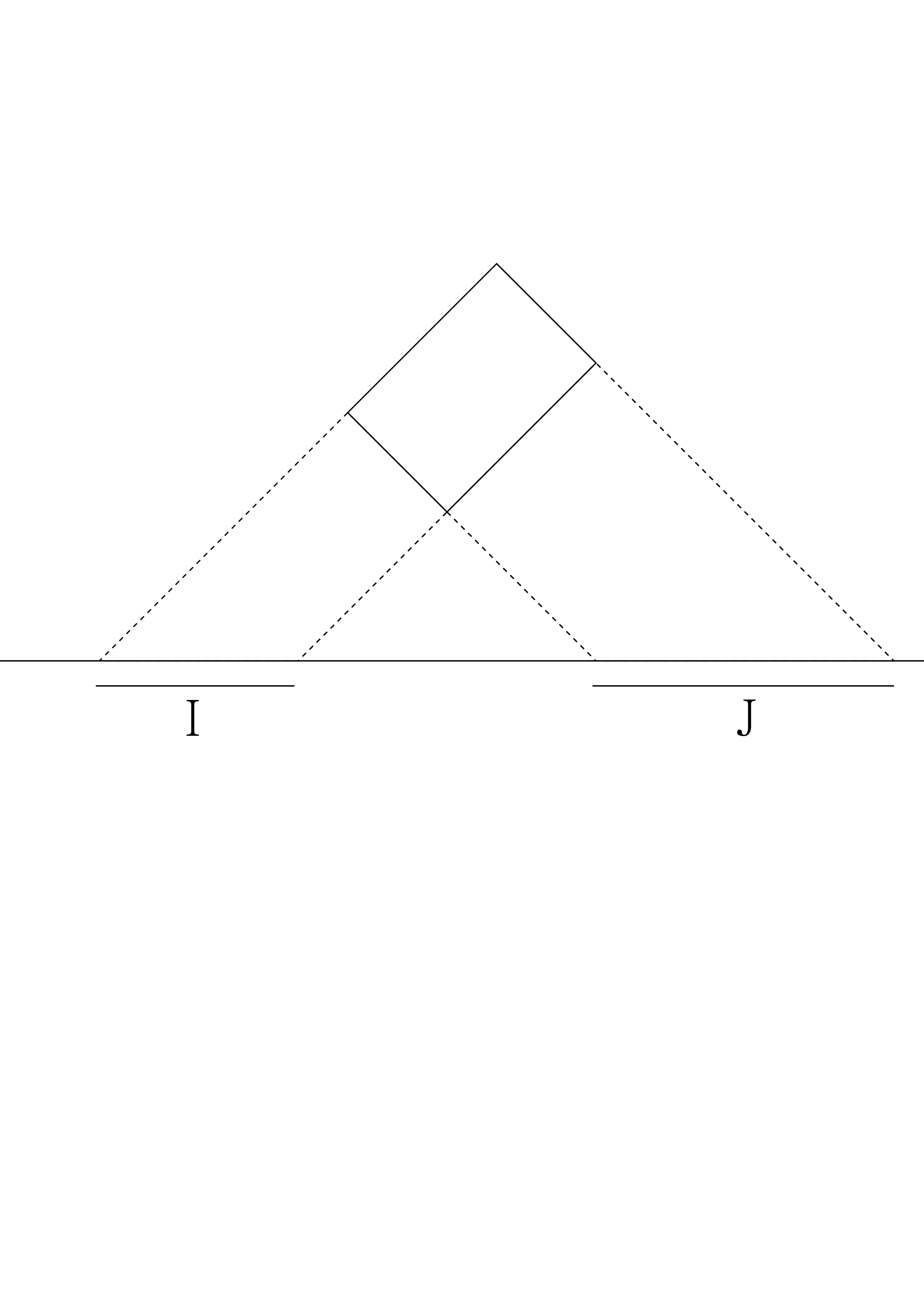}
\caption{\it \label{fig:3} The two null projections of $O$ on the boundary.}
\end{figure}
If $A$ is a local net, this indeed defines a local net on $\field{R}\times\field{R}_{>0}$. Denote the two null projections by $\pi_1$ and $\pi_2$ and let their equalizer be denoted by $\Delta: \field{R}\rInto \field{R}\times\field{R}_{>0}$. One then finds that $I=\pi_1(O)$, $J=\pi_2(O)$ and $I\cap J=\Delta^{-1}(O)$. One finds

\begin{corollary}
A boundary net $A$ is strongly local precisely if the following geometric morphism from the trivial bulk net $A^+$ is local for any open $O$ in the bulk:
\begin{diagram}
[C(A^+(O)),\Set ] &	& \pile{\lInto^{r^*}\\ \bot \\ \rTo_{r_*}\\ \bot \\ \lInto}& & [C(\pi_1^*A(O))\times_{C(\Delta_*A(O))}C(\pi_2^*A(O)),\cat{Set}].
\end{diagram}
\end{corollary}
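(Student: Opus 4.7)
The plan is to derive this boundary corollary as a direct consequence of Corollary~\ref{cor:surj} applied to the boundary net $A$, together with a dictionary translating bulk data on $\field{R}\times\field{R}_{>0}$ into pairs of boundary intervals via the null projections.

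First, I would unpack the definitions. For a bulk open $O$, write $I=\pi_1(O)$ and $J=\pi_2(O)$ for its two null projections, and observe that $I\cap J=\Delta^{-1}(O)$ and that $I\cup J$ is again an interval of the boundary. By the very definition of the trivial bulk net we have $A^+(O)=A(I)\vee A(J)$, and if the boundary net $A$ is additive (the ambient hypothesis of Corollary~\ref{cor:surj}) this agrees with $A(I\cup J)$. Under the identifications $\pi_1^*A(O)\leftrightarrow A(I)$, $\pi_2^*A(O)\leftrightarrow A(J)$ and $\Delta_*A(O)\leftrightarrow A(I\cap J)$, the displayed descent morphism from $[C(A^+(O)),\Set]$ becomes literally the descent morphism of Corollary~\ref{cor:surj} attached to the pair of boundary intervals $(I,J)$.

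Second, I would note that this assignment $O\mapsto(I,J)$ is surjective in both directions: in $\field{R}\times\field{R}_{>0}$ every causal rectangle is determined by and can be reconstructed from its two null projections, and conversely every pair of boundary intervals $(I,J)$ that covers a single interval $I\cup J$ arises from a unique such bulk open. Thus the universal quantifier \textit{``for every bulk open $O$''} on one side matches the universal quantifier \textit{``for every pair of intervals $(I,J)$ covering an interval''} on the other side.

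Putting these two observations together, the condition in the present corollary is tautologically the condition of Corollary~\ref{cor:surj} applied to $A$, and the latter is equivalent to strong locality of the additive boundary net $A$. The main obstacle is purely bookkeeping: one has to verify carefully that the notations $\pi_i^*A$ and $\Delta_*A$ really do restrict to $A(I)$, $A(J)$ and $A(I\cap J)$ pointwise under the light-cone geometry of $\field{R}\times\field{R}_{>0}$, and that $A^+(O)$ coincides with $A(I\cup J)$ under additivity; once this dictionary is pinned down, there is nothing further to prove.
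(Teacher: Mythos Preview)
Your proposal is correct and matches the paper's approach. The paper states this corollary without proof, relying on the dictionary it has just set up in the preceding paragraph (namely $I=\pi_1(O)$, $J=\pi_2(O)$, $I\cap J=\Delta^{-1}(O)$, $A^+(O)=A(I)\vee A(J)$) together with Corollary~\ref{cor:surj}; your write-up simply makes that reduction explicit.
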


\section{Outlook}
We have shown that Bohrification naturally extends from single quantum algebras to nets of algebras. The Bohrification construction shows that there is a strong relation between locality of nets with respect to spacetime on one hand, and locality with respect to the `classical contexts' on the other hand. In particular, we were able to characterize causal locality of a quantum field theory as its Bohrified presheaf of quantum phase spaces satisfying descent by a local geometric surjection, if the phase spaces are considered to be ringed Alexandrov spaces. 

The interplay between locality of nets on observables and locality of the Bohrified net in the sense of sheaf theory, suggests that Bohrification might lead to a sheaf theoretic description of algebraic quantum field theory. This would introduce the actual spacetime geometry in the picture of AQFT, thereby giving it a much more geometric flavour. This will be particularly important if one tries to consider local nets on curved spacetimes, as is done in \cite{bf09}.

The notion of a Bohrified net of ringed toposes described quantum field theory as quantum mechanics parametrized over spacetime. This can be made more explicit by applying the Grothendieck construction to the Bohrified net $B(A)$: one then obtains a single category $\int B(A)$ over the spacetime topos $\cat{Sh}(X)$ which contains all of the information about the QFT. In particular, this description contains all the local information, both with respect to the spacetime and the classical contexts. Since the Bohrified net assigned toposes to opens, the Grothendieck construction will give a category which is similar to the indexed toposes discussed extensively in \cite{joh02}. In should be investigated if this perspective might help to describe for instance the notion of a field. Fields might, for instance, have something to do with sections of the geometric morphism $\int B(A)\rightarrow \cat{Sh}(X)$. 

In any case, there are some points in quantum field theory where the Bohrification construction might add some insight. Our results therefore suggest that the ideas of Isham-D\"oring and Spitters et. al. might have some useful applications quantum field theory.

\appendix
\section{Category theoretic remarks}
In these sections, we collect some basic statements and definitions in category theory that we use in the main text.

\subsection{Adjunctions and Kan extensions}
We summarize some basic properties of adjunctions and the geometric morphisms they induce on presheaf categories. First recall that for any functor $f: \mc{C}\rightarrow\mc{D}$, there is a functor $f^*:[\mc{D}^{op},\Set]\rightarrow[\mc{C}^{op},\Set]$ that sends a presheaf $G: \mc{D}^{op}\rightarrow\Set$ to the presheaf $G\circ f^{op}: \mc{C}^{op}\rightarrow\Set$. A basic result in category theory now states that this functor $f^*$ has both a left and right adjoint, its left and right Kan extension, so that one finds an essential geometric morphism
$$
[\mc{C}^{op},\Set]\pile{\rTo^{{\rm Lan}_f} \\ \bot \\ \lTo^{f^*} \\ \bot \\ \rTo_{f_*={\rm Ran}_f}} [\mc{D}^{op},\Set].
$$
We note that $\text{Lan}_f$ is actually $f$ if it is restricted to the representable presheafs.
\begin{lemma}\label{lem:lanrep}
Let $f: \mc{C}\rightarrow\mc{D}$ be a functor. Then one finds for any representable $y_C=\text{hom}(-, C)$ in $[\mc{C}^{op},\Set]$ that
$$
\text{Lan}_fy_C=y_{f(C)}.
$$
\end{lemma}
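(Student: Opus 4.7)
The cleanest route is to invoke the adjunction $\mathrm{Lan}_f \dashv f^*$ together with the Yoneda lemma, rather than computing the defining colimit directly. The plan is to show that $\mathrm{Lan}_f y_C$ represents the same functor on $[\mc{D}^{op},\Set]$ as $y_{f(C)}$, and conclude by Yoneda in the functor category.

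Concretely, for an arbitrary presheaf $G \in [\mc{D}^{op},\Set]$, I would chain together the following natural isomorphisms:
\begin{align*}
[\mc{D}^{op},\Set]\bigl(\mathrm{Lan}_f y_C,\, G\bigr)
&\;\cong\; [\mc{C}^{op},\Set]\bigl(y_C,\, f^*G\bigr) \\
&\;\cong\; (f^*G)(C) \\
&\;=\; G\bigl(f(C)\bigr) \\
&\;\cong\; [\mc{D}^{op},\Set]\bigl(y_{f(C)},\, G\bigr).
\end{align*}
The first isomorphism is the $\mathrm{Lan}_f \dashv f^*$ adjunction stated in the setup above the lemma; the second is the ordinary Yoneda lemma applied in $[\mc{C}^{op},\Set]$; the third is just the definition $f^*G = G \circ f^{op}$; and the fourth is Yoneda again, this time in $[\mc{D}^{op},\Set]$.

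Each step is natural in $G$, so the composite furnishes a natural isomorphism $[\mc{D}^{op},\Set](\mathrm{Lan}_f y_C, -) \cong [\mc{D}^{op},\Set](y_{f(C)}, -)$ of representable functors on $[\mc{D}^{op},\Set]$. Applying the Yoneda lemma one more time (this time in the functor category $[\mc{D}^{op},\Set]$) yields the desired isomorphism $\mathrm{Lan}_f y_C \cong y_{f(C)}$.

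There is essentially no obstacle; the only thing one might want to double-check is that all four isomorphisms are natural in $G$, but each of them is visibly so (the adjunction natural in both variables, Yoneda natural in its argument, and $f^*$ strictly functorial in $G$). One could alternatively prove this via the colimit formula $(\mathrm{Lan}_f y_C)(D) = \mathrm{colim}_{(C',\alpha\colon f(C')\to D)} \mc{C}(C',C)$ and check that the comma category $(f \downarrow D)$ has, after combining with $\mc{C}(-,C)$, a terminal contribution at the pair $(C,\mathrm{id})$ precisely when $D = f(C)$; but this is strictly more work than the Yoneda argument above.
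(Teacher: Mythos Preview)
Your proof is correct and is essentially identical to the paper's own argument: the paper writes the same chain $\text{Nat}(\text{Lan}_fy_C, F)\simeq \text{Nat}(y_C, f^*F)\simeq F(f(C))\simeq \text{Nat}(y_{f(C)},F)$ and concludes by Yoneda. You have merely spelled out each step and the naturality check in more detail.
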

\begin{proof}
Indeed, we find that
$$
\text{Nat}(\text{Lan}_fy_C, F)\simeq \text{Nat}(y_C, f^*F)\simeq F(f(C))\simeq \text{Nat}(y_{f(C)},F)
$$
so we conclude that $\text{Lan}_fy_C=y_{f(C)}.$
\end{proof}

In the case of an adjunction $r\vdash l$, it turns out that the geometric morphism induced by $l$ has a direct image $l_*={\rm Ran}_l$ that corresponds to the inverse image $r^*$ of the geometric morphism induced by $r$. 

\begin{lemma}\label{lem:adj}
Given an adjunction $\mc{D}\pile{\lTo^l \\ \bot \\ \rTo_r}\mc{C}$ one finds for the geometric morphisms $(l^*\dashv l_*):[\mc{C}^{op},\Set]\rightarrow [\mc{D}^{op},\Set]$ and $(r^*\dashv r_*):[\mc{D}^{op},\Set]\rightarrow [\mc{C}^{op},\Set]$ that $r^*=l_*$. In particular, one finds an adjoint triple
\begin{diagram}
[\mc{C}^{op},\Set]	& \pile{\lTo^{l^*} \\ \bot \\ \rTo^{r^*} \\ \bot \\ \lTo_{r_*}}	& [\mc{D}^{op},\Set].
\end{diagram}
\end{lemma}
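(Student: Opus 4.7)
The goal is to show that the functors $r^*, l_* : [\mc{C}^{op},\Set] \to [\mc{D}^{op},\Set]$ agree; the adjoint triple then falls out immediately, since stringing $l^* \dashv l_* = r^* \dashv r_*$ together yields exactly the claimed diagram.

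My strategy is to invoke uniqueness of adjoints. By definition $l_* = \mathrm{Ran}_l$ is right adjoint to the precomposition functor $l^*$, so it suffices to exhibit $r^*$ as a right adjoint of $l^*$. For this I would build the unit and counit of a prospective adjunction $l^* \dashv r^*$ directly from the given unit $\eta : \mathrm{id}_{\mc{C}} \to rl$ and counit $\varepsilon : lr \to \mathrm{id}_{\mc{D}}$ of $l \dashv r$. Because presheaves are contravariant on the underlying categories, the roles of $\eta$ and $\varepsilon$ swap: for $G \in [\mc{D}^{op},\Set]$, the map $\varepsilon_d : lr(d) \to d$ pulls back under $G$ to a map $G(d) \to G(lr(d)) = (r^* l^* G)(d)$, so $\bigl(\tilde\eta_G\bigr)_d := G(\varepsilon_d)$ defines a natural transformation $\tilde\eta_G : G \to r^* l^* G$. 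Dually, for $F \in [\mc{C}^{op},\Set]$, the map $\eta_c : c \to rl(c)$ gives $(\tilde\varepsilon_F)_c := F(\eta_c) : F(rl(c)) = (l^* r^* F)(c) \to F(c)$, i.e.\ $\tilde\varepsilon_F : l^* r^* F \to F$. The triangle identities for $(\tilde\eta,\tilde\varepsilon)$ reduce pointwise to the triangle identities for $(\eta,\varepsilon)$ with the arrows reversed, so $l^* \dashv r^*$ and therefore $r^* \simeq l_*$ by uniqueness of the right adjoint.

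An equivalent and perhaps slicker route is to use the end formula for the right Kan extension: for any presheaf $F \in [\mc{C}^{op},\Set]$,
\[
(l_* F)(d) \;=\; \int_{c}\, F(c)^{\mc{D}(l(c),d)} \;\cong\; \int_{c}\, F(c)^{\mc{C}(c,r(d))} \;\cong\; F(r(d)) \;=\; (r^* F)(d),
\]
where the middle step uses the adjunction bijection $\mc{D}(l(c),d) \cong \mc{C}(c,r(d))$ natural in $c$ and $d$, and the last step is the end-theoretic Yoneda lemma. Naturality in $d$ and in $F$ is immediate from the naturality of all the isomorphisms used, so this upgrades to a natural isomorphism of functors $l_* \simeq r^*$.

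There is not really a hard step: both arguments are essentially bookkeeping. The only thing that deserves care is the variance flip caused by working with presheaves rather than covariant functors; in particular one must verify that $\tilde\eta$ and $\tilde\varepsilon$ really have the correct direction. Once this is done, the adjoint triple $l^* \dashv r^* \dashv r_*$ is obtained simply by composing the newly established adjunction $l^* \dashv r^*$ with the standard adjunction $r^* \dashv r_* = \mathrm{Ran}_r$ coming from Kan extension along $r$.
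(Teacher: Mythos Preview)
Your proposal is correct and follows essentially the same strategy as the paper: invoke uniqueness of adjoints and establish $l^*\dashv r^*$ from the given adjunction $l\dashv r$. The only difference is in the bookkeeping used to verify $l^*\dashv r^*$: the paper computes the pointwise colimit formula for $\mathrm{Lan}_r$ and identifies it with $l^*$, whereas you either build the unit/counit directly from $\eta,\varepsilon$ or compute the dual end formula for $\mathrm{Ran}_l$ and identify it with $r^*$; all three are routine variants of the same argument.
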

\begin{proof}
By uniqueness of adjoints, it suffices to show that $l^*\dashv r^*$. The left adjoint of $r^*$ is the left Kan extension along $r$, which can be computed pointwise since the codomain of all presheaves is sufficiently nice. For a functor $F$ in $[\mc{D}^{op},\Set]$ one then finds
$$
(\text{Lan}_r F)(C)=\text{colim}_{C\rightarrow r(D)}F(D)=\text{colim}_{l(C)\rightarrow D}F(D)=(l^*F)(C)
$$
where the second identity follows from the adjunction $l\dashv r$. We conclude that $l^*$ agrees with the left adjoint of $r^*$ on the objects of $[D^{op},\Set]$. One can show that they are also the same on arrows.
\end{proof}

\begin{lemma}\label{lem:faith}
Let $f:\mc{C}\rightarrow\mc{D}$ be a full and faithful functor. Then the left Kan extension ${\rm Lan}_f:[\mc{C}^{op},\Set]\rightarrow [\mc{D}^{op},\Set]$ is full and faithful as well.
\end{lemma}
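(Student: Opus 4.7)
The plan is to exploit the well-known categorical fact that for any adjunction $L \dashv R$, the left adjoint $L$ is fully faithful if and only if the unit $\eta: \mathrm{id} \Rightarrow RL$ is a natural isomorphism. Applied to the adjunction $\mathrm{Lan}_f \dashv f^{*}$ furnished by left Kan extension, it therefore suffices to show that the unit
\[
\eta_F : F \longrightarrow f^{*}\,\mathrm{Lan}_f\, F
\]
is an isomorphism for every presheaf $F \in [\mc{C}^{op},\Set]$, using only the hypothesis that $f$ is fully faithful.

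To verify this I would use the pointwise coend formula
\[
(\mathrm{Lan}_f F)(D) \;=\; \int^{C \in \mc{C}} F(C) \times \mathrm{hom}_{\mc{D}}(D, f(C)),
\]
available because $\Set$ is cocomplete. Evaluating at an object of the form $D = f(C')$ with $C' \in \mc{C}$ yields
\[
(f^{*}\,\mathrm{Lan}_f F)(C') \;=\; \int^{C} F(C) \times \mathrm{hom}_{\mc{D}}(f(C'), f(C)).
\]
The key use of fully-faithfulness is now the natural bijection $\mathrm{hom}_{\mc{D}}(f(C'), f(C)) \cong \mathrm{hom}_{\mc{C}}(C', C)$, which turns the coend into $\int^{C} F(C) \times \mathrm{hom}_{\mc{C}}(C', C)$. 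By the co-Yoneda lemma this coend is canonically isomorphic to $F(C')$, and a direct check shows that the resulting isomorphism is inverse to $\eta_F$ at $C'$. Naturality in $C'$ and in $F$ is automatic from the construction.

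An equivalent route, which I would mention as a cross-check, is to compute the unit via the comma-category colimit formula $(\mathrm{Lan}_f F)(D) = \colim_{(C,\; D \to f(C))} F(C)$ and observe that when $D = f(C')$ the comma category $(f(C') \downarrow f)$ has the object $(C', \mathrm{id}_{f(C')})$ as an initial object, precisely because fully-faithfulness lets one lift any morphism $f(C') \to f(C)$ uniquely to $C' \to C$ in $\mc{C}$. The colimit then collapses to $F(C')$ and the unit is the canonical inclusion of that component, which is thereby an isomorphism.

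There is no serious obstacle here: the proof is a routine application of the unit-criterion for fully-faithful left adjoints combined with the Yoneda lemma. The only delicate point is book-keeping with the opposite categories inherent in the presheaf convention, and making sure that the isomorphism $\mathrm{hom}_{\mc{D}}(f(C'), f(C)) \cong \mathrm{hom}_{\mc{C}}(C', C)$ is invoked coherently with respect to the dinatural structure of the coend. Once that is done, fully-faithfulness of $\mathrm{Lan}_f$ is immediate from the equivalence between "unit is iso" and "left adjoint is fully faithful".
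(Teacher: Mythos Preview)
Your proposal is correct and follows essentially the same approach as the paper: both invoke the unit criterion for a fully faithful left adjoint and then compute $f^{*}\,\mathrm{Lan}_f F$ pointwise to see that the unit is an isomorphism. The paper's argument is precisely your ``cross-check'' via the comma-category colimit formula and the observation that $(C',\mathrm{id}_{f(C')})$ is initial in $f(C')\downarrow f$; your coend/co-Yoneda computation is just a repackaging of the same calculation.
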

\begin{proof}
Note that for any adjunction $r\vdash l$, one has that the left adjoint $l$ is full and faithful precisely when the unit ${\rm id}\rightarrow rl$ is an isomorphism. In this case, we have an adjunction $f^*\vdash {\rm Lan}_f$, so we have to show that $f^* {\rm Lan}_f$ is isomorphic to the identity.

As in the previous lemma, we compute the left Kan extension pointwise. One then finds
$$
(f^* \text{Lan}_f F)(C)=\text{colim}_{f(C)\rightarrow f(D)}F(D)=F(C)
$$
where the latter equality follows from the fact that the comma category $f(C)/f$ has the object $(C,{\rm id}: f(C)\rightarrow f(C))$ as an initial object. We therefore see that $f^* {\rm Lan}_f$ is isomorphic to the identity.
\end{proof}

\begin{lemma}\label{lem:ff}
For any adjoint triple
\begin{diagram}
\mc{C}	& \pile{\rTo^f \\ \bot \\ \lTo^g \\ \bot \\ \rTo_h} & \mc{D}
\end{diagram}
one has that $f$ is full and faithful if and only if $h$ is.
\end{lemma}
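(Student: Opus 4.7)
The plan is to reduce to the standard characterization that in any adjunction $L \dashv R$ the left adjoint $L$ is fully faithful iff the unit $1 \to RL$ is a natural isomorphism, and dually the right adjoint $R$ is fully faithful iff the counit $LR \to 1$ is (this is the same reduction already used in the proof of lemma \ref{lem:faith}). Applied to the two adjunctions $f \dashv g$ and $g \dashv h$, the statement becomes: the unit $\eta : 1_\mc{C} \to gf$ of $f \dashv g$ is iso iff the counit $\epsilon : gh \to 1_\mc{C}$ of $g \dashv h$ is iso. I will write $\tilde{\varepsilon} : fg \to 1_\mc{D}$ for the counit of the other adjunction $f \dashv g$, which is what will appear in the triangle identity.

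Assume $\eta$ is an iso. For $X, Y \in \mc{C}$ I would assemble the chain of natural bijections
\[
\text{Hom}_\mc{C}(Y, ghX) \;\cong\; \text{Hom}_\mc{D}(fY, hX) \;\cong\; \text{Hom}_\mc{C}(gfY, X) \;\cong\; \text{Hom}_\mc{C}(Y, X),
\]
using respectively $f \dashv g$, $g \dashv h$, and precomposition with the iso $\eta_Y$. The composite is natural in $Y$, so by the Yoneda lemma it is postcomposition with a unique morphism $\phi_X : ghX \to X$, automatically natural in $X$.

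The crux of the argument, and what I expect to be the main technical obstacle, is identifying $\phi_X$ with $\epsilon_X$. I would verify this by chasing $\text{id}_{ghX}$ through the three bijections: the first bijection sends $\text{id}_{ghX}$ to $\tilde{\varepsilon}_{hX} : fghX \to hX$; the second sends this to $\epsilon_X \circ g\tilde{\varepsilon}_{hX}$; and the third precomposes with $\eta_{ghX}$. A single application of the triangle identity $g\tilde{\varepsilon}_{hX} \circ \eta_{ghX} = \text{id}_{ghX}$ for the adjunction $f \dashv g$ collapses the last two factors to the identity, so $\phi_X = \epsilon_X$ and $\epsilon_X$ is indeed an isomorphism.

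The converse implication is exactly dual: assuming $\epsilon$ is iso, the analogous chain
\[
\text{Hom}_\mc{C}(gfY, X) \;\cong\; \text{Hom}_\mc{D}(fY, hX) \;\cong\; \text{Hom}_\mc{C}(Y, ghX) \;\cong\; \text{Hom}_\mc{C}(Y, X),
\]
together with a chase of $\text{id}_{gfY}$ that invokes instead the triangle identity for $g \dashv h$, exhibits $\eta_Y$ as an isomorphism. The only real care needed throughout is keeping the two units and the two counits of the two different adjunctions cleanly distinguished when the triangle identities are invoked.
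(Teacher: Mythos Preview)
Your proof is correct. The chase of $\text{id}_{ghX}$ through the three bijections is accurate, and the single triangle identity $g\tilde{\varepsilon}\circ\eta g = \text{id}_g$ is exactly what collapses the composite to $\epsilon_X$; the dual direction goes through symmetrically.

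The paper takes a slightly different route: it observes that composing the two adjunctions yields an adjunction between the endofunctors $gf$ and $gh$ of $\mc{C}$, and then invokes uniqueness of adjoints to conclude that if $gf\simeq\text{id}$ then also $gh\simeq\text{id}$, after which it argues that the relevant unit/counit of the second adjunction is an isomorphism. Your argument is really the same composite adjunction unpacked at the level of hom-sets, but you replace the appeal to ``uniqueness of adjoints'' by an explicit Yoneda identification together with a concrete triangle-identity check. The paper's version is terser and more conceptual; yours is more self-contained and makes the role of the triangle identities transparent, which is arguably cleaner given that the paper's write-up is a bit loose about which unit/counit is in play.
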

\begin{proof}
Let ${\rm id}\rTo^{\eta_1} gf$ and ${\rm id}\rTo^{\eta_2} hg$ be the units of the two adjunctions.
Again, note that the left adjoint $f$ is full and faithful precisely when $\eta_1$ is an isomorphism. Now, composing the two adjunctions gives an adjunction $gh\vdash gf$. From the uniqueness of adjoints it now follows that when $gf\simeq \text{id}$, one also finds that $gh\simeq \text{id}$ and that the adjunction unit
$$
\text{id} \rTo^{\eta_1} gf \rTo{\eta_2 gf} ghgf
$$
is an isomorphism. But since $\eta_1$ is an isomorphism, this means that $\eta_2 gf$ is an isomorphism. Finally, since $gf\simeq \text{id}$, we find that the adjunction unit $\eta_2$ is an isomorphism, so $h$ is full and faithful as well.
\end{proof}

\subsection{Locales}\label{sec:loc}
A locale (\cite{joh02, mml92}) is a category theoretic generalization of a topological space. To motivate the definition, note that the topological information of a space $X$ is captured in its poset $\mc{O}(X)$ of opens, ordered by inclusion. The axioms for a topological space state precisely that $\mc{O}(X)$ has finite limits (intersections and a top element) and infinite colimits (unions and a bottom element). It is also clear that for any collection of opens $(U_i)_i$ and an open $V$, one finds that $V\cap(\bigcup_i U_i)=\bigcup_i(U_i\cap V)$. Thus, the topology of a space constitutes a highly structured kind of poset called a frame.

\begin{definition}
Let $P$ be a poset with infinite coproducts and finite products. For any two objects $a$ and $b$ in $P$, we denote their product (meet) by $a\wedge b$ and for any set $\{a_i\}_{i\in I}$ of objects we denote their coproduct (join) by $\bigvee_{i\in I}a_i$.

Then $P$ is called a frame if for all $a\in P$, the functor $-\wedge a: P\rightarrow P$ preserves all coproducts.
\end{definition}

A continuous map $f: X\rightarrow Y$ is charactrized by the fact that the inverse image restricts to a functor $f^{-1}:\mc{O}(Y)\rightarrow \mc{O}(X)$. It is easily seen that this functor preserves all finite products and all coproducts in $\mc{O}(Y)$.

\begin{definition}
A frame homomorphism is a functor between frames that preserves all finite products and all coproducts. Together with the frames, the frame homomorphisms give a category $\cat{Frm}$ of frames.

One defines the category of locales to be $\cat{Loc}=\cat{Frm}^{op}$. The objects of $\cat{Loc}$ are called locales. For a locale $L$, we denote the corresponding frame in $\cat{Frm}$ by $\mc{O}(L)$.
\end{definition}

The previous discussion makes clear that $\cat{Top}$ forms a subcategory of $\cat{Loc}$. However, not every frame can be realized as a frame of opens of some space, which is sometimes stated as a locale being a topological space with not enough points. The following proposition formalizes this.
\begin{proposition}
The inclusion of the category $\cat{Top}$ of topological spaces in $\cat{Loc}$ has a right adjoint
\begin{diagram}
\cat{Top}	& \pile{\lTo^{\;\text{Pt}\;}\\ \top \\ \rTo_{\iota}} &	 \cat{Loc}.
\end{diagram}
\end{proposition}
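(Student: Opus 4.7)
The plan is to construct $\text{Pt}$ explicitly and then exhibit a natural bijection $\cat{Top}(X, \text{Pt}(L))\cong \cat{Loc}(\iota(X), L)$. First I would define a point of a locale $L$ to be a locale map $* \to L$, equivalently a frame homomorphism $p^\sharp: \mc{O}(L)\to \mc{O}(*)=\{0,1\}$. The underlying set of $\text{Pt}(L)$ is the collection of such frame maps, and I would topologize it by declaring, for each $U\in \mc{O}(L)$, the subset
\[
\tilde{U} := \{\, p\in \text{Pt}(L) \mid p^\sharp(U)=1\,\}
\]
to be open. Since frame maps preserve finite meets and arbitrary joins, the assignment $U\mapsto \tilde{U}$ preserves finite intersections and arbitrary unions, so the $\tilde{U}$ really do form a topology. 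Functoriality on morphisms is then automatic: a locale map $f: L\to M$ (a frame map $f^\sharp: \mc{O}(M)\to \mc{O}(L)$) sends a point $p$ to $p\circ f$, and continuity of the resulting map $\text{Pt}(L)\to \text{Pt}(M)$ is immediate because the preimage of $\tilde{V}\subset \text{Pt}(M)$ is $\widetilde{f^\sharp(V)}$.

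Next I would exhibit the adjunction via its unit and counit. The unit $\eta_X : X\to \text{Pt}(\iota(X))$ sends $x\in X$ to the frame map $\chi_x: \mc{O}(X)\to \{0,1\}$ with $\chi_x(U)=1$ iff $x\in U$; this is continuous because $\eta_X^{-1}(\tilde{U})=U$. The counit $\varepsilon_L: \iota(\text{Pt}(L))\to L$ is the locale map whose frame component is precisely $U\mapsto \tilde{U}$, which by the remark above is a frame map. Equivalently, I would just directly construct the natural bijection: given a continuous map $g: X\to \text{Pt}(L)$, define the frame map $\mc{O}(L)\to \mc{O}(X)$ by $U\mapsto g^{-1}(\tilde{U})$; conversely, given a frame map $h: \mc{O}(L)\to \mc{O}(X)$, send $x\in X$ to the composite $\chi_x\circ h$. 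A short check shows these assignments are mutually inverse and natural in both $X$ and $L$, yielding the adjunction $\iota \dashv \text{Pt}$.

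The main subtlety, and the only nontrivial verification, is checking that the map $U\mapsto \tilde{U}$ really preserves arbitrary joins and finite meets in order to see that $\tilde{U}$ defines a topology and that $\varepsilon_L^\sharp$ is a frame homomorphism. This reduces to the tautology that $p^\sharp(\bigvee_i U_i) = \bigvee_i p^\sharp(U_i)$ and $p^\sharp(U\wedge V) = p^\sharp(U)\wedge p^\sharp(V)$ evaluated pointwise in $\{0,1\}$, using that each point $p$ is by definition a frame homomorphism. Once this is in hand, naturality of the bijection is routine and the triangle identities follow from unwinding the definitions, completing the proof.
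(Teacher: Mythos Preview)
Your proposal is correct and follows the standard approach: define $\text{Pt}(L)$ as the set of frame homomorphisms $\mc{O}(L)\to\{0,1\}$ with the topology generated by the $\tilde U$, and then verify the adjunction either via unit/counit or the hom-set bijection. The paper does not actually prove this proposition---it simply cites Mac Lane--Moerdijk \cite{mml92}, section IX.3, and sketches the same construction of $\text{Pt}$ that you give; your argument is exactly the one found there, just spelled out in more detail.
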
 
A proof is given in section IX.3 of \cite{mml92}. The left adjoint assigns to each locale its space of points: for a topological space, a point would be a continuous map $*\rightarrow X$, so a frame map $\mc{O}(X)\rightarrow\{0,1\}=\Omega$. Thus, generalizing, we define $\text{Pt}(L)=\cat{Frm}(\mc{O}(L),\Omega)$, topologized by taking for each $V\in\mc{O}(L)$ the subset $\{p\in\text{Pt}(L)\; |\; p^{-1}(V)=1 \}$ to be an open. One can extend this to a functor and check that it is indeed left adjoint to the inclusion.

A frame naturally forms a site if we say that a family $(U_i\rightarrow U)_i$ covers $U$ if $\bigvee_i U_i=U$. Moreover, a frame homomorphism then constitutes a morphism of sites, so that we have a functor $\text{Sh}:\cat{Loc}\rightarrow\cat{Topos}$ to the category of (Grothendieck) toposes with geometric morphisms as arrows. We mention the following result:
\begin{proposition}\label{prop:locrefl}
The category $\cat{Loc}$ of locales is a reflective subcategory of $\cat{Topos}$, i.e. one has an adjunction
\begin{diagram}
\cat{Loc}	& \pile{\lTo^{\text{Sub}(1)}\\ \bot \\ \rTo_{\text{Sh}}}	& \cat{Topos}.
\end{diagram}
\end{proposition}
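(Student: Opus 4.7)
The plan is to verify the adjunction $\text{Sub}(1)\dashv \text{Sh}$ directly by producing a natural bijection
\[
\cat{Loc}\bigl(\text{Sub}_{\mc{E}}(1),L\bigr)\;\cong\;\cat{Topos}\bigl(\mc{E},\text{Sh}(L)\bigr),
\]
and then checking that the counit is an isomorphism so that $\cat{Loc}\hookrightarrow\cat{Topos}$ is a reflective embedding.

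First I would check that $\text{Sub}(1)$ is a well-defined functor $\cat{Topos}\to\cat{Loc}$. In any topos $\mc{E}$ the poset $\text{Sub}_{\mc{E}}(1)$ of subobjects of the terminal is a frame: it has all small joins (computed as unions of subobjects, i.e.\ images of coproducts into $1$) and finite meets (given by pullback), and the frame distributive law holds because pullback along any morphism preserves colimits in a topos. Given a geometric morphism $f_*\dashv f^*:\mc{E}\to\mc{F}$, the inverse image $f^*$ preserves $1$, finite limits, and all colimits, hence restricts to a frame map $\text{Sub}_{\mc{F}}(1)\to\text{Sub}_{\mc{E}}(1)$; this defines a locale map in the other direction, giving the functor $\text{Sub}(1)$.

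Next I would verify the key computation that for any locale $L$ the subterminals of $\text{Sh}(L)$ recover $\mc{O}(L)$, i.e.\ $\text{Sub}_{\text{Sh}(L)}(1)\cong\mc{O}(L)$ as frames. This follows because a subsheaf of the terminal sheaf $U\mapsto \{*\}$ over the site $\mc{O}(L)$ is determined by the downward-closed set of opens on which it is inhabited, and the sheaf condition forces this downward-closed set to be of the form $\{V\mid V\leq U\}$ for a unique $U\in\mc{O}(L)$. This is precisely the statement that the counit of the proposed adjunction is an isomorphism, so $\cat{Loc}$ embeds fully faithfully into $\cat{Topos}$ via $\text{Sh}$.

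It then remains to establish the bijection between geometric morphisms $\mc{E}\to\text{Sh}(L)$ and frame maps $\mc{O}(L)\to\text{Sub}_{\mc{E}}(1)$. In one direction, given $f^*\dashv f_*$ one restricts $f^*$ to subterminals, using that $\text{Sub}_{\text{Sh}(L)}(1)\cong\mc{O}(L)$ and that $f^*$ preserves finite limits and all colimits, to obtain the required frame map. In the converse direction, a frame map $\phi:\mc{O}(L)\to\text{Sub}_{\mc{E}}(1)$ is in particular a flat, cover-preserving functor $\mc{O}(L)\to\mc{E}$ (flatness is automatic because $\mc{O}(L)$ has finite meets which $\phi$ preserves, and covers in the canonical topology on $\mc{O}(L)$ are joins, which $\phi$ also preserves). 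By Diaconescu's theorem such a functor extends uniquely (up to isomorphism) to a geometric morphism $\mc{E}\to\text{Sh}(L)$ whose inverse image is the left Kan extension along the Yoneda embedding followed by sheafification. I expect the main technical obstacle to be verifying that this correspondence is natural and truly inverse to the restriction-to-subterminals map; the cleanest way is to appeal to Diaconescu's theorem and to the fact that, since $\mc{O}(L)$ is a meet-semilattice with top element that is itself a site of subterminals, flat cover-preserving functors out of it are the same data as frame maps into $\text{Sub}_{\mc{E}}(1)$. The reader is referred to Section IX.5 of \cite{mml92} and Chapter C1 of \cite{joh02} for full details.
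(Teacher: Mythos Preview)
Your proof plan is correct and follows the standard route via Diaconescu's theorem. Note, however, that the paper does not actually prove this proposition: it merely states it and follows with a two-sentence remark that in any Grothendieck topos the subobjects of $1$ form a frame, and that for $\cat{Sh}(L)$ these subobjects recover $\mc{O}(L)$. Your plan is thus a full elaboration of what the paper leaves as a well-known fact with references to \cite{mml92} and \cite{joh02}; the two key observations the paper does make (that $\mathrm{Sub}(1)$ is a frame and that the counit is an isomorphism) are exactly the first two steps of your argument.
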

Indeed, one has for any Grothendieck topos that the subobjects of $1$ constitute a frame. Moreover, for the sheaves on a locale $L$, one finds that the subobjects of $1$ return precisely the elements of $\mc{O}(L)$.

Thus, toposes form a generalization of locales, which in turn generalize the notion of a topological space.

\subsection{Internal locales}\label{sec:inloc}
The theory of frames can be internalized to any topos: any frame is a (co)complete Heyting algebra, since we can define
$$
(a\Rightarrow b) = \bigvee_{a\wedge c\leq b}c.
$$
Thus, a frame is a Heyting algebra which is also a poset with infinite coproducts. Since a Heyting algebra structure is equationally defined (by maps $\wedge,\Rightarrow: L\times L\rightarrow L$ and $\top, \bot: 1\rightarrow L$), one can easily interpret it in any category with finite limits. For the second condition, we need the definition of a cocomplete poset (\cite{joh02}):
\begin{definition}
A poset in a topos $\mc{E}$ is an internal category $\field{P}$ such that the domain and codomain maps constitute a monic
$$
P_1\rInto^{(d_0,d_1)}P_0\times P_0
$$
and that one finds a pullback
\begin{diagram}
P_0			&	\rTo^{e}		& P_1\\
\dTo^{e}	&					& \dTo_{(d_0,d_1)}\\
P_1			& \rTo^{(d_1,d_0)}	& P_0\times P_0
\end{diagram}
where $e$ sends to each object the corresponding identity morphism.
\end{definition}
We see that this precisely states that there can be at most one arrow from one object to another and that arrows $a\rightarrow b$ and $b\rightarrow a$ imply both arrows to be identities. To state completeness, we need the notion of a `diagram in $\field{P}$'.

\begin{definition}
Let $I$ be an object in $\mc{E}$ and $\field{D}$ an internal category. Denote by $\mc{D}^I$ the (ordinary category) in which
\begin{itemize}
\item objects are arrows $I\rightarrow D_0$ in $\mc{E}$ and
\item an arrow from $I\rTo^f D_0$ to $I\rTo^g D_0$ is an arrow $I\rTo^{h}D_1$ so that $d_0\circ h = f$ and $d_1\circ h=g$.
\end{itemize}
The identity morphisms and composition are induced in a similar way from the structure of $\field{D}$. 
\end{definition}
We note furthermore that any morphism $J\rTo^{x}I$ induces a functor $\mc{D}^I\rTo^{x^*}\mc{D}^{J}$ by precomposition with $x$, so that we find a functor $\mc{E}^{op}\rTo \cat{Cat}$ (also called an $\mc{E}$-indexed category).

Thus, an object of $\mc{D}^I$ is a bunch of objects in the internal category $\field{D}$, indexed by the object $I$. In $\Set$, this is just a set of objects (and no arrows) in a category $\mc{D}$. In $\Set$, a category $\mc{D}$ has all coproducts of type $I$ if the functor $\Delta: \mc{D}\rightarrow [I,\mc{D}]$ has a left adjoint, so that $[I,\mc{D}](F,\Delta_C)\simeq\mc{D}(\colim F,C)$. Stated differently, the functor $t^*:[1,\mc{D}]\rightarrow[I,\mc{D}]$ induced by precomposition with the functor $1\rightarrow I$ should have a left adjoint. This description of a colimit can be formulated internally.

\begin{definition}\label{def:cocomp}
We say an internal poset $\field{P}$ is cocomplete (i.e. has all joins) if for any arrow $x: I\rightarrow J$ the corresponding pullback functor $x^*:\mc{P}^J\rightarrow \mc{P}^I$ has a left adjoint $\vee_x$, such that for any pullback diagram
\begin{diagram}\label{diag:BC}
H		& \rTo^a	& I			&	& \mc{P}^H			& \lTo^{a^*}& \mc{P}^I		\\
\dTo^b	&			& \dTo^c	&	& \dTo^{\vee_b}		&			& \dTo^{\vee_c}	\\
J		& \rTo^d	& K			&	& \mc{P}^L			& \lTo^{d^*}& \mc{P}^K
\end{diagram}
the second diagram commutes. 

Moreover, a functor $\field{P}\rTo\field{L}$ is said to preserve colimits iff the induced functor $F_*:\mc{P}^I\rightarrow\mc{L}^I$ commutes with taking colimits, in the sense that for any $x:I\rightarrow J$ we have that
\begin{diagram}
\mc{P}^J		&	\rTo^{F_*}	& \mc{L}^{J}\\
\uTo^{\vee_x}	&				& \uTo_{\vee_x}\\
\mc{P}^I		&	\rTo_{F_*}	& \mc{L}^I
\end{diagram}
commutes.
\end{definition}
\noindent One can define completeness in precisely the same way by replacing the left adjoint by a right adjoint. The inclusion of adjoints for any pullback functor $x^*$, not just the ones from $\mc{P}^1$, arises to guarantee that coproducts are `consistently computed'. The condition that the diagram \ref{diag:BC} should commute (called the Beck-Chevalley condition) is a mere technical condition which can be easily seen to hold in $\Set$.

Recall that the equational definition of a Heyting algebra $L$ via $\wedge$ and $\Rightarrow$ induces a poset-structure on $L$; indeed, one defines the category $\field{L}$ by taking $L_0=L$ and $L_1$ the equalizer
$$
L_1\rInto L_0\times L_0 \pile{\rTo^{\pi_1} \\ \rTo_{\wedge}} L_0
$$
(expressing $a\leq b$ iff $a\wedge b=a$). Thus, a frame is a Heyting algebra which, as a poset, is cocomplete. Moreover, the frames form a category $\cat{Frm}(\mc{E})$ with as arrows the (internal) functors preserving the meet and infinite coproducts. With this definition, one can show that frames are preserved by geometric morphisms.

\begin{lemma}\label{prop:geomfr}
The direct image $f_*$ of a geometric morphism $\mc{E}\rTo^f \mc{F}$ induces a functor $\cat{Frm}(\mc{E})\rTo^{f_*}\cat{Frm}(\mc{F})$.
\end{lemma}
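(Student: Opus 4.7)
The plan is to show that each piece of the frame structure on $L$ transfers to $f_*L$, exploiting that $f_*$ preserves all limits (being a right adjoint) together with the adjunction $f^*\dashv f_*$ to move the completeness data back and forth. First, I would check that $f_*L$ inherits an internal poset structure: since $f_*$ preserves finite limits, the mono $L_1\hookrightarrow L_0\times L_0$ goes to a mono $f_*L_1\hookrightarrow f_*L_0\times f_*L_0$, and the defining pullback square of an internal poset is preserved. The meet $\wedge\colon L\times L\to L$ and the top $\top\colon 1\to L$ then transfer via the canonical isomorphisms $f_*(L\times L)\cong f_*L\times f_*L$ and $f_*1\cong 1$; that these define meets and a top element in the internal poset $f_*L$ follows from preservation of all the equalizers involved.

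The heart of the argument is to show that $f_*L$ is cocomplete in the sense of Definition~\ref{def:cocomp}. I would establish a natural poset isomorphism $(f_*L)^I\cong L^{f^*I}$ for each object $I$ of $\mc{F}$: this follows from the adjunction $\mathrm{Hom}_{\mc{F}}(I,f_*L_0)\cong\mathrm{Hom}_{\mc{E}}(f^*I,L_0)$ together with the fact that $f_*L_1$ is the image of $L_1$ under $f_*$, so the subobject encoding the order relation transports as well. Given any $x\colon I\to J$ in $\mc{F}$, applying $f^*$ produces $f^*x\colon f^*I\to f^*J$ in $\mc{E}$, and cocompleteness of $L$ provides a left adjoint $\vee_{f^*x}$ to $(f^*x)^*$. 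Transporting along the isomorphism yields the required left adjoint $\vee_x$ to $x^*\colon (f_*L)^J\to (f_*L)^I$. The Beck-Chevalley condition then comes almost for free: a pullback square in $\mc{F}$ is sent to a pullback in $\mc{E}$ by $f^*$ (which preserves finite limits as the inverse image of a geometric morphism), so the Beck-Chevalley square for $L$ at the image pullback transports to the corresponding square for $f_*L$. Frame distributivity---that $-\wedge a$ preserves joins---transfers by the same argument, since joins in $(f_*L)^I$ are computed via the isomorphism and the meet on $f_*L$ pulls back to the meet on $L$.

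For functoriality, given a frame morphism $h\colon L\to M$ in $\mc{E}$, the map $f_*h\colon f_*L\to f_*M$ preserves finite meets by functoriality of $f_*$ applied to the equational meet structure, and preserves joins because under the natural isomorphism $(f_*L)^I\cong L^{f^*I}$ the postcomposition functor by $f_*h$ corresponds to postcomposition by $h$, and the latter preserves $\vee_{f^*x}$ by hypothesis on $h$. The main obstacle will be pinning down the naturality of $(f_*L)^I\cong L^{f^*I}$ precisely enough to transport not only the adjoints $\vee_x$ but also the Beck-Chevalley squares: one must verify that the pullback functor $x^*$ in $\mc{F}$ and the pullback functor $(f^*x)^*$ in $\mc{E}$ are intertwined by this isomorphism, which amounts to a naturality check for the unit and counit of $f^*\dashv f_*$ against the diagrams in question.
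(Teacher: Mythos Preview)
Your proposal is correct and follows essentially the same route as the paper: both arguments use that $f_*$ preserves limits to transport the equational (poset, meet, top) structure, then invoke the adjunction $f^*\dashv f_*$ to obtain the natural identification $(f_*L)^I\cong L^{f^*I}$ and transport the left adjoints $\vee_{f^*x}$ for cocompleteness, with Beck--Chevalley coming from $f^*$ preserving pullbacks. Your write-up is in fact more careful than the paper's---the paper attributes the Beck--Chevalley step to ``$f_*$ preserves limits'' whereas you correctly isolate that it is $f^*$ preserving pullbacks that carries the pullback square in $\mc{F}$ to one in $\mc{E}$; the only cosmetic difference is that the paper packages meet, top, bottom and distributivity into the Heyting-algebra structure (preserved wholesale by $f_*$), while you handle meet/top and distributivity separately.
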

\begin{proof}
Note that $f_*$ preserves limits, so it preserves Heyting algebras; that is, all defining diagrams containing the meet, Heyting implication and top and bottom elements are preserved.  Moreover, $f_*$ preserves cocomplete posets: indeed, for $\field{P}$ a poset in $\mc{E}$, $f_*\field{P}$ is a poset since this structure is equationally defined. Moreover, any functor
$x^*: f_*\mc{P}^J\rightarrow f_*\mc{P}^I$ corresponds via the adjunction $f^*\dashv f_*$ to $(f^*x)^*:\mc{P}^{f^*J}\rightarrow\mc{P}^{f^*I}$, of which we can take the left adjoint $\vee_{f^*x}$. Since $f_*$ preserves limits, the Beck-Chevalley condition is satisfied as well.

Moreover, $f_*$ preserves frame maps, since the conditions of a frame map are formulated in terms of commuting diagrams.
\end{proof}
It can be shown (cf. \cite{joh02}) that definition \ref{def:cocomp} is in fact equivalent to the following definition:
\begin{definition}
Let $\field{P}$ be an internal poset. Consider the characteristic map $P_0\times P_0\rTo\Omega$ of $P_1\rInto^{(d_0,d_1)}P_0\times P_0$. Then $\field{P}$ is cocomplete iff the transpose $P_0\rightarrow \Omega^{P_0}$ of this characteristic map has a left adjoint.
\end{definition}
In $\Set$ this makes sense: the characteristic map sends a pair $(x,y)$ in a poset $P$ to $1$ if $x\leq y$. The transpose of this map sends $y\in P$ to the subset $\{x\leq y\}\subset P$. Saying that $P$ has all coproducts then means that the desired left adjoint exists.

\begin{lemma}
In $\cat{Loc}(\mc{E})$, the point $*$ (having $\mc{O}(*)=\Omega$) is the terminal object.
\end{lemma}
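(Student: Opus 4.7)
My plan is to translate the statement via the duality $\cat{Loc}(\mc{E}) = \cat{Frm}(\mc{E})^{op}$: showing that $*$ (with $\mc{O}(*) = \Omega$) is terminal in $\cat{Loc}(\mc{E})$ is equivalent to showing that $\Omega$ is initial in $\cat{Frm}(\mc{E})$. So for an arbitrary internal frame $L$ in $\mc{E}$, I will produce a unique frame homomorphism $\phi: \Omega \to \mc{O}(L)$.

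The guiding observation will be that in the internal language of $\mc{E}$, every $p \in \Omega$ equals the join $p = \bigvee_{U_p} \top$, where $U_p \hookrightarrow 1$ is the subobject of $1$ classified by $p$; this is a tautology, since $p$ holds iff there exists some $x \in U_p$. Since any frame map must preserve $\top$ and all joins, this immediately forces $\phi(p) = \bigvee_{U_p} \top_L$, where the right-hand side is computed using the cocompleteness of $L$, as the left adjoint to reindexing along $U_p \to 1$ applied to the constant diagram $\top_L: U_p \to L_0$. This gives uniqueness.

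For existence, I would define $\phi$ by the above formula and verify the frame axioms. Preservation of $\top$ is immediate from $U_\top = 1 \hookrightarrow 1$, giving $\phi(\top) = \bigvee_1 \top_L = \top_L$. Preservation of joins reduces to Beck-Chevalley: for $p = \bigvee_i p_i$ one has $U_p = \bigcup_i U_{p_i}$ (as subobjects of $1$), and iterated colimits in $L$ combine to a single colimit. Preservation of binary meets reduces to frame distributivity in $L$ together with the identity $U_{p \wedge q} = U_p \times_1 U_q$, via $(\bigvee_{U_p} \top_L) \wedge (\bigvee_{U_q} \top_L) = \bigvee_{U_p \times U_q} \top_L = \bigvee_{U_{p \wedge q}} \top_L$.

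The main obstacle will be handling these calculations carefully in the internal language, since the cocompleteness of $L$ is packaged externally via Beck-Chevalley-compatible adjoints $\vee_x$ (as in Definition \ref{def:cocomp}) rather than via direct manipulation of set-indexed joins. However, since the construction is entirely forced by the external set-theoretic case $\Omega = \{0,1\}$, where the unique frame map sends $0 \mapsto \bot_L$ and $1 \mapsto \top_L$, the verification should be essentially bookkeeping, using the standard properties of internal frames collected in the appendix, and poses no new conceptual difficulty.
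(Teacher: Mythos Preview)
Your proposal is correct and takes essentially the same approach as the paper: both exhibit the unique frame map $\Omega \to \mc{O}(L)$ by the formula $p \mapsto \bigvee_{U_p}\top_L$ (the paper writes this as $p \mapsto \bigvee\{\top_F \mid \top_\Omega = p\}$). Your treatment is in fact more thorough than the paper's one-line proof, since you actually outline the uniqueness argument and the verification of the frame axioms via Beck--Chevalley and distributivity, whereas the paper simply asserts the formula.
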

\begin{proof}
For any frame $F$, there is a unique frame homomorphism $\Omega\rightarrow F$. It is given by the map $p \mapsto \bigvee \{\top_{F}| \top_\Omega = p\}$.
\end{proof}

Using lemma \ref{prop:geomfr} we can now prove the following theorem by Johnstone\cite{joh02}:
\begin{proposition}\label{prop:loc}
Let $X$ be a locale (internal to $\cat{Set}$). Then there is an equivalence of categories
$$
\cat{Loc}(\cat{Sh}(X))\simeq \cat{Loc}/X
$$
\end{proposition}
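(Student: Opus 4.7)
The plan is to exhibit quasi-inverse functors in both directions. First, I would construct $F: \cat{Loc}/X \to \cat{Loc}(\cat{Sh}(X))$ as follows: to a locale map $p: Y \to X$, corresponding to a frame homomorphism $p^*: \mc{O}(X) \to \mc{O}(Y)$, assign the presheaf $\widetilde{Y}$ on $\mc{O}(X)$ with $\widetilde{Y}(U) = \mathord{\downarrow}\, p^*(U) = \{V \in \mc{O}(Y) \mid V \leq p^*(U)\}$ and restriction along $U' \subseteq U$ acting by $V \mapsto V \wedge p^*(U')$. Because $p^*$ preserves joins, any matching family $V_i \leq p^*(U_i)$ over a cover $U = \bigvee U_i$ glues uniquely to $V = \bigvee V_i \leq p^*(U)$, so $\widetilde{Y}$ is a sheaf. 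Each section is a frame (a principal downset in $\mc{O}(Y)$), each restriction is a frame map, and so under $\cat{Sh}(X) \simeq \cat{Sh}(\mc{O}(X))$ this defines an internal frame.

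Conversely, I would define $G: \cat{Loc}(\cat{Sh}(X)) \to \cat{Loc}/X$ by applying the direct image $\gamma_*$ of the canonical geometric morphism $\gamma: \cat{Sh}(X) \to \Set$. By Lemma~\ref{prop:geomfr}, $\gamma_*$ carries internal frames to frames in $\Set$; since $\gamma_*\Omega = \mc{O}(X)$ and the terminal internal locale has frame $\Omega$, any internal frame $L$ admits a unique internal frame map $\Omega \to L$, and $\gamma_*$ applied to this yields a frame map $\mc{O}(X) \to \gamma_*L$, i.e., a locale map $\gamma_*L \to X$. The easy half of the equivalence, $GF \simeq \mathrm{id}_{\cat{Loc}/X}$, is then an immediate computation: $\gamma_*\widetilde{Y}$ is $\widetilde{Y}$ evaluated at the top element of $\mc{O}(X)$, which equals $\mathord{\downarrow}\, p^*(\top) = \mathord{\downarrow}\, \top = \mc{O}(Y)$, and the induced frame map $\mc{O}(X) \to \mc{O}(Y)$ is by construction $p^*$.

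The main obstacle will be verifying the other direction, $FG \simeq \mathrm{id}_{\cat{Loc}(\cat{Sh}(X))}$. Given an internal frame $L$ viewed as a sheaf of frames on $\mc{O}(X)$, I would need to show that each $L(U)$ is canonically isomorphic to $\mathord{\downarrow}\, \alpha(U)$ in $L(X)$, where $\alpha: \mc{O}(X) = \gamma_*\Omega \to \gamma_*L = L(X)$ is the frame map induced by $\Omega \to L$. The essential point is that each restriction $\rho_U: L(X) \to L(U)$ is a frame map and, by the internal cocompleteness of $L$, has a left adjoint $\sigma_U: L(U) \to L(X)$; a direct calculation with the adjunction should yield $\sigma_U \rho_U(a) = a \wedge \alpha(U)$ together with $\rho_U \sigma_U = \mathrm{id}$, so that $\sigma_U$ identifies $L(U)$ with $\mathord{\downarrow}\, \alpha(U)$. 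Compatibility with further restrictions assembles these isomorphisms into a natural isomorphism $L \simeq FG(L)$, and extending both $F$ and $G$ to morphisms in the obvious way completes the proof.
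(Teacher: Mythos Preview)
Your proposal is correct and follows essentially the same route as the paper: your functor $F$ is exactly the paper's $\pi_*\Omega_Y$, and your $G$ via $\gamma_*$ applied to the unique internal frame map $\Omega \to L$ agrees with the paper's construction of the bundle map via the left adjoints $\sigma_U$ to restriction (indeed $\alpha(U)=\sigma_U(\top_{L(U)})$). Your outline of the $FG \simeq \mathrm{id}$ verification using $\sigma_U\rho_U(a)=a\wedge\alpha(U)$ and $\rho_U\sigma_U=\mathrm{id}$ is in fact more explicit than the paper, which leaves this step as an ``easy computation''.
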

\begin{proof}
We first construct an internal locale from the bundle. Let $Y\rTo^\pi X$ be a continuous map, then by lemma \ref{prop:geomfr} it induces a functor $\cat{Loc}(\cat{Sh}(Y))\rTo^{\pi_*}\cat{Loc}(\cat{Sh}(X))$. In particular, one finds that $\pi_*(\Omega_Y)$ is a frame, given pointwise by
$$
\pi_*(\Omega_Y)(U)=\{V\in\mc{O}(Y)|V\leq\pi^*U \}.
$$

Moreover, one can extend this map to a functor by sending a commuting triangle
\begin{diagram}[LaTeXeqno]\label{diag:triang}
Y	&				& \rTo^h	&				&	Z	\\
	& \rdTo_{\pi_Y}	&			& \ldTo_{\pi_Z}	&		\\
	&				& X			&				&
\end{diagram}
to a frame homomorphism $\pi_{Z*}(\Omega_Z)\rightarrow \pi_{L*}(\Omega_L)$. Since $\pi_{Z*}$ preserves frame maps, it suffices to canonically define a frame map $\Omega_Z\rTo h_*(\Omega_Y)$. But $\Omega_Z$ is the initial frame, so there is only one such map. It is easily seen that this indeed induces the structure of a functor.
\\

For a functor $\cat{Loc}(\cat{Sh}(X))\rightarrow \cat{Loc}/X$, let $\mc{L}$ be an internal locale. For the total space of the corresponding bundle, we take $L=t_*(\mc{L})=\mc{L}(X)$, where $t_*$ is the direct image of the geometric morphism corresponding to the (unique) continuous map $X\rTo^t 1$. To construct the bundle map, we have to assign to each open in $X$ an open in $\mc{L}(X)$. 

To do so, we note that one has that $\mc{L}(U)=\text{Hom}(U,\mc{L})$ where $U$ is interpreted as a subterminal object. Then the restriction map $\mc{L}(X)\rightarrow \mc{L}(U)$ corresponds to precomposition of a natural transformation $X\rightarrow\mc{L}$ with the inclusion $U\rightarrow X$. By the adjoint functor theorem, the cocompleteness of $\text{Hom}(U,\mc{L})$ (following from the fact that $\mc{L}$ is a frame) implies the existence of a left adjoint to this restriction map. In particular, one finds a map $\sigma_U: \mc{L}(U)\rightarrow\mc{L}(X)=L$. We then take $\pi^{-1}(U)={\rm colim}\;\sigma_U$ as the definition of the map of locales $L\rightarrow X$. Moreover, an internal map $\mc{L}\rightarrow \mc{M}$ of locales induces a map $L\rightarrow M$ on the corresponding total spaces which gives a commuting triangle as in diagram \ref{diag:triang}.

An easy computation shows that these two functors are each others pseudo-inverse.
\end{proof}

A direct consequence of this proposition is that a continuous map $f: X\rightarrow Y$ induces an adjunction 
$$
\cat{Loc}(\cat{Sh}(X))\pile{\lTo^{f^\sharp} \\ \top \\ \rTo_{f_*}} \cat{Loc}(\cat{Sh}(Y)).
$$
Indeed, this adjunction corresponds to the adjunction $(\Sigma_f \dashv f^\sharp)$ obtained from a change of base is $\cat{Loc}$. Moreover, for an internal locale $\mc{L}$, applying the direct image functor $f_*$ gives the locale $f_*\mc{L}$ in $\cat{Sh}(Y)$, which corresponds to the bundle $L\rTo X \rTo^f Y$, which is precisely $\Sigma_f L$ where $\Sigma_f: \cat{Loc}/X \rightarrow \cat{Loc}/Y$ is the left adjoint in the change of base adjunction. In particular, the notation $f_*$ for the left adjoint $\cat{Loc}(\cat{Sh}(X))\rightarrow \cat{Loc}(\cat{Sh}(Y))$ is justified since it is actually the restriction of the direct image functor to the subcategories of locales. We note that $f^\sharp$ is \emph{not} the same as the inverse image functor $f^*$: the latter need not even preserve frames.

\subsection{Sheaves}\label{sec:sh}
The concept of a sheaf has already been mentioned in section \ref{sec:loc} and we will shortly discuss it here, following \cite{mml92}. A sheaf on a topological space $X$ is a way to describe a class of functions on $X$ which have a local character: for instance, being differentiable or continuous is a local property of a function $f$, in the sense that $f$ is differentiable iff its restriction to any open subset of $X$ is differentiable (if $X$ is in fact a $\mc{C}^1$-manifold). To any open subset $U\subset X$, we can assign the set of differentiable functions on $U$, which we will call $\mc{C}^1(U)$. Now, for $U_1\subset U_2$, there is an obvious map $\mc{C}^1(U_2)\rightarrow\mc{C}^1(U_1)$ that sends a function $f$ on $U_2$ to its restriction $f|_{U_1}$, which is still differentiable. Thus, $\mc{C}^1(-)$ defines a presheaf on the frame $\mc{O}(X)$ of opens of $X$. The presheaf $\mc{C}^1(-): \mc{O}(X)^{op}\rightarrow\Set$ now has some special properties that come from the local character of being differentiable:
\begin{itemize}
\item Suppose one has a cover $(U_i)$ of an open $U\subset X$ and consider two differentiable functions $f, g$ on $U$ such that $f|_{U_i}=g|_{U_i}$ for all $i$. Then obviously $f=g$.
\item Conversely, let $(U_i)$ cover $U$ and consistently define on every open $U_i$ a differentiable function $f_i$, so that $f_i|_{U_i\cap U_j}=f_j|_{U_i\cap U_j}$. Then we can glue all these functions together to obtain a differentiable function on $U$.
\end{itemize}

We can express these properties in terms of diagrams: for a cover $(U_i)$ of $U$, we can form the set of `matching families of differentiable functions', in the sense just described. This set is then given by the equalizer
\begin{diagram}
E & \rInto & \prod_i \mc{C}^1(U_i) & \pile{\rTo^p \\ \rTo_q} & \prod_{i,j} \mc{C}^1(U_i\cap U_j)
\end{diagram}
where $p$ maps a family $\{f_i \}$ to $\{ f_i|_{U_i\cap U_j} \}$ and $q$ maps a family $\{f_i \}$ to $\{f_j|_{U_i\cap U_j} \}$ (so for each intersection $U_i\cap U_j$, $p$ and $q$ pick the different functions that restrict to this open). For any $f\in\mc{C}^1(U)$ one has that the tuple $\{f|_{U_i} \}$ has the property that $f|_{U_i}|_{U_i\cap U_j}=f|_{U_i\cap U_j}=f|_{U_j}|_{U_i\cap U_j}$, so the product of the restriction maps $r: \mc{C}^1(U)\rightarrow \prod_i \mc{C}^1(U_i)$ has the property that $pr=qr$. We thus find a morphism $d: \mc{C}^1(U)\rightarrow E$ which is called the \emph{descent morphism}. The first property now states that the descent morphism is injective, while the second property says that the descent morphism is surjective.

We can generalize this picture to any presheaf $F$ by just replacing $\mc{C}^1$ by $F$ in the above diagram. We then call the equalizer $E$ the set of \emph{matching families} of the presheaf $F$. $F$ is said to be a separated presheaf if one finds for any open $U$ and a cover $(U_i)$ of $U$ that the descent morphism is an injection. We say that $F$ is a coseparated presheaf if the descent morphism $F(U)\rightarrow E$ is a surjection. A sheaf is both a separated and a coseparated presheaf:

\begin{definition}
Let $X$ be a topological space and $F\in [\mc{O}(X)^{op},\Set]$. We say that $F$ is a sheaf if for any open $U$ and a cover $(U_i)$ of $U$, one finds an  equalizer diagram
\begin{diagram}
F(U) & \rInto & \prod_i F(U_i) & \pile{\rTo^p \\ \rTo_q} & \prod_{i,j} F(U_i\cap U_j)
\end{diagram}
where $p$ maps a family $\{f_i \}$ to $\{ f_i|_{U_i\cap U_j} \}$ and $q$ maps a family $\{f_i \}$ to $\{f_j|_{U_i\cap U_j} \}$.
\end{definition}

We remark that this definition of a sheaf makes no reference to the particular properties of the category $\Set$. In particular, we can define a sheaf as follows:
\begin{definition}\label{def:sh}
Let $\mc{D}$ be a complete category and $F: \mc{O}(X)^{op}\rightarrow\mc{D}$ a functor. We can construct for any $U$ and a cover $(U_i)$ of $U$ a the object of matching families as the equalizer
$$
E \rInto \prod_i F(U_i) \pile{\rTo^p \\ \rTo_q} \prod_{i,j} F(U_i\cap U_j)
$$
where $p$ and $q$ are defined as above. The descent morphism is then the unique morphism $F(U)\rightarrow E$ that corresponds to all restrictions $F(U)\rightarrow F(U_i)$.

We call $F$ separated if this descent morphism is mono and cosperated if it is epi. $F$ is a sheaf when the descent morphism is an isomorphism.
\end{definition}

In particular, we will be interested in functors $\mc{O}(X)\rightarrow\cat{RingTopos}$ to the category of ringed toposes. According to the definition, we should check whether the category of ringed toposes is in fact complete, to see if a sheaf condition can be formulated for such functors. It turns out that the category $\cat{RingTopos}$ is indeed complete, as follows by the following theorem (see Lurie \cite{lu09} for a general version of this statement).

\begin{lemma}\label{lem:ringtopos}
The category of ringed toposes is complete. If $J\rightarrow\cat{RingTopos}$ is a diagram sending each $j\in J$ to the ringed topos $(\mc{E}_j,R_{\mc{E}_j})$, then its limit $(\mc{E},R_\mc{E})$ is computed as
\begin{itemize}
\item the limiting topos 
\begin{diagram}
\mc{E}:= \lim_{\leftarrow j} \mc{E}_j & \rTo^{\pi_{j*}\vdash \pi_j^*\;\;} & \mc{E}_j
\end{diagram}
of the diagram $J\rightarrow\cat{RingTopos}\rightarrow\cat{Topos}$.
\item with as its internal ring the colimiting ring of the inverse image rings
$$
\pi_j^*R_{\mc{E}_j}\rightarrow\lim_{\rightarrow j}\pi_j^*R_{\mc{E}_j}=:R_{\mc{E}}
$$
\end{itemize} 
\end{lemma}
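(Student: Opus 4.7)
My plan is to follow the standard recipe for limits in a category of structured objects: build the underlying limit in $\cat{Topos}$, and then equip it with structure by taking a colimit in the fiber. First, I would invoke the fact (which can be cited from standard references) that $\cat{Topos}$ admits small limits, giving a limiting topos $\mc{E}$ together with projection geometric morphisms $\pi_j : \mc{E} \to \mc{E}_j$ satisfying $f_\alpha \circ \pi_j \simeq \pi_k$ for every $\alpha : j \to k$ in $J$, where $(f_\alpha, \rho_\alpha)$ is the image of $\alpha$ in $\cat{RingTopos}$ and $\rho_\alpha : f_\alpha^* R_k \to R_j$ is the accompanying internal ring map in $\mc{E}_j$.

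Next I would manufacture the diagram of rings whose colimit is supposed to be $R_\mc{E}$. Because each inverse image $\pi_j^*$ preserves finite limits, it sends ring objects to ring objects, so $\pi_j^* R_{\mc{E}_j}$ is an internal ring in $\mc{E}$. For $\alpha : j \to k$, applying $\pi_j^*$ to $\rho_\alpha$ and using the canonical isomorphism $\pi_j^* f_\alpha^* \simeq \pi_k^*$ furnishes a ring map $\pi_k^* R_k \to \pi_j^* R_j$ in $\cat{Ring}(\mc{E})$. Functoriality in $J^{op}$ is a direct translation of composition in $\cat{RingTopos}$. Define $R_\mc{E}$ as the colimit of this diagram in $\cat{Ring}(\mc{E})$: colimits exist because (commutative) rings form an algebraic theory, so $\cat{Ring}(\mc{E})$ is monadic over $\mc{E}$ and inherits all small colimits from $\mc{E}$. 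The cocone morphisms $\sigma_j : \pi_j^* R_j \to R_\mc{E}$ paired with $\pi_j$ provide a cone $(\mc{E}, R_\mc{E}) \to (\mc{E}_j, R_j)$ in $\cat{RingTopos}$, whose compatibility condition with $\alpha$ is precisely the identity $\sigma_k = \sigma_j \circ \pi_j^* \rho_\alpha$ built into the colimit.

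For the universal property, let $(\mc{F}, S)$ come equipped with a cone $(\varphi_j, \tau_j : \varphi_j^* R_j \to S)$ over the diagram. The cone of underlying geometric morphisms $\varphi_j$ induces, by the universal property of $\mc{E}$ in $\cat{Topos}$, a geometric morphism $g : \mc{F} \to \mc{E}$ with $\pi_j \circ g \simeq \varphi_j$. Transporting the $\tau_j$ along these isomorphisms yields ring maps $g^* \pi_j^* R_j \to S$ in $\mc{F}$ that assemble into a cocone under $(g^* \pi_k^* R_k \to g^* \pi_j^* R_j)$. Since $g^*$ is a left adjoint it preserves the defining colimit, so $g^* R_\mc{E} \simeq \text{colim}_j\, g^* \pi_j^* R_j$ and the cocone factors through a unique ring map $g^* R_\mc{E} \to S$. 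This produces the morphism $(\mc{F}, S) \to (\mc{E}, R_\mc{E})$; its uniqueness reduces to the uniqueness clauses of the topos limit and the ring colimit.

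The real obstacle is not any isolated step but the bookkeeping of pseudo-natural coherence: limits in $\cat{Topos}$ are only limits up to canonical isomorphism of geometric morphisms, so ``uniqueness'' of the induced map must be read up to natural isomorphism, and the various identifications $\pi_j^* f_\alpha^* \simeq \pi_k^*$ must be tracked consistently when forming the diagram of rings. I would handle this by working, as is customary, with the 1-category obtained by identifying isomorphic geometric morphisms, or, alternatively, by citing Lurie's $(\infty,1)$-categorical completeness statement for structured topoi. Either way, the content of the argument is exactly the two-step construction sketched above.
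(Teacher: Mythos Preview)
Your proof is correct and follows essentially the same route as the paper: form the limiting topos $\mc{E}$, pull back the rings along the projections to get a $J^{op}$-shaped diagram in $\cat{Ring}(\mc{E})$, take its colimit, and verify the universal property. The only cosmetic difference is in the last step: where you invoke that $g^*$ is a left adjoint and hence preserves the ring colimit to factor the cocone $g^*\pi_j^*R_j\to S$ through $g^*R_{\mc{E}}$, the paper instead transposes along $g^*\dashv g_*$ to a cocone $\pi_j^*R_j\to g_*S$ in $\mc{E}$, factors through $R_{\mc{E}}$ there, and transposes back---these are two phrasings of the same adjunction fact.
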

\begin{proof}

Suppose one has a diagram $J\rightarrow\cat{RingTopos}$ which is described by objects $(\mc{E}_j, R_j)$ and morphisms consisting of a geometric morphism $\mc{E}_i\rTo^{h_{ij}}\mc{E}_j$ and a ring homomorphism $h_{ij}^*R_{\mc{E}_j}\rTo^{\eta_{ij}}R_{\mc{E}_i}$.
The universal property is now easily checked: a cone $(\mc{F},R_{\mc{F}})\rTo(\mc{E}_j,R_{\mc{E}_j})$ consists of
\begin{itemize}
\item geometric morphisms $\mc{F}\rTo^{f_i}\mc{E}_i$ so that there is an equivalence $h_{ij}f_i\simeq f_j$ for all $i$ and $j$
\item ring homomorphisms $f_i^*R_{\mc{E}_i}\rTo^{\epsilon_i}R_\mc{F}$ so that for all $i$ and $j$ one finds a triangle 
\begin{diagram}
f_j^*R_{\mc{E}_j} & \rTo^\simeq & f_i^*h_{ij}^*R_{\mc{E}_j} & \rTo^{f_i^*\eta_{ij}}& f_i^*R_{\mc{E}_i}\\
& \rdTo(4,2)_{\epsilon_j} & &  & \dTo_{\epsilon_i}\\
& & & & R_\mc{F}
\end{diagram}
that commutes up to isomorphism, so that internally, one finds a cocone of rings.
\end{itemize}
The geometric morphisms $\mc{F}\rTo^{\pi_i}\mc{E}_i$ then factors over the limit in $\cat{Topos}$ and we find a unique geometric morphism
$$
\mc{F}\rTo^{u} \lim_{\leftarrow j} \mc{E}_j =: \mc{E}.
$$
The cocone of rings in $\mc{F}$ then takes the form 
$$
u^*\pi_i^*R_{\mc{E}_i}\rTo^{\epsilon_i}R_\mc{F}$$
where $\pi_i$ is the geometric morphism $\mc{E}\rightarrow \mc{E}_i$. This means that in $\mc{E}$, one finds a cocone $\pi_i^*R_{\mc{E}_i}\rTo u_* R_\mc{F}$ that factors uniquely over the colimit $R_\mc{E} = \colim \pi_i^*R_{\mc{E}_i}$. Transposing back, we then find the unique ring homomorphism $u^*R_\mc{E}\rightarrow R_{\mc{F}}$.

We thus conclude that the limit of $J\rightarrow\cat{RingTopos}$ is indeed as described in the proposition.
\end{proof}

A similar argument shows that the category $RingSp$ of ringed topological spaces is complete. It thus makes sense to talk about sheaves of ringed spaces or sheaves of ringed toposes.

\subsection{Constructive Gelfand Duality}\label{sec:gelf}
The classical Gelfand duality establishes a duality between the compact Hausdorff topological spaces and the commutative (unital) C*-algebras via the isomorphism $C\simeq\mc{C}(\Sigma_C,\field{C})$. One can give a constructive proof of this fact, which can be internalized to any (Grothendieck) topos.

\begin{theorem}[\cite{bm06, cs082}]
In any sheaf topos $\mc{E}$ one has an equivalence
$$
\cat{cCStar}(\mc{E})\simeq \cat{KRegLoc}(\mc{E})^{op}
$$
\end{theorem}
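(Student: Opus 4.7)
The plan is to construct functors in both directions and prove they form an equivalence, doing everything in constructive logic so that the argument internalizes to any sheaf topos $\mc{E}$. Throughout, I would use only geometric constructions---finite limits and arbitrary colimits of frames, plus propositional presentations by generators and relations---which are preserved by the inverse image of any geometric morphism. This uniform preservation is what will let the classical Gelfand argument lift verbatim to $\mc{E}$.

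First, for a commutative C*-algebra $A$ in $\mc{E}$, I would construct the spectrum $\ul{\Sigma}_A$ as a locale presented by generators $D_a$ indexed by self-adjoint $a\in A_{sa}$, with relations encoding the intended classical reading $D_a=\{\phi\mid \phi(a)>0\}$: namely $D_1=\top$, $D_{-a^2}=\bot$, $D_a\wedge D_b=D_{ab}$ when $a,b$ are positive, $D_{a+b}\leq D_a\vee D_b$, and $D_a=\bigvee_{r\in\field{Q}_{>0}}D_{a-r}$. This is a propositional geometric theory whose classifying locale defines $\ul{\Sigma}_A$. I would then verify that the resulting frame is compact and regular, working only with rational indexed suprema; this step carries most of the analytic content and uses in an essential way that $A$ has square roots of positive elements and a well-behaved positive cone.

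For the reverse functor, I would send a compact regular locale $X$ to the set $\mc{C}(X,\field{C})$ of continuous locale maps $X\to\field{C}$, where $\field{C}=\field{R}+i\field{R}$ is built from the locale of Dedekind reals internal to $\mc{E}$. Pointwise operations make this into a commutative unital C*-algebra---the supremum norm exists precisely because $X$ is compact and regular---and functoriality is immediate. The unit of the equivalence is the Gelfand transform $A\to\mc{C}(\ul{\Sigma}_A,\field{C})$ sending a positive $a$ to the map induced on generators by $D_b\mapsto D_{ba}$, while the counit $X\to\ul{\Sigma}_{\mc{C}(X,\field{C})}$ is obtained by interpreting the tautological presentation of $\ul{\Sigma}_{\mc{C}(X,\field{C})}$ in $\mc{O}(X)$ itself.

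The main obstacle will be proving that both unit and counit are isomorphisms without invoking choice or Hahn--Banach. Concretely, one must show that the constructive norm $\|a\|=\inf\{r\in\field{Q}_{>0}\mid D_{a^2-r^2}=\bot\}$ agrees with the supremum norm of the Gelfand transform on $\ul{\Sigma}_A$, by separately analysing the upper and lower Dedekind cuts, and that every continuous map $X\to\field{C}$ on a compact regular $X$ is determined by and reconstructible from its pullback of the generators $D_r\in\mc{O}(\field{R})$ for $r\in\field{Q}$. Once the equivalence is established in $\Set$ by these geometric means, the fact that models of a geometric theory and their morphisms are preserved and reflected by inverse images of geometric morphisms immediately yields the stated equivalence $\cat{cCStar}(\mc{E})\simeq\cat{KRegLoc}(\mc{E})^{op}$ in every sheaf topos $\mc{E}$.
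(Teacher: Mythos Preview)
The paper does not supply its own proof of this theorem: it is stated as a citation to Banaschewski--Mulvey and Coquand--Spitters, followed only by a description of the generators-and-relations presentation of the spectrum and a computational lemma. So there is no proof in the paper to compare against; your outline is essentially a sketch of what those cited references do, and at that level of detail it is broadly correct and matches the construction the paper goes on to describe (the $D_a$ with the listed relations).

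One point in your outline deserves care, however. Your final paragraph suggests establishing the equivalence in $\Set$ ``by these geometric means'' and then transporting it to an arbitrary sheaf topos via preservation of models of a geometric theory under inverse images. That shortcut does not work as stated: a C*-algebra is \emph{not} the model of a geometric theory, because the completeness axiom (and the norm, formulated as a genuine real-valued function rather than a rational-indexed relation) is not geometric, and likewise compactness and regularity of locales are not obviously stable under inverse image functors. The way the cited proofs actually proceed is the one you announce in your first paragraph but then partly abandon in the last: the entire argument is carried out in intuitionistic logic, so that it is valid verbatim in the internal language of any Grothendieck topos. Geometricity of the spectrum \emph{construction} is a useful side benefit (and is exploited elsewhere in the paper), but it is not the mechanism by which the duality itself transfers. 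Keep the two strategies separate: constructive reasoning gives you the theorem in every topos; geometricity tells you that certain specific constructions commute with pullback along geometric morphisms.
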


An internal *-algebra $A$ is a vector space over the complex rationals\footnote{Every Grothendieck topos has a natural numbers object, from which the rationals and the complex rationals can be derived.} $\field{C}_{\field{Q}}=\field{Q}+i\field{Q}$ with a multiplication $\cdot: A\times A\rightarrow A$ and an involution $*:A\rightarrow A$, such that the defining diagrams commute.

Furthermore, we endow $A$ with a norm, which is given by a relation $N\rInto A\times\field{Q}_+$ which has the characteristic properties of the classical norm relation $(a,q)\in N$ iff $||a||\leq q$ and finally, one requires $A$ to be complete (in some suitable sense). In particular, we note that

\begin{proposition}\label{prop:cstar}
Any presheaf $C^{op}\rightarrow\cat{CStar}$ is an internal C*-algebra in the category $[C^{op},\Set]$.
\end{proposition}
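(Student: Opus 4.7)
The plan is to verify each piece of the structure of an internal C*-algebra pointwise and note that the axioms, being formulated in the internal language, can then be checked stage-by-stage via Kripke-Joyal forcing. The key conceptual point is that $[C^{op},\Set]$ has all limits and colimits computed objectwise, so any equationally defined algebraic structure transfers directly from a presheaf of such structures to an internal instance.

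First I would handle the *-algebra structure. Since a *-algebra over $\field{C}_\field{Q}$ is specified by finitely many operations (addition, scalar multiplication by each element of $\field{C}_\field{Q}$, multiplication, involution, unit) satisfying diagrammatic equations, and since each restriction map $A(c)\to A(c')$ of the presheaf is a *-homomorphism (hence respects all these operations), the pointwise operations assemble into natural transformations $A\times A\to A$, $A\to A$, etc., and the defining diagrams commute in $[C^{op},\Set]$ because they commute objectwise. The natural number object, $\field{Q}$, and $\field{C}_\field{Q}$ in a presheaf topos are the constant presheaves with these values, so scalar multiplication is well-defined.

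Next I would construct the norm relation. At each stage $c$ define $N(c):=\{(a,q)\in A(c)\times\field{Q}_+ \mid \|a\|\le q\}$ using the classical norm on the C*-algebra $A(c)$. Because *-homomorphisms of C*-algebras are automatically contractive, the restriction maps preserve this relation, making $N$ a subpresheaf of $A\times\field{Q}_+$. The axioms required of a norm in the constructive setting (upward closure in $q$, compatibility with sums and products, interaction with $*$, and the C*-identity $\|a^*a\|=\|a\|^2$) are geometric assertions whose Kripke-Joyal forcing at a stage $c$ reduces to a classical statement about $A(c)$ and thus holds by virtue of $A(c)$ being a genuine C*-algebra.

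Finally I would address completeness, which is the step I expect to be the main obstacle, since constructive completeness in a topos is usually formulated via Cauchy approximations (or located/regular Cauchy sequences of rationals and corresponding Cauchy families in $A$) rather than via the sequential notion familiar in $\Set$. The plan is to unfold the internal completeness condition via Kripke-Joyal forcing in the presheaf topos: a generalized Cauchy approximation at stage $c$ corresponds, after possibly restricting to a covering family (which in a presheaf topos is trivial), to a Cauchy approximation in the classical C*-algebra $A(c)$, for which a limit exists by ordinary completeness of $A(c)$. The naturality of this construction across restriction maps follows from the contractivity of *-homomorphisms, which sends Cauchy approximations to Cauchy approximations and their limits to limits. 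This reduces internal completeness to external completeness of each $A(c)$, which holds by hypothesis. Putting the three steps together establishes that $A$ is an internal C*-algebra in $[C^{op},\Set]$.
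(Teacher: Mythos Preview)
Your outline is essentially correct and is, in fact, more detailed than what the paper provides: the paper's own ``proof'' consists solely of the citation ``Essentially Thm.~5 in \cite{hls09}'' and gives no argument. What you have sketched is precisely the content of that cited theorem in Heunen--Landsman--Spitters: the $*$-algebra structure transfers because it is equational and limits in $[C^{op},\Set]$ are computed pointwise; the norm relation is a subpresheaf because $*$-homomorphisms are contractive; and completeness is verified by unwinding the Kripke--Joyal semantics, using that in a presheaf topos covers are trivial so an internal Cauchy approximation at stage $c$ is exactly an external one in $A(c)$, with naturality of the limit guaranteed by continuity of the restriction maps. So your approach and the paper's (deferred) approach coincide; you have simply written out what the paper left to the reference.
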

\begin{proof} Essentially Thm. 5 in \cite{hls09}.

\end{proof}

The construction of the spectrum of such an algebra is quite similar to the classical case. Indeed, the opens of the classical spectrum are generated by the basic opens $D_a=\left\{\phi:A\rightarrow\field{C}|\phi(a)\geq 0\right\}$ for any $a\in A_{sa}$ and therefore satisfy
\begin{align*}
D_1 &= \top\\
D_a\wedge D_{-a}&=\bot\\
D_{-b^2}&=\bot\\
D_{a+b}&\leq D_a\vee D_b\\
D_{ab}&=(D_a\wedge D_b)\vee(D_{-a}\wedge D_{-b})
\end{align*}
together with
$$
D_a\leq\bigvee_{r\in\field{Q}^+}D_{a-r}.
$$
Conversely, the topology is in fact defined by these relations. It turns out (cf. \cite{bm06, cs082}) that one can prove constructively that this gives the spectrum $\Sigma$, so that this description holds in any topos. Any commutative C*-algebra $A\in\mc{E}_0$ is then *-isomorphic to the commutative C*-algebra of continuous functions on the spectrum $\mc{C}(\Sigma,\field{C})$. In particular, we note that a point of the spectrum (so a continuous map $*\rightarrow\Sigma$) corresponds precisely to a *-homomorphism $A\simeq\mc{C}(\Sigma,\field{C})\rightarrow \mc{C}(*,\field{C})={\rm pt}(\field{C})$. Thus, as in the classical case, we conclude that
\begin{lemma}\label{lem:point}
The points of the spectrum $\Sigma$ of a commutative C*-algebra $A$ correspond to the *-homomorphisms $A\rightarrow\field{C}$, where $\field{C}$ is the C*-algebra of complex numbers. 
\end{lemma}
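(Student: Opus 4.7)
The plan is to deduce this directly from the constructive Gelfand duality recalled earlier in this subsection, together with the identification of the terminal locale under that duality. First, I would observe that by the duality theorem, we have an equivalence of categories
$$
\cat{cCStar}(\mc{E})\simeq \cat{KRegLoc}(\mc{E})^{op}
$$
sending $A$ to its spectrum $\Sigma=\Sigma_A$, with pseudo-inverse $L\mapsto \mc{C}(L,\field{C})$, and the natural isomorphism $A\simeq \mc{C}(\Sigma,\field{C})$ realising the unit of this equivalence on $A$.

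Next, I would identify the C*-algebra that corresponds to the terminal locale $*$. Since $\mc{C}(*,\field{C})$ is the commutative C*-algebra of continuous $\field{C}$-valued functions on the one-point locale, and this is just the ring of complex numbers $\field{C}$ (i.e.\ pt$(\field{C})$ in the notation used at the end of the section), the equivalence sends $*$ to $\field{C}$. In particular, $\field{C}$ is the terminal object of $\cat{cCStar}(\mc{E})^{op}$ and $*$ is the terminal object of $\cat{KRegLoc}(\mc{E})$, and these terminal objects match up under the equivalence.

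A point of $\Sigma$ is by definition an arrow $*\to\Sigma$ in $\cat{KRegLoc}(\mc{E})$. Applying the equivalence (which is contravariant on the C*-algebra side), such arrows are in natural bijection with arrows $\mc{C}(\Sigma,\field{C})\to\mc{C}(*,\field{C})=\field{C}$ in $\cat{cCStar}(\mc{E})$, i.e.\ with *-homomorphisms $A\to\field{C}$ via the isomorphism $A\simeq\mc{C}(\Sigma,\field{C})$. This gives the desired bijection, and the argument is robustly natural, so it applies inside any sheaf topos $\mc{E}$ in which the constructive Gelfand duality has been set up.

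The only step that requires a moment of care is the identification of $\mc{C}(*,\field{C})$ with $\field{C}$, which relies on knowing that the terminal locale really is the spectrum of the base C*-algebra $\field{C}$ in $\mc{E}$; but this is immediate from the explicit presentation of $\Sigma$ by the generators $D_a$ and their relations, applied to $A=\field{C}$, where all the inequalities collapse and the resulting frame is just $\Omega$. No genuine obstacle is expected.
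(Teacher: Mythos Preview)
Your proposal is correct and follows essentially the same approach as the paper: the paper's argument (given in the sentence immediately preceding the lemma) is precisely that a point $*\to\Sigma$ corresponds under Gelfand duality to a *-homomorphism $A\simeq\mc{C}(\Sigma,\field{C})\to\mc{C}(*,\field{C})=\mathrm{pt}(\field{C})$. Your write-up just makes the identification $\mc{C}(*,\field{C})\simeq\field{C}$ a bit more explicit than the paper does.
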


Finally, we will give the computation of the spectrum, as summarized in \cite{hls09}.
Given a C*-algebra $A$, let $A^+=\{a\in A_{sa}| a\geq 0 \}$ be the subobject consisting of the positive, self-adjoint elements of $A$. Recall that $A_{sa}$ has a natural ordering defined by $a\leq b$ if $b-a = c*c$ for some $c\in A$. One then defines a equivalence relation on $A^+$ by letting $a\approx b$ if there are $m,n\in\field{N}$ so that $a\leq mb$ and $b\leq na$. Then $L_A=A^+/_\approx$ is a lattice under the ordering induced by $\leq$.

One might notice that $L_A$ is a lattice that satisfies the first 4 conditions of the spectrum, if we interpret $D_a$ as the congruence class of $a\vee 0$ (which is always positive). We now have the following result:

\begin{lemma}[\cite{hls09}]
Let $\mc{E}$ be a Grothendieck topos and $A$ an internal C*-algebra. We define a covering relation on $L_A$ by stating that for every $D_a\in L_A$ and $U\in\Omega^{L_A}$ we have that $D_a\lhd U$ if and only if for all $q\in\field{Q}^+$ there is a finite $U_0\subset U$ such that $D_{a-q}\leq\bigvee U_0$.

Then the spectrum of $A$ is given by the frame $\{U\in\Omega^{L_A} | D_a\lhd U\Rightarrow D_a\in U \}$.
\end{lemma}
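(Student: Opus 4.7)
The plan is to recognize this as the formal-topology presentation of the Gelfand spectrum and to verify that the described frame is precisely the one freely generated by the symbols $D_a$ (for $a \in A_{sa}$) subject to the six relations listed just before the lemma. The strategy is to (i) check that $\lhd$ is a legitimate covering relation on the lattice $L_A$, (ii) check that the resulting frame of $\lhd$-saturated subsets satisfies the six defining relations, and (iii) invoke a universal property argument to conclude.

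First, I would verify that $\lhd$ is a covering relation in the sense of formal topology. Reflexivity ($D_a \lhd \{D_a\}$) holds because $D_{a-q} \leq D_a$ for every $q \in \field{Q}^+$, so the singleton $U_0 = \{D_a\}$ suffices. Transitivity (if $D_a \lhd U$ and $D_b \lhd V$ for every $D_b \in U$, then $D_a \lhd V$) follows from the usual archimedean splitting: to cover $D_{a-q}$ from $V$, first cover $D_{a-q/2}$ by a finite $U_0 \subset U$, then cover each element of $U_0$ down to level $q/2$ from $V$, and combine. Meet-stability with respect to the lattice operations in $L_A$ follows from the distributivity of $L_A$ together with monotonicity of $D_{(-)}$.

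Second, I would check that the frame $\mc{F} = \{U \in \Omega^{L_A} \mid D_a \lhd U \Rightarrow D_a \in U\}$, together with the canonical map sending $a \in A_{sa}$ to the $\lhd$-closure of $\{D_a\}$, validates all six defining relations. The first five ($D_1 = \top$, $D_a \wedge D_{-a} = \bot$, $D_{-b^2} = \bot$, $D_{a+b} \leq D_a \vee D_b$, and the product relation) follow directly from the underlying C*-algebraic inequalities descended to $L_A/_\approx$, since they already hold in $L_A$ before imposing the covering. The sixth relation $D_a \leq \bigvee_{r \in \field{Q}^+} D_{a-r}$ is the essence of the construction: it is precisely the statement that $\{D_a\} \lhd \{D_{a-r} \mid r \in \field{Q}^+\}$, which is witnessed by taking $U_0 = \{D_{a-q}\}$ for any requested rational $q > 0$.

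Third, I would verify universality by showing that $\lhd$ is the smallest covering relation on $L_A$ for which the sixth relation holds: any frame homomorphism out of $L_A$ into a frame $F$ which respects all six relations automatically respects $\lhd$, and therefore factors uniquely through $\mc{F}$. This exhibits $\mc{F}$ as the free frame on the generators and relations defining the Gelfand spectrum. The main obstacle is that all of these verifications must be carried out \emph{constructively} and internally to the Grothendieck topos $\mc{E}$, since by Kochen--Specker (Proposition~\ref{prop:ks}) one cannot appeal to classical points of the spectrum: everything has to be phrased purely in terms of the lattice $L_A$, the order, and the rational approximations. This constructive content is exactly what is developed in \cite{bm06,cs082}, and the present lemma is the packaging of their result in the specific form needed for proposition \ref{lem:comp}.
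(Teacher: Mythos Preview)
The paper does not actually prove this lemma: it is stated with a citation to \cite{hls09} and no argument is given, since it is a summary of the constructive Gelfand spectrum computation established in \cite{bm06,cs082,hls09}. Your outline is a faithful sketch of the formal-topology argument those references carry out, namely verifying that $\lhd$ is a covering relation on $L_A$, that the six relations hold in the resulting frame of saturated subsets, and that the construction has the required universal property; so there is nothing to compare against in the present paper beyond noting that your sketch matches the cited sources.
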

\noindent
The spectrum of $A$ can now be computed by interpreting all expressions in the topos in question.

\end{document}